\title{Fare Zone Assignment on Trees}
\author{Martin Hoefer}{RWTH Aachen University, Germany}{mhoefer@cs.rwth-aachen.de}{https://orcid.org/0000-0003-0131-5605}{Supported by DFG Research Unit ADYN (project number 411362735) and grant DFG Ho 3831/9-1 (project number 514505843).}
\author{Lennart Kauther}{RWTH Aachen University, Germany}{kauther@oms.rwth-aachen.de}{https://orcid.org/0000-0002-5791-9195}{}
\author{Philipp Pabst}{RWTH Aachen University, Germany}{pabst@oms.rwth-aachen.de}{https://orcid.org/0009-0002-6417-6876}{Supported by the DFG Research Training Group UnRAVeL (project number 282652900).}
\author{Britta Peis}{RWTH Aachen University, Germany}{peis@oms.rwth-aachen.de}{https://orcid.org/0000-0002-8938-8843}{}
\author{{Khai Van} Tran}{RWTH Aachen University, Germany}{tran@oms.rwth-aachen.de}{https://orcid.org/0000-0002-8277-7010}{}
\authorrunning{M.\,Hoefer, L.\,Kauther, P.\,Pabst, B.\,Peis and K.\,V.\,Tran} %TODO mandatory. First: Use abbreviated first/middle names. Second (only in severe cases): Use first author plus 'et al.'
\keywords{Fare Zone Design, Approximation Algorithms, Network Pricing, Tollbooth Problem}
\newcommand{\tikzdots}{\hspace*{1.75pt}$\dots$}
\definecolor{GreyClass2}{RGB}{189,189,189}
\definecolor{graphcolorpeach}{RGB}{254,232,200}
\definecolor{graphcolorpeach-orange}{RGB}{253,187,132}
\definecolor{graphcolordark-orange}{RGB}{227,74,51}
\definecolor{graphcolorred}{RGB}{215,25,28}
\definecolor{graphcoloralternative-peach-orange}{RGB}{253,174,97}
\definecolor{graphcoloralternative-peach-orange2}{RGB}{244,165,130}
\definecolor{battleshipgrey}{rgb}{0.52, 0.52, 0.51}
\definecolor{rwthblue}{RGB}{0,84,159}
\definecolor{rwthlightblue}{RGB}{142,189,229}
\definecolor{rwthblue_75prcnt}{RGB}{64,127,183}
\definecolor{rwthlightblue}{RGB}{142,189,229}
\definecolor{rwthblue_25prcnt}{RGB}{199,221,242}
\definecolor{rwthgold}{HTML}{F9A02C}
\definecolor{rwthsilver}{HTML}{797A7C}
\definecolor{rwthmagenta}{RGB}{227,0,102}
\definecolor{rwthyellow}{RGB}{255,237,0}
\definecolor{rwthpetrol}{RGB}{0,97,101}
\definecolor{rwthcyan}{RGB}{0,152,161}
\definecolor{rwthgreen}{RGB}{87,171,39}
\colorlet{rwthgreen50prcnt}{rwthgreen!50}
\definecolor{rwthmay}{RGB}{189,205,0}
\colorlet{rwthmaygreen50prcnt}{rwthmay!50}
\definecolor{rwthorange}{RGB}{246,168,0}
\definecolor{rwthred}{RGB}{204,7,30}
\colorlet{rwthred_50prcnt}{rwthred!50}
\definecolor{rwthbordeaux}{RGB}{161,16,53}
\definecolor{rwthviolett}{RGB}{97,33,88}
\definecolor{rwthpurple}{RGB}{122,111,172}
\newcommand{\ie}{i.e.,\ } % in other words
\def\hmath$#1${\texorpdfstring{{\rmfamily\textit{#1}}}{#1}}
\newcommand{\rom}[1]{\mathrm{\uppercase\expandafter{\romannumeral #1\relax}}}
\let\emptyset\varnothing
\let\epsilon\varepsilon
\let\rho\varrho
\let\phi\varphi
\DeclareMathOperator*{\argmax}{arg\,max}
\DeclareMathOperator*{\argmin}{arg\,min}
\renewcommand{\tilde}{\widetilde}
\newcommand{\pathGraph}{P}
\DeclareMathOperator{\OPT}{OPT}
\DeclareMathOperator{\ALG}{ALG}
\DeclareMathOperator{\DP}{DP}
\DeclareMathOperator{\XP}{\textsf{XP}}
\DeclareMathOperator{\FPT}{\textsf{FPT}}
\newcommand{\ub}{u}
\newcommand{\weight}{w}
\newcommand{\sink}{t}
\newcommand{\DPTable}{R}
\newcommand{\cutset}{F}
\newcommand{\clause}{C}
\newcommand{\eStart}{\theta}
\newcommand{\varvertex}{x}
\DeclareMathOperator{\con}{cong}
\newcommand{\rev}[1]{
	\operatorname{rev}_{#1}
}
\newcommand{\blockA}{($A$)}
\newcommand{\blockB}{($B$)}
\newcommand{\blockC}{($C$)}
\newcolumntype{Y}{>{\centering\arraybackslash}X}
\newcolumntype{L}[1]{>{\raggedright\arraybackslash}p{#1}} % linksbündig mit Breitenangabe
\newcolumntype{C}[1]{>{\centering\arraybackslash}p{#1}} % zentriert mit Breitenangabe
\newcolumntype{R}[1]{>{\raggedleft\arraybackslash}p{#1}} % rechtsbündig mit Breitenangabe
\newcommand{\algorithmFont}[1]{\small\ttfamily{#1}}
\newcommand{\algorithmInit}{
	\setstretch{1.05}
	\DontPrintSemicolon%
	\SetNlSty{lineNumberFont}{}{}
	\SetKwProg{Fn}{function}{}{end}%
	%\SetAlgoInsideSkip{medskip}
	\SetArgSty{argFont} 
	\SetKwFunction{KwFn}{function} 
	\SetKwSty{keywordFont}%
	\SetCommentSty{commentFont}%
	\SetKw{break}{break}%
	\SetInd{.3em}{.6em}% default .5 and 1em
}
\let\oldnl\nl% Store \nl in \oldnl
\newcommand{\nonl}{\renewcommand{\nl}{\let\nl\oldnl}}% Remove line number for one line
\renewcommand{\paragraph}[1]{\medskip \noindent \textbf{\textsf{#1}} $\,$}
\begin{document}
	
	\begin{acronym}
	\acro{FZA}[FZA]{\textsc{Fare Zone Assignment}}
	\acro{Max2SAT}[Max2- SAT]{\textsc{Maximum 2-Satisfiability}}
\end{acronym}

	\maketitle
	
	\begin{abstract}
		Designing fare systems for public transportation networks is a challenging task. A popular approach is to partition the network into fare zones (``zoning'') and fix journey prices depending on the number of traversed zones (``pricing'').
		In this paper, we focus on finding revenue-optimal solutions to the zoning problem for a given subadditive pricing function. 
		We consider tree networks with $n$ vertices, since trees already pose non-trivial algorithmic challenges. Our main results are efficient algorithms that yield a simple $\mathcal{O}(\log n)$-approximation as well as a more involved $\mathcal{O}(\log n/\log \log n)$-approxi\-ma\-tion. We show that rooted instances, in which all demand arises at a single source, can be solved exactly. We further show \textsf{APX}-hardness for general instances on star graphs. For paths, we prove strong \textsf{NP}-hardness and outline a PTAS. Moreover, we show that computing an optimal solution is in \textsf{FPT} or \textsf{XP} for several natural problem parameters.
	\end{abstract}
	
	\section{Introduction}
	\label{sec:Introduction}
	Designing optimal fare structures is a complex challenge for public transport operators. While passenger demand comprises diverse streams with varying budgets and willingness to pay, operators strive for simple and transparent pricing schemes. Flat fare systems, where all passengers pay a uniform rate regardless of distance, offer maximum simplicity but disproportionately penalize short trips and inadvertently subsidize longer journeys. Conversely, purely continuous distance-based fares address this unfairness but introduce high transactional complexity. As a practical compromise, operators frequently divide the transport network into discrete fare zones, charging passengers based on the number of traversed zones (as implemented in metropolitan networks like Berlin, London, or Montréal). In this paper, we focus on such zone-based strategies, in particular, on the \emph{connected zones} model.
	Implementing zone-based fares involves two highly interdependent decisions:
	\begin{enumerate}[(i)]
		\item Defining a spatial partitioning of the transport network into fare zones that are internally connected, i.e., each zone forms a connected subgraph (``zoning'').
		\item Specifying a pricing function that maps a journey to a ticket cost based on the traversed zones (``pricing'').
	\end{enumerate}
	In this paper, we restrict ourselves to the zoning subproblem by treating the pricing function as \textit{fixed input}. 
	Specifically, we study a counting-zones model with a given pricing function $f\colon \mathbb{N}_0 \rightarrow \mathbb{R}_{\geq 0}$ that maps the number of zone border crossings to a ticket price. We assume that $f$ is non-decreasing and subadditive (that is, $f(x+y) \leq f(x) + f(y)$ for all $x,y \in \mathbb{N}$). Here, monotonicity reflects the fact that passengers should not pay less for longer journeys, and subadditivity ensures that passengers cannot decrease their fare by splitting their journey into several subpaths. In \cite{Schoebel2025}, these two properties are termed the \emph{no-elongation property} and the \emph{no-stopover property}, respectively.
	
	Operating a public transport system requires balancing social utility with financial viability, as fare revenues must recover a significant portion of operating costs. Revenue-optimal zoning can thus be viewed as a building block towards more balanced designs that satisfy practical constraints of real-world operations. The goal of this paper is to understand the theoretical foundations of revenue-optimal spatial zone partitioning. We restrict our considerations to tree networks as this allows for a more convenient problem definition and, as we show, the zoning problem restricted to trees already reaches the limits of computational tractability. Notably, many public transit systems---particularly commuter rail or metropolitan networks---can be reduced to tree-like structures by consolidating vertices within densely interconnected city centers \cite{Angeloudis2006,Deribble2012,Deribble2010}. 
	Observe that on trees, each journey can be specified by a path, and a partitioning of the network into zones corresponds to a selection of edges which ``cut'' the transit network into connected subtrees (representing the zones).
	
	\paragraph{The model.} Given an integer $\ell$, we abbreviate $[\ell] \coloneqq \{1, \ldots, \ell\}$. We model the transportation network as a tree $T=(V,E)$ with $|V| = n$, and assume that a fixed, non-decreasing, subadditive pricing function $f\colon \mathbb{N}_0 \rightarrow \mathbb{R}_{\geq 0}$ is given. 
	Each type of traveler $i\in [k]$ (from now on called \emph{commodity}) is specified by a path $\pathGraph_i\subseteq E$, a weight $\weight_i\in \mathbb{N}$ representing the number of travelers of this type, and a zone-transition budget $\ub_i\in \mathbb{N}_0$. This budget $\ub_i$ can be derived from the travelers' maximum willingness to pay evaluated against the pricing function $f$; it represents the maximum number of zone border crossings (i.e., edge cuts) along its path $\pathGraph_i$ that travelers of commodity $i$ will tolerate. We say that $i$ \emph{drops out}, e.g.,\ by switching to an alternative mode of transport, if strictly more than $\ub_i$ of the edges on $\pathGraph_i$ are cut. Otherwise, if at most $\ub_i$ edges of $\pathGraph_i$ are cut, we say commodity $i$ is \emph{served}. The objective is to select a set of edge cuts that maximizes the total revenue extracted from all served commodities.
	
	\begin{definition}[\textsc{Fare Zone Assignment (FZA)}]%\ac{FZA}]
		\label{def:FZA_formalDef}
		Given a tree $T=(V,E)$, commodities $(\pathGraph_i, \ub_i, \weight_i)_{i \in [k]}$, and a non-decreasing, subadditive pricing  function $f\colon {\mathbb{N}_0 \rightarrow \mathbb{R}_{\geq 0}}$, the problem \textsc{Fare Zone Assignment (FZA)} asks for a subset $\cutset \subseteq E$ maximizing $\sum_{i \in [k]} \rev{i}(\cutset)$, where
		\[
		\rev{i}(\cutset)= 
		\begin{cases}
			\weight_i \cdot f(|\pathGraph_i \cap \cutset|) & \text{ if } \; |\pathGraph_i \cap \cutset| \leq \ub_i,\\
			0 & \text{ otherwise.}
		\end{cases}
		\]
	\end{definition}
	\acused{FZA}
	Note that since we restrict the problem definition to trees, it suffices to specify a subset of cut edges $F\subseteq E$ to uniquely determine the partition into fare zones. For $M \subseteq [k]$ and $\cutset\subseteq E$ we use the notation $\rev{M}(F) = \sum_{i \in M} \rev{i}(F)$. We explicitly allow distinct commodities $i\neq j$ to share the same path $\pathGraph_i = \pathGraph_j$, however, we then assume $\ub_i \neq \ub_j$. Otherwise, we merge $i$ and $j$ into a single commodity with weight $\weight_i + \weight_j$. Thus, the maximum number of distinct commodities $k$ is bounded by $n^3$. We permit $f(0) > 0$, reflecting base fares charged to customers whose journeys remain entirely within a single zone.
	Finally, we call an instance of \ac{FZA} \textit{rooted} if the paths of all commodities $i \in [k]$ share a common endpoint (root). 
	
	Consider the instance of \ac{FZA} illustrated in \cref{fig:FZA-example} with pricing function $f(x) = x$ and $k=5$ unit weight commodities with capacities $u=(1,1,3,3,5)$. The  optimal cut set (indicated by the curved lines) serves commodities 1 to 4 and obtains a revenue of $8$.
	
	\begin{figure}[tbhp]
		\centering
		\begin{tikzpicture}[
    every node/.style={draw = none},
    vertex/.style={circle, fill=black, inner sep=1.5pt, minimum size=3pt, font=\small, text=white},
    pathstyle/.style={draw, line width=1.8pt, black},
    rwthmagentapathstyle/.style={draw, line width=1.2pt, rwthmagenta},
    rwthlightbluepathstyle/.style={draw, line width=1.2pt, rwthlightblue},
    rwthorangepathstyle/.style={draw, line width=1.2pt, rwthorange},
    rwthbluepathstyle/.style={draw, line width=1.2pt, rwthblue},
    rwthgreypathstyle/.style={draw, line width=1.2pt, rwthsilver},
    rwthbluepathstyle/.style={draw, line width=1.2pt, rwthblue},
    rwthbordeauxpathstyle/.style={draw, line width=1.2pt, rwthbordeaux},
    scale=0.725 % Skalierung für bessere Übersicht
    ]
    
    % 1. Koordinaten und Knoten definieren
    \pgfmathsetmacro{\Ysep}{1} % Vertikaler Abstand
    \pgfmathsetmacro{\Xinner}{2}  % Horizontaler Abstand der inneren Knoten von der Mitte
    
    \pgfmathsetmacro{\singleoffset}{0.225} % Einfacher Offset
    \pgfmathsetmacro{\intermediateoffset}{0.4} % Doppelter Offset
    \pgfmathsetmacro{\doubleoffset}{0.45} % Doppelter Offset
    \pgfmathsetmacro{\tripleoffset}{0.675} % Dreifacher Offset
    \pgfmathsetmacro{\quadrupleoffset}{.9} % NEU: Vierfacher Offset
    \pgfmathsetmacro{\quintupleoffset}{1.125} % NEU: Vierfacher Offset
    
    \node[vertex] (v1) at (-6, \Ysep) {};
    \node[vertex] (v2) at (-4, \Ysep) {};
    \node[vertex] (v3) at (-\Xinner, \Ysep) {}; % Oben Links
    
    \node[vertex] (v8) at (\Xinner, \Ysep) {}; % Oben Rechts
    \node[vertex] (v9) at (4, \Ysep) {};
    \node[vertex] (v10) at (6, \Ysep) {};
    
    % Knoten der unteren Ebene (Y = -1.5)
    \node[vertex] (v4) at (-6, -\Ysep) {};
    \node[vertex] (v5) at (-4, -\Ysep) {};
    \node[vertex] (v6) at (-\Xinner, -\Ysep) {}; % Unten Links
    
    \node[vertex] (v11) at (\Xinner, -\Ysep) {}; % Unten Rechts
    \node[vertex] (v12) at (4, -\Ysep) {};
    \node[vertex] (v13) at (6, -\Ysep) {}; 
    
    % Zentraler Kreuzungsknoten
    \node[vertex] (v7) at (0, 0) {};
    
    % 2. Schwarze horizontale Linien
    \draw[pathstyle] (v1) -- (v2) -- (v3);
    \draw[pathstyle] (v4) -- (v5)-- (v6);
    \draw[pathstyle] (v8) -- (v9)-- (v10);
    \draw[pathstyle] (v11) -- (v12)-- (v13);

    % 3. Gekreuzte schwarze Linien
    \draw[pathstyle] (v3) -- (v7) -- (v11);
    \draw[pathstyle] (v6) -- (v7) -- (v8);
    
    % Commodity 1
    \draw[rwthorangepathstyle]
    %($ (v2) + (0, \singleoffset) $)
    ($ (v3) + (0, \singleoffset) $)
    -- ($ (v7) + (0, \singleoffset) $) 
    -- ($ (v8) + (0, \singleoffset) $)
    -- ($ (v9) + (0, \singleoffset) $);
    \node[above right =.1cm and -.1cm of v9, color = rwthorange, font = \footnotesize]{$\ub_1 = 3$};

    % Commodity 2
    \draw[rwthbluepathstyle]
    ($ (v1) + (0, -\singleoffset) $)
    -- ($ (v2) + (0, -\singleoffset) $)
    -- ($ (v3) + (-0, -\singleoffset) $) 
    -- ($ (v7) + (-.4, 0) $)
    -- ($ (v6) + (0, \singleoffset) $)
    -- ($ (v5) + (0, \singleoffset) $);
    \node[below left =-.1cm and -.1cm of v1, color = rwthblue, font = \footnotesize]{$\ub_2 = 1$};
    
    % Commodity 3
    \draw[rwthlightbluepathstyle]
    ($ (v2) + (0, - \doubleoffset) $)
    -- ($ (v3) + (0, - \doubleoffset) $)
    -- ($ (v7) + (-.8, 0) $);
    \node[below left=.2cm and -.1cm of v2, color = rwthlightblue, font = \footnotesize]{$\ub_3 = 1$};
    
    % Commodity 4
    \draw[rwthmagentapathstyle]
    ($ (v4) + (0, -\singleoffset) $)
    -- ($ (v5) + (0, -\singleoffset) $)
    -- ($ (v6) + (0, -\singleoffset) $)
    -- ($ (v7) + (0, -\singleoffset) $)
    -- ($ (v8) + (0, -\singleoffset) $)
    -- ($ (v9) + (0, -\singleoffset) $)
    -- ($ (v10) + (0, -\singleoffset) $);
    \node[below left =-.1cm and -.1cm of v4, color = rwthmagenta, font = \footnotesize] (label5){$\ub_4 = 3$};
    %\draw[rwthorangepathstyle, thin] (label5) -> ($(v4) + (-.1cm, \doubleoffset) $);
    
    % Commodity 6
    \draw[rwthbluepathstyle]
    ($ (v10) + (0, -\doubleoffset) $)
    -- ($ (v9) + (0, -\doubleoffset) $)
    -- ($ (v8) + (-0, -\doubleoffset) $) 
    -- ($ (v7) + (.625, -.1) $)
    -- ($ (v11) + (0, \singleoffset) $)
    -- ($ (v12) + (0, \singleoffset) $)
    -- ($ (v13) + (0, \singleoffset) $);
    \node[above right =-.1cm and -.1cm of v13, color = rwthblue, font = \footnotesize]{$\ub_5 = 1$};

    \draw[rwthsilver, ultra thick] ($ (v3) + (1, .25) $) coordinate (aux1) to[bend right=-30] ($ (v3) + (.3, -.9) $); 

    \draw[rwthsilver, ultra thick] ($ (v8) + (-1, .25) $) coordinate (aux2) to[bend right=30] ($ (v8) + (-.3, -.9) $); 

    \draw[rwthsilver, ultra thick] ($ (v9) + (-.75, .7) $) coordinate (aux3) to[bend right=30] ($ (v9) + (-.75, -.7) $); 
    
    \draw[rwthsilver, ultra thick] ($ (v4) + (.9, .7) $) coordinate (aux4) to[bend right=-30] ($ (v4) + (.9, -.7) $);

\end{tikzpicture}
		\caption{An instance of \ac{FZA} with five commodities, each with $\weight_i = 1$. The edges cut by the curved lines indicate an optimal cut set for the pricing function $f(x) = x$.}
		\label{fig:FZA-example}
	\end{figure}
	
	\paragraph{Connections to network pricing.} $\ac{FZA}$ is closely related to standard pricing problems, most notably \textsc{Tollbooth} on trees (see, e.g.,~\cite{Guruswami2005}). In \textsc{Tollbooth}, one is given a tree $T = (V,E)$ and a set of single-minded customers $i \in [k]$, each interested in buying a fixed path $P_i$ as long as the cost of $\pathGraph_i$ does not exceed their budget $b_i \geq 0$. The objective is to assign prices $p\colon E \to \mathbb{R}_+$ to edges in order to maximize the total revenue obtained by all customers, that is $\sum_{i \text{ served}} \sum_{e \in \pathGraph_i} p_e$. Here, we say that customer $i$ is served if $\sum_{e\in P_i} p_e \leq b_i$.
	
	\textsc{Tollbooth} and \textsc{FZA} differ in two key aspects: First, \textsc{Tollbooth} allows the network designer to set arbitrary prices on the edges, while in \textsc{FZA} the network designer is restricted to a binary decision for every edge $e$. This difference is quite significant as previous algorithms for \textsc{Tollbooth} heavily exploited the possibility of assigning high prices to single edges \cite{GamzuSegev2017loglog, TurkoByrka2024loglogCactus}. Second, in \textsc{Tollbooth} the revenue obtained from every served customer $i$ is a linear function, whereas \textsc{FZA} considers more general subadditive functions. Although the structural similarities between the problems allow for the transfer of certain algorithmic techniques, the aforementioned differences require substantially new ideas for the analysis.
	
	\subsection{Related Work}
	\label{subsec:RelatedWork}
	As outlined above, FZA is closely related to the \textsc{Tollbooth} problem initially studied by Guruswami et al.~\cite{Guruswami2005}. By assigning a single price to all edges, they show a $\mathcal{O}(\log k + \log m)$-approximation, where $k$ is the number of (unit-weight) commodities and $m$ the number of edges. Optimizing the choice of a single price is known to yield logarithmic approximations for network and item pricing problems in very large generality~\cite{Balcan2008SinglePrice,BohnleinKS21,Briest2008,CardinalMST2007,RochSM05}. Variants with limited supply have also been analyzed, e.g., in the context of posted pricing and assortment problems~\cite{Swamy2008}. However, these variants with limited supply do not fit particularly well with the context of fare design in public transport.
	
	Guruswami et al.~\cite{Guruswami2005} also describe an algorithm for solving rooted instances in polynomial time. Gamzu and Segev~\cite{GamzuSegev2017loglog} combine this algorithm with a clever recursive tree decomposition and buyer classification enabling enumeration in polynomial time, resulting in a $\mathcal{O}(\log m/\log \log m)$-approximation for \textsc{Tollbooth}.
	Recently, the latter algorithm was extended beyond trees to cactus graphs~\cite{TurkoByrka2024loglogCactus}, thereby relaxing single-mindedness (i.e., commodities are allowed to choose a path between their source and target). 
	Extending \ac{FZA} to cyclic graphs (such as cactus graphs) is of theoretical interest; however, edge cuts no longer define valid disjoint zones. If relaxed to merely counting the minimum number of cuts on any path between the endpoints of some commodity, applying the cactus-decomposition from \cite{TurkoByrka2024loglogCactus} is promising. Nonetheless, adapting this algorithm to \ac{FZA} poses non-trivial challenges, largely because it is impossible to accumulate a significant fraction of a commodity's willingness to pay on a single edge.
	A large body of research has focused on the \textsc{Highway} problem, i.e., \textsc{Tollbooth} restricted to a path. \textsc{Highway} is known to be strongly \textsf{NP}-hard~\cite{Elbassioni2012journal} while admitting a PTAS~\cite{GrandoniRothvossPTAS}. On trees, parameterized algorithms have been a popular research direction. Natural parameters are maximum path length $|\pathGraph|_{\max} \coloneqq \max\{|\pathGraph_i| \mid i\in [k]\}$, maximum budget $u_{\max} \coloneqq \max\{u_i\mid i\in [k]\}$, and congestion $\con \coloneqq \max_{e\in E}|\{i\in [k]\mid e\in \pathGraph_i\}|$. 
	Briest and Krysta~\cite{BriestKrystaSparsity2006} show that \textsc{Tollbooth} remains \textsf{APX}-hard even with constant congestion and bounded maximum path length.  
	They complement this result with efficient $\mathcal{O}(\log |\pathGraph|_{\max} + \log \con )$- or $\mathcal{O}(\log^2 |\pathGraph|_{\max})$-approximation algorithms.
	Similar parameters appear in the running times of our exact algorithms for \ac{FZA} on paths (cf. Appendix~\ref{app:parameterized}).
	
	From a practical perspective, fare design in public transport is inherently multifaceted, with many concerns---social equity, stakeholder trade-offs, price elasticity, and user accep\-tance---often outweighing purely algorithmic criteria.
	Studies of optimal public transport fare design generally follow two distinct objective paradigms: minimizing the deviation from reference fares \cite{Babel2003, HamacherSchoebel1995, HamacherSchoebel2004, SchieweSchoebel2025ATMOS, Schoebel2025}, e.g.,\ to ensure passenger fairness, and maximizing operator revenue based on passenger willingness to pay \cite{TariffZonesMueller22, Otto2017}. Recently, these paradigms have also been bridged by bi-objective models combining revenue- and passenger maximization \cite{SchieweSchoebel2024}.
	
	Most similar to our work are the models studied by Otto and Boysen \cite{Otto2017} and Schöbel and Urban \cite{Schoebel2025}. Both of these works consider general graphs and introduce mixed-integer programming (MIP) approaches to jointly optimize zoning and pricing problems for different zone topologies (connected, ring, and arbitrary zones) and pricing schemes (including counting-zones pricing). However, computational studies conducted in these works show that state-of-the-art solvers struggle to compute optimal solutions for small networks with 20 to 30 stations within a reasonable time.
	Regardless of the chosen objective, both papers identify the zoning subproblem as the primary computational bottleneck, proving it to be strongly \textsf{NP}-hard even on highly restrictive topologies like star graphs and trees. We remark that these reductions require that a zoning must be found w.r.t.\ a maximum number of zones which can be chosen in the reduction. 
	We provide alternative constructions (without a bound on the number of zones) proving that \ac{FZA} is \textsf{APX}-hard on star graphs and remains \textsf{NP}-hard even when restricted to paths.
	
	These practical and theoretical intractability results motivate our decision to isolate the zoning problem (similar to \cite{TariffZonesMueller22}) and underscore the necessity of approximation algorithms.
	
	\subsection{Our Contribution}
	\paragraph{Rooted instances.}
	In \cref{sec:singlesource}, we design an efficient algorithm that solves rooted instances of \ac{FZA} to optimality based on dynamic programming. It serves as a building block for the sublogarithmic approximation algorithm described in Section \ref{sec:SublogApprox}. Starting with the leaves of $T$, we iteratively compute the maximum profit that can be achieved when a fixed number of cuts lie on the path from the current vertex to the root. 
	
	\paragraph{Approximation results.} 
	In Sections~\ref{sec:SingleDensity} and \ref{sec:SublogApprox}, we present appro\-xi\-ma\-tion algorithms for \ac{FZA} based on a commonly used framework which we call \emph{Divide and Select approach}. It works by partitioning the set of commodities $[k]$ into $\ell$ classes, approximating an optimal solution restricted to the commodities of each class, and selecting the best result. 
	
	In Section~\ref{sec:SingleDensity}, we introduce the \textit{Single Density} algorithm, a randomized $\mathcal{O}(\log n)$-approx-imation for \ac{FZA} inspired by ``single price'' mechanisms in network and item pricing literature. The Single Density algorithm groups commodities into $\mathcal{O}(\log n)$ classes based on their \textit{density} (budget-to-length ratio). For each class $M_j$, the algorithm evaluates a cut set that exhausts a constant fraction of the upper bound $u_i$ of every commodity $i \in M_j$. We then randomly delete cuts from this solution to ensure that every $i \in M_j$ is served with constant probability.
	
	In Section~\ref{sec:SublogApprox}, we give a $\mathcal{O}(\log n / \log \log n)$-appro\-xi\-ma\-tion algorithm that partitions commodities based on a recursive tree decomposition. It improves the approximation ratio over the Single Density approach asymptotically, but is substantially more involved and also hides larger constants in the $\mathcal{O}$-notation. The algorithm builds upon a result for \textsc{Tollbooth} by Gamzu and Segev~\cite{GamzuSegev2017loglog}, but  necessitates modifications in several key aspects to translate subroutines that rely on assigning arbitrarily high prices to single edges to the edge-cutting problem \ac{FZA}.
	Lastly, in Appendix~\ref{app:ptas} we sketch how the PTAS for \textsc{Highway}~\cite{GrandoniRothvossPTAS} can be adapted to a PTAS for \ac{FZA} when $T$ is a path.
	
	\paragraph{Hardness results.}
	In Appendix~\ref{app:Hardness}, we reduce a variant of \textsc{Max 2-Sat} to \ac{FZA} implying that \ac{FZA} remains strongly \textsf{NP}-hard, even when restricted to paths. This result is tight in the sense that there exists a complementary PTAS (cf. Appendix~\ref{app:ptas}). For star graphs, we show \textsf{APX}-hardness using a similar reduction.
	
	\paragraph{Parameterized algorithms.} 
	Finally, in Appendix~\ref{app:parameterized}, we present exact parameterized algorithms for \ac{FZA} when $T$ is a path. In particular, we show that \ac{FZA} is in \textsf{XP} when parameterized with $\ub_{\max}$ or $\con$, and is in \textsf{FPT} when parameterized with $|\pathGraph|_{\max}$ or $\max\{\ub_{\max}, \con\}$ (see \cref{subsec:RelatedWork} for parameter definitions), all using dynamic programming. 
	Note that the instance which we create to show \textsf{APX}-hardness on star graphs is constant in all these parameters and thus rules out the possibility of similar results beyond paths.
	
	%%%%%%%%%%%%%%%%%%%%%%%%%%%%%%%%%%%%%%%%%%%%%%%%%%%%%%%
	%%%%%%%%%%%%%%%%%%%%% Rooted %%%%%%%%%%%%%%%%%%%%%%%%%%
	%%%%%%%%%%%%%%%%%%%%%%%%%%%%%%%%%%%%%%%%%%%%%%%%%%%%%%%
	
	\section{Warm-Up---Solving Rooted Instances}
	\label{sec:singlesource}
	As a warm-up to familiarize the reader with the problem and our notation, we first describe and analyze a simple dynamic programming algorithm (DP) for solving \ac{FZA} efficiently on rooted instances. It is conceptually similar to the algorithm for solving rooted instances of \textsc{Tollbooth} described in \cite{Guruswami2005}. 
	The $\mathcal{O}(\frac{\log n}{\log \log n})$-approximation algorithm described in Section~\ref{sec:SublogApprox} uses this DP for \ac{FZA}, as well as a variant for \emph{generalized FZA on a path} (see Appendix~\ref{app:GeneralizedRooted}), as subroutines.
	
	\begin{proposition}
		\label{thm:SingleSource}
		\ac{FZA} on rooted instances can be solved in $\mathcal{O}(n^2k)$ time.
	\end{proposition}
	\begin{proof}
		Let $r$ denote the common endpoint (root) shared by all commodities $i \in [k]$, and let $t_i$ denote the other endpoint associated with commodity $i$.
		For each vertex $v\in V$, let $Q_v\subseteq E$ be the (unique) $r$--$v$-path in $T$. Let $T[v]$ be the subtree of $T$ induced by all vertices $w\in V$ with $v\in Q_w$. We call vertex $w$ a \emph{child of $v$} if $Q_w = Q_v\cup\{\{v,w\}\}$. Let $\mathcal{C}(v)$ be the set of $v$'s children.
		For each vertex $v\in V$ and each number $x\in \{0,1, \ldots, |E|\}$, let $R_v(x)$ denote the maximal revenue that can be obtained solely from commodities $i\in [k]$ with
		$v\in \pathGraph_i$, given that we cut exactly $x$ edges of $Q_v$. If $v$ is a leaf,
		\[
		R_v(x)= \sum_{i\in [k]}\bigl\{\weight_i \cdot f(x) \,\big\vert\, \sink_i = v,\ x\le \ub_i \bigr\}.
		\]
		For non-leaf vertices $v\in V,$ we compute $R_v(x)$ in a bottom-up manner. Starting at the leaves, we traverse $T$ towards the root and apply the following recursion.
		\begin{equation}
			\label{eq:RootedDP_Recursion}
			R_v(x)= \sum_{i\in [k]}\bigl\{\weight_i \cdot f(x) \,\big\vert\, \sink_i = v,\ x\le \ub_i \bigr\}
			+\sum_{w\in \mathcal{C}(v)}\max\bigl\{R_w(x),\ R_w(x+1) \bigr\}.
		\end{equation}
		
		In addition, we construct cut sets $\cutset_{v,x}\subseteq E$ associated to the values $R_v(x)$. Each $\cutset_{v,x}$ consists solely of edges from $T[v]$ such that the union of $\cutset_{v,x}$ with any cut set $\cutset\subseteq Q_v$ of size $x$, yields a revenue of $R_v(x)$. Using \cref{eq:RootedDP_Recursion}, we compute $\cutset_{v,x}$ as follows. 
		Partition $\mathcal{C}(v)$ into the sets $\mathcal{C}_1(v)$ and $\mathcal{C}_2(v)\coloneqq\mathcal{C}(v)\setminus{\mathcal{C}_1(v)}$ consisting of all children $w$ for which the maximum in \cref{eq:RootedDP_Recursion} is attained in the first and the second term, respectively.
		Then,
		\[
		\cutset_{v,x}\coloneqq\bigcup_{w\in \mathcal{C}_1(v)} \cutset_{w,x} \ \cup \ \bigcup_{w\in \mathcal{C}_2(v)} \bigl( \cutset_{w,x+1} \cup \{\{v,w\}\}\bigr).
		\]
		
		The procedure terminates upon arrival at the root. $R_r(0)$ corresponds to the optimal objective value and $\cutset^* \coloneqq \cutset_{r,0}$ is a corresponding optimal solution.
	\end{proof}
	
	%%%%%%%%%%%%%%%%%%%%%%%%%%%%%%%%%%%%%%%%%%%%%%%%%%%%%%%%%%%%
	%%%%%%%%%%%%%%%%%%%  Single Density  %%%%%%%%%%%%%%%%%%%%%%%
	%%%%%%%%%%%%%%%%%%%%%%%%%%%%%%%%%%%%%%%%%%%%%%%%%%%%%%%%%%%%
	
	\section{Single Density Approximation Algorithm}
	\label{sec:SingleDensity}
	Our approximation algorithms rely on a common framework which we call \textit{Divide and Select} approach: Given an instance of \ac{FZA}, divide the set of commodities into $\ell$ disjoint classes $M_1, \ldots, M_{\ell}$, and compute a cut set $F_j\subseteq E$ for each $j\in [\ell]$ satisfying
	\begin{equation*}
		\label{eq:divide_and_select1}
		\mathbb{E}[\rev{M_j}(F_j)] \geq \alpha\cdot \rev{M_j}(F^*),
	\end{equation*}
	where $\cutset^*$ is an optimal solution and $\rev{M_j}(\cutset_j) = \sum_{i\in M_j}\rev{i}(\cutset_j)$.
	After computing a $F_j$ for every $j \in [\ell]$, we select some  $\cutset_{\ALG} \in \argmax\left\{\rev{M_j}(\cutset_j) \mid j\in [\ell]\right\}$.
	This directly yields a performance guarantee of $\frac{\alpha}{\ell}$ in expectation.
	
	\paragraph{Description of the Single Density algorithm.}
	Following the Divide and Select approach, the \textit{Single Density algorithm} first partitions the set of commodities $[k]$ into $\ell = \lceil\log_2(n)\rceil + 1$ classes $M_0, \dots, M_{\ell-1}$, depending on their budget-to-length ratio
	$d_i \coloneqq \frac{u_i}{|\pathGraph_i|} \in [0,1]$ for $i\in [k]$. We call $d_i$ the \textit{density} of commodity $i$. More specifically, we construct the sets
	\begin{equation*}
		M_j \coloneqq \bigl\{i \in [k] \ \big\vert \ \ub_i \geq 1 \text{ and } d_i \in (2^{-j}, 2^{1-j}]\bigr\}, 
	\end{equation*}
	for each $j \in \{1, \dots, \lceil\log_2(n)\rceil\}$, and a class $M_0 \coloneqq \{i \in [k] \ \big\vert \ \ub_i =0\}$ covering the remaining commodities with $d_i = 0$.
	Fix $\cutset_0 = \emptyset$. We construct the cut sets $\cutset_j$ for $j\in \{1, \dots, \lceil\log_2(n)\rceil\}$ as follows.
	Designate an arbitrary vertex $r\in V$ as the root of $T$. For an edge $e$ of $T$, we denote by $\text{dist}(r,e)$ the number of edges other than $e$ on the unique path from $e$ to the root $r$.
	We then construct the following candidate solutions for each $j$ and every ${\eStart} \in \{0, \dots, 2^{j+1}-1\}$
	\begin{equation*}
		\cutset_{j, \eStart} \coloneqq \bigl\{e\in E \ \big\vert \ \text{dist}(r,e) \equiv \eStart \mod \ 2^{j+1}\}.
	\end{equation*}
	\cref{fig:SingleDensity} illustrates this definition. Note that directly evaluating $\rev{}(\cutset_{j, \eStart})$ for every ${\eStart}$ and choosing the best is not sufficient. Since $T$ is a tree, $\cutset_{j, \eStart}$ may include two cuts on a commodity $i$ with $\ub_i = 1$, leading to $i$ dropping out. This may happen if path $P_i$ is not entirely contained in a simple path from the root to one of the leaves of $T$. Commodity 3 in \cref{fig:SingleDensity} illustrates this problem.
	To address this, we alter every set $\cutset_{j, \eStart}$ by iterating through all cuts $e \in \cutset_{j,\eStart}$ and removing $e$ with probability $1/2$. We denote the resulting cut sets as $\cutset_{j, \eStart}^{\text{(rand)}}$, choose a revenue-maximizing solution $\cutset_j$ among all the $\cutset_{j, \eStart}^{\text{(rand)}}$, and finally return $\cutset_{\ALG} \in \argmax\left\{\rev{M_j}(\cutset_j) \mid j\in [\ell]\right\}.$
	
	\begin{figure}[tbph]
		\centering
		\begin{tikzpicture}[
        vertex/.style = {fill = black, minimum size = .2cm, shape = circle},
        graphEdge/.style = {draw = black},
        scale = .85
    ]
    \pgfmathsetmacro{\singleoffset}{0.3}
    \pgfmathsetmacro{\doubleoffset}{0.45}
    \pgfmathsetmacro{\tripleoffset}{.6}
    \pgfmathsetmacro{\quadrupleoffset}{.75}
    \pgfmathsetmacro{\quintupleoffset}{.9}
    \pgfmathsetmacro{\sextupleoffset}{1.05}
    \node[vertex] (r) at (0,0) {};
    % Second column
    \node[vertex] (1c) at (1.5,0) {};
    % Third column
    \node[vertex] (2c) at (3,0) {};
    % Fourth column
    \node[vertex] (3t) at (4.5,.8) {};
    \node[vertex] (3b) at (4.5,-.8) {};
    % Fourth column
    \node[vertex] (4t) at (6,.8) {};
    \node[vertex] (4b) at (6,-.8) {};
    % Fifth column
    %\node[vertex] (5t) at (7.5,.8) {};
    \node[vertex] (5bt) at (7.5,-.2) {};
    \node[vertex] (5bb) at (7.5,-1.4) {};
    % Sixth column
    %\node[vertex] (6t) at (9,.8) {};
    \node[vertex] (6bt) at (9,-.2) {};
    \node[vertex] (6bb) at (9,-1.4) {};
    % Seventh column
    %\node[vertex] (7t) at (10.5,.8) {};
    \node[vertex] (7bt) at (10.5,-.2) {};
    \node[vertex] (7bb) at (10.5,-1.4) {};
    % Eight column
    %\node[vertex] (7t) at (10.5,1) {};
    %\node[vertex] (8bt) at (12,-.2) {};
    \node[vertex] (8bb) at (12,-1.4) {};

    %%%
    \begin{pgfonlayer}{bg}    % select the background layer
        \path[graphEdge] (r) -- (1c) -- (2c) -- (3t) -- (4t);
        \path[graphEdge] (2c) -- (3b) -- (4b);
        \path[graphEdge] (7bt) -- (6bt) -- (5bt) -- (4b) -- (5bb) -- (6bb) -- (7bb) -- (8bb);
    \end{pgfonlayer}
    %%%
    \node[draw = none, fill = white, shape = rectangle] (I1) at (.75, .6) {\small \textcolor{rwthmagenta}{$\ub_1 = 3$}};
    \node[draw = none, fill = white, shape = rectangle] (I2) at (11.25, -.8) {\small \textcolor{rwthmagenta}{$\ub_2 = 1$}};
    \node[draw = none, fill = white, shape = rectangle] (I3) at (4.5, 0) {\small \textcolor{rwthblue}{$\ub_3 = 1$}};
    \node[draw = none, fill = white, shape = rectangle] (I4) at (8.25, .4) {\small \textcolor{rwthblue}{$\ub_4 = 1$}};
    \node[draw = none, fill = white, shape = rectangle] (I5) at (9, -.8) {\small \textcolor{rwthblue}{$\ub_5 = 2$}};

    \node[draw = none] (lable_r) at (-0.4, 0) {\small{$r$}};
    % P_1
    \draw[-, densely dashed, very thick, color = rwthmagenta]
    ($ (r) + (0, \singleoffset) $)
        -- ($ (1c) + (0, \singleoffset) $)
        -- ($ (2c) + (0, \singleoffset) $)
        -- ($ (3t) + (0, \singleoffset) $)
        -- ($ (4t) + (0, \singleoffset) $);
    % P_2
    \draw[-, densely dashed, very thick, color = rwthmagenta]
    ($ (7bb) + (.075, \singleoffset) $)
        -- ($ (8bb) + (0, \singleoffset) $);
    % P_3
    \draw[-, densely dotted, very thick, color = rwthblue]
    %($ (5t) + (0, -\singleoffset) $)
    %    -- ($ (4t) + (0, -\singleoffset) $)
    ($ (3t) + (0, -\singleoffset) $)
        -- ($ (2c) + (\doubleoffset,0) $)
        -- ($ (3b) + (0, \singleoffset) $);
     %   -- ($ (4b) + (0, \singleoffset) $);
     % P_4 background
    \draw[-, line width = 1mm, color = white]
    ($ (6bt) + (0, \singleoffset) $)
        -- ($ (5bt) + (0, \singleoffset) $)
        -- ($ (4b) + (-\doubleoffset,0) $)
        -- ($ (5bb) + (0, -\singleoffset) $);
    % P_4
    \draw[-, densely dotted, very thick, color = rwthblue]
    ($ (6bt) + (0, \singleoffset) $)
        -- ($ (5bt) + (0, \singleoffset) $)
        -- ($ (4b) + (-\doubleoffset,0) $)
        -- ($ (5bb) + (0, -\singleoffset) $);
    % P_5
    \draw[-, densely dotted, very thick, color = rwthblue]
    ($ (7bt) + (-.075, -\singleoffset) $)
        -- ($ (6bt) + (0, -\singleoffset) $)
        -- ($ (5bt) + (0, -\singleoffset) $)
        -- ($ (4b) + (\doubleoffset,0) $)
        -- ($ (5bb) + (0, \singleoffset) $)
        -- ($ (6bb) + (0, \singleoffset) $)
        -- ($ (7bb)  + (-.075, \singleoffset) $);
        %-- ($ (8bb)  + (0, \singleoffset) $);

    \node[vertex] at (6,-.8) {};
    % F_1 - eStart = 4
    \begin{pgfonlayer}{bg}    % select the background layer
        \draw[line width = 1mm, color = rwthmagenta] (3t) -- (4t);
        \draw[line width = 1mm, color = rwthmagenta] (3b) -- (4b);
        \draw[line width = 1mm, color = rwthmagenta] (7bb) -- (8bb);
    \end{pgfonlayer}
    % F_2 - eStart = 4
    \begin{pgfonlayer}{bg}    % select the background layer
        \draw[line width = 1mm, color = rwthblue] (5bt) -- (6bt);
        \draw[line width = 1mm, color = rwthblue] (5bb) -- (6bb);
    \end{pgfonlayer}
\end{tikzpicture}
		\caption{An illustration of the sets $M_j$ and the construction of the candidate solutions $F_{j, \theta}$. The dashed commodities colored in magenta form the set $M_1$, and the blue, dotted commodities form the set $M_2$. Bold edges in different colors indicate the solutions $F_{1,3}$ and $F_{2,5}$ which are optimal for $M_1$ and $M_2$, respectively. Note that there is no $\eStart \in \{0,\dots, 7\}$ such that $\rev{3}(F_{j,\eStart}) > 0$.}
		\label{fig:SingleDensity}
	\end{figure}
	
	\begin{restatable}{theorem}{SingleDensity}
		\label{thm:SingleDensity}
		The Single Density algorithm is an $\mathcal{O}(\log n)$-approximation for \ac{FZA} in expectation.  
	\end{restatable}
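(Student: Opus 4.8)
The plan is to invoke the Divide and Select framework (\cref{obs:DivideAndSelect}) with $\ell = \lceil\log_2 n\rceil + 1$ and a \emph{constant} $\alpha$, which then gives an approximation ratio of $\mathcal{O}(\ell) = \mathcal{O}(\log n)$ in expectation. For the class $M_0$ the bound \cref{eq:divide_and_select1} is immediate with $\alpha = 1$: every $i \in M_0$ has $\ub_i = 0$, so $\cutset^*$ obtains at most $\weight_i f(0) = \rev{i}(\emptyset) = \rev{i}(\cutset_0)$ from $i$. The remaining work, for every $j \ge 1$, is \cref{eq:single_density_expectation}. To get there it suffices to regard $\eStart$ as uniform in $\{0,\dots,2^{j+1}-1\}$ and show $\mathbb{E}[\rev{i}(\cutset_{j,\eStart}^{\text{(rand)}})] \ge c\cdot\rev{i}(\cutset^*)$ for every $i \in M_j$ and an absolute constant $c$, where $\mathbb{E}$ averages over $\eStart$ and the deletions: summing over $i\in M_j$, this gives $\mathbb{E}_\eStart\mathbb{E}_{\mathrm{rand}}[\rev{M_j}(\cutset_{j,\eStart}^{\text{(rand)}})] \ge c\cdot\rev{M_j}(\cutset^*)$, hence some fixed $\eStart$ attains the corresponding bound, and since $\cutset_j$ is chosen to maximise $\rev{M_j}(\cdot)$ over the $\cutset_{j,\eStart}^{\text{(rand)}}$ it does at least as well.

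Fix $j \ge 1$ and $i \in M_j$. The structural input is that, in $T$ rooted at $r$, the path $\pathGraph_i$ splits at the least common ancestor of its endpoints into two subpaths $U_i, D_i$ along each of which the edge levels $\mathrm{dist}(r,\cdot)$ form a contiguous block of integers. Writing $c_i(\eStart) \coloneqq |\pathGraph_i \cap \cutset_{j,\eStart}|$ and using that $i \in M_j$ forces $\ub_i 2^{j-1} \le |\pathGraph_i| < \ub_i 2^{j}$, elementary counting inside the two blocks gives two facts. First, whenever $\ub_i \ge 2$ we have $c_i(\eStart) \le \ub_i$ for every $\eStart$, and when $\ub_i = 1$ we have $c_i(\eStart) \le 2$; so commodity $i$ is never lost under $\cutset_{j,\eStart}^{\text{(rand)}}$ unless $\ub_i = 1$, in which case it is lost with probability at most $\tfrac14$. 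Second, since one of $U_i,D_i$ contains at least $|\pathGraph_i|/2 \ge \ub_i 2^{j-2}$ edges, and in a contiguous block of length $L$ the number of levels $\equiv \eStart \pmod{2^{j+1}}$ takes only the two values $\lfloor L/2^{j+1}\rfloor$ and $\lceil L/2^{j+1}\rceil$ with mean $L/2^{j+1}$, truncation at $|\pathGraph_i \cap \cutset^*|$ still leaves $\mathbb{E}_{\eStart}\bigl[\min\{c_i(\eStart),|\pathGraph_i\cap\cutset^*|\}\bigr] \ge \tfrac18|\pathGraph_i\cap\cutset^*|$.

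The final step converts the cut-count bound into a revenue bound by concavity. Fix $i$ with $\rev{i}(\cutset^*) > 0$, so $m \coloneqq |\pathGraph_i \cap \cutset^*| \le \ub_i$, and put $m' \coloneqq \min\{c_i(\eStart), m\}$. Restricting attention to an arbitrary $m'$-subset of the cuts that $\cutset_{j,\eStart}$ places on $\pathGraph_i$, each survives the deletion step independently with probability $\tfrac12$, so the expected number surviving is $m'/2$; since $m' \le m \le \ub_i$ this sub-configuration keeps $i$ served (for $\ub_i=1$ one additionally uses $c_i(\eStart)\le 2$ to bound the loss probability), and because $f$ is concave with $f(0) \ge 0$ it lies above the chord through $(0,f(0))$ and $(m',f(m'))$, whence $\mathbb{E}_{\mathrm{rand}}[\rev{i}(\cutset_{j,\eStart}^{\text{(rand)}})] \ge \tfrac12\weight_i f(m')$; concavity once more yields $f(m') \ge \tfrac{m'}{m}f(m)$. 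Averaging over $\eStart$ and invoking the second fact above gives $\mathbb{E}[\rev{i}(\cutset_{j,\eStart}^{\text{(rand)}})] \ge \tfrac1{16}\rev{i}(\cutset^*)$; the cases $m = 0$ (only that $f$ is non-decreasing is needed) and $m > \ub_i$ (then $\rev{i}(\cutset^*) = 0$) are easier. This establishes \cref{eq:single_density_expectation} with an absolute constant (in particular $\tfrac1{24}$ is enough), and then \cref{obs:DivideAndSelect} with $\ell = \lceil\log_2 n\rceil+1$ completes the proof. The running time is polynomial, since there are only $\sum_{j\ge 1}2^{j+1} = \mathcal{O}(n)$ pairs $(j,\eStart)$, each candidate being evaluated in polynomial time.

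I expect the main obstacle to be the commodities with $\ub_i = 1$: because $\pathGraph_i$ may pass through the root, $\cutset_{j,\eStart}$ can place one cut on each of $U_i$ and $D_i$ and make such a commodity drop out — which is exactly why the randomized deletion step is built into the algorithm. Quantifying its loss, and more generally handling the min‑truncation $\min\{c_i(\eStart),m\}$ (necessary because for a fixed $\eStart$ some commodity may receive far more cuts than under $\cutset^*$), is the technically delicate part; the remaining pieces are routine bookkeeping with concavity.
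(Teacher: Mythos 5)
Your proof is correct and verifies the same algorithm within the same Divide-and-Select skeleton (classes $M_j$ by density, candidate sets $\cutset_{j,\eStart}$ by residue of $\operatorname{dist}(r,\cdot)$, randomized deletion, probabilistic method over $\eStart$, concavity to pass from cut counts to revenue), but the technical core is organized differently from the paper's. The paper works with the full path: it shows $\mathbb{E}_{\eStart}[|\cutset_{j,\eStart}\cap\pathGraph_i|]\geq \ub_i/4$, observes that the count fluctuates by at most $2$ across offsets, and then splits into three cases (count always $\geq 1$; count possibly $0$, hence in $\{0,1,2\}$; and $\ub_i=1$ handled by a direct probability-of-exactly-one-cut computation), arriving at the constant $1/24$. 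You instead restrict to the longer of the two root-monotone halves of $\pathGraph_i$, where the count takes only two adjacent integer values with mean at least $\ub_i/8$, which yields the truncated-expectation bound $\mathbb{E}_{\eStart}[\min\{c_i(\eStart),|\pathGraph_i\cap\cutset^*|\}]\geq \tfrac18|\pathGraph_i\cap\cutset^*|$; a single per-offset chord argument ($\mathbb{E}_{\mathrm{rand}}[\rev{i}(\cutset_{j,\eStart}^{(\mathrm{rand})})]\geq\tfrac{m'}{2m}\rev{i}(\cutset^*)$) then treats all budgets uniformly, with $\ub_i=1$ needing only the extra remark that $c_i(\eStart)\leq 2$ bounds the drop-out probability by $1/4$. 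Your route avoids the paper's case analysis entirely, handles the "algorithm may overshoot $\cutset^*$'s cut count" issue cleanly via the $\min$-truncation, and in fact yields the slightly better constant $1/16$; the paper's version makes the role of the bounded fluctuation across offsets more explicit. Both are valid, and all the supporting claims you use (the serving guarantee $c_i(\eStart)\leq\ub_i$ for $\ub_i\geq 2$, the two-value property of counts in a contiguous residue block, and the final selection step via $\max_{\eStart}$) check out.
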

	
	\begin{proof}[Proof (Sketch)]
		We only sketch the analysis and defer the full proof of \cref{thm:SingleDensity} to Appendix~\ref{app:SingleDensityProofs}.
		It follows from the probabilistic method that, after the randomization step, for every $j > 0$, there exists some ${\eStart} \in \{0, \dots, 2^{j+1}-1\}$ such that the solution $\cutset_{j,\eStart}^{\text{(rand)}}$ fulfills 
		\begin{equation}
			\label{eq:single_density_expectation}
			\mathbb{E}[\rev{M_j}(\cutset_{j,\eStart}^{\text{(rand)}})] \geq \frac{1}{24} \rev{M_j}(\cutset^*),
		\end{equation}
		which implies the desired statement. To prove Equation (\ref{eq:single_density_expectation}), we first show for all $i \in M_j$ that $\mathbb{E}\bigl[\big|P_i \cap \cutset_{j, \eStart}^{(\text{rand})}\big|\bigr] \geq \tfrac{1}{8}|P_i \cap F^*|$. 
		Note that for non-linear pricing functions $f$, the desired inequality $\mathbb{E}[\rev{M_j}(\cutset_{j, \eStart}^{(\text{rand})})] \geq \alpha \cdot \rev{M_j}(P_i \cap F^*)$ does not follow immediately. However, using the subadditivity of $f$, we show that bounding the expected number of cuts on the path of each commodity $i$ translates directly into a lower bound on the revenue that $\cutset_{j, \eStart}^{(\text{rand})}$ extracts from $i$. 
		In particular, we show that $\rev{i}(\cutset_{j, \eStart}^{(\text{rand})})$ is a constant fraction of $\rev{i}(F^*)$.
	\end{proof}
	
	In Appendix~\ref{app:SingleDensityProofs}, we also present a deterministic $\mathcal{O}(\log n)$-approximation  hiding smaller constants in the $\mathcal{O}$-notation for instances where $f(0)>0$ or $T$ is a path. Additionally, we introduce a simpler and arguably more intuitive algorithm which in turn relies more on randomization. Further, its analysis results in larger constants in the approximation guarantee.
	
	%%%%%%%%%%%%%%%%%%%%%%%%%%%%%%%%%%%%%%%%%%%%%%%%%%%%%%%%%%%
	%%%%%%%%%%%%%%%%%%%%% Sublog %%%%%%%%%%%%%%%%%%%%%%%%%%%%%%
	%%%%%%%%%%%%%%%%%%%%%%%%%%%%%%%%%%%%%%%%%%%%%%%%%%%%%%%%%%%
	
	\section{Sublogarithmic Approximation}
	\label{sec:SublogApprox}
	
	In this section, we present an $\mathcal{O}({\log n}/{\log \log n})$-approximation algorithm for \ac{FZA}. The algorithm and its analysis build on the work of Gamzu and Segev~\cite{GamzuSegev2017loglog} for \textsc{Tollbooth}.
	We apply the Divide and Select approach and partition the set of commodities into $\ell \in \mathcal{O}(\log n / \log \log n)$ classes $M_j$, $j\in [\ell]$, which we obtain by iteratively computing \emph{almost balanced $d$-decompositions} (defined below) of the input graph. 
	Such decompositions always exist and can be computed in polynomial time due to Frederickson and Johnson~\cite{FredericksonJohnson1980almostBalanced}.
	
	\begin{definition}[see \cite{FredericksonJohnson1980almostBalanced}]
		\label{def:almost_balanced}
		Let $T = (V,E)$ be a tree and $d \in \mathbb{N}$. An \emph{almost balanced $d$-decomposition} of $T$ is a partition of the edges of $T$ into $d$ edge-disjoint subtrees $T_1, \ldots, T_d$ such that each of these subtrees contains between $|E|/(3d)$ and $3|E|/d$ edges.
	\end{definition}
	
	Starting with the input tree $T$, we iteratively refine the current decomposition by computing an almost balanced $d$-decomposition of each component. 
	Given an instance  of \ac{FZA} with  tree  decomposition $\mathcal{T} = \{T_1, \ldots, T_d\}$, we say that commodity $i$ gets \emph{separated} by $\mathcal{T}$ if $P_i$ intersects with at least two subtrees in $\mathcal{T}$. We say that an instance $\mathcal{I}$ of \ac{FZA} is \emph{separated by a tree decomposition $\mathcal{T} = \{T_1, \ldots, T_d\}$} if every commodity in $\mathcal{I}$ is separated by $\mathcal{T}$, and $d \in \mathcal{O}(\log(n)^{1/2})$.
	As described below, class $M_j$ contains all those commodities $i\in [k]$,  which are separated for the first time in the $(j+1)$-th iteration of this decomposition process. As it turns out (see proof of Corollary \ref{cor.main}),
	to obtain a constant-factor approximation for each class $M_j$, it suffices to solve \ac{FZA} on instances in which each commodity is separated by a given tree decomposition.  
	
	\begin{theorem}
		\label{lem:Trees_constant}
		There exists a constant-factor approximation algorithm (in expectation) for solving \ac{FZA} on instances $\mathcal{I}$ with a given tree decomposition $\mathcal{T}$ that separates $\mathcal{I}$.
	\end{theorem}
	
	Before we prove Theorem \ref{lem:Trees_constant}, we show how our main result follows as a consequence.
	
	\begin{corollary}\label{cor.main}
		There is an efficient algorithm to compute an $\mathcal{O}\left(\frac{\log n}{\log \log n}\right)$-appro\-xi\-ma\-tion for \ac{FZA} in expectation.
	\end{corollary}
	
	\begin{proof} [Proof (of \cref{cor.main}).]  The proof goes along the same lines as in~\cite{GamzuSegev2017loglog}.  For $j\in \{1,2, \ldots\}$, we iteratively construct an 
		almost balanced $d$-decomposition $\mathcal{T}^{(j)}$ of $E$ into subtrees $T_1, \dots, T_d$, with each $\mathcal{T}^{(j+1)}$ being a refinement of decomposition $\mathcal{T}^{(j)}$. We initialize $\mathcal{T}^{(1)} = \{E\}$, and set $d = \lceil (\log_2 n)^{1/2} \rceil$ for $j > 1$. Given $\mathcal{T}^{(j)}$, we now compute an almost balanced $d$-decomposition of each subtree in $\mathcal{T}^{(j)}$, or---if some subtree in $\mathcal{T}^{(j)}$ has at most $d$ edges---decompose this subtree into its individual edges. By taking the union of all these decompositions, we obtain $\mathcal{T}^{(j+1)}$ as a refinement of $\mathcal{T}^{(j)}$.
		We repeat this process until at some level $\ell$ the decomposition $\mathcal{T}^{(\ell)}$ decomposes $T$ into its individual edges. Note that each subtree in $\mathcal{T}^{(j)}$ contains at most $(3/d)^{j-1}\cdot|E|$ edges and thus the number of decomposition levels is bounded by
		\[
		\ell \in \mathcal{O}(\log_d |E|) = \mathcal{O}\left(\frac{\log |E|}{\log \log |E|}\right).
		\]
		
		We now partition the commodities into classes $M_j$ as follows: Commodity $i \in [k]$ gets assigned to the unique level $j$ in which $i$ gets separated by $\mathcal{T}^{(j+1)}$ but not by $\mathcal{T}^{(j)}$.        
		Furthermore, we assign to each subtree $T' \in \mathcal{T}^{(j)}$ those commodities $i \in M_j$, whose path is entirely contained in $T'$. We observe that the revenue that some solution $F$ obtains from a commodity assigned to some subtree $T' \in \mathcal{T}^{(j)}$ only depends on $F \cap T'$.
		
		Let $F^*$ be an optimal solution, and let $j$ be the level of the decomposition for which $\rev{M_j}(F^*)$ is maximized. Since the subtrees $T'$ within a decomposition $\mathcal{T}^{(j)}$ are edge disjoint, we can treat them independently. For each $T' \in \mathcal{T}^{(j)}$, we consider the subinstance consisting of $T'$ and the set of commodities assigned to $T'$. Every commodity in this subinstance is separated by $\mathcal{T}^{(j+1)}$, and thus by \cref{lem:Trees_constant} we can approximate an optimal solution up to a constant factor. As the subtrees in $\mathcal{T}^{(j)}$ are mutually independent, this implies \cref{cor.main}.
	\end{proof}
	
	In the remainder of this section, we describe and analyze the algorithm used in the proof of \cref{lem:Trees_constant}.  As in \cite{GamzuSegev2017loglog}, we apply different subroutines for two cases that we call the \emph{non-skeleton case} and the \emph{skeleton case}, and show that the revenue obtained by the better of the two solutions  is a constant factor approximation. 
	While the non-skeleton case is similar to~\cite{GamzuSegev2017loglog}, the subroutine for the skeleton case and its analysis differ significantly.
	
	%%%%%%%%%%%%%%%%%%%%%%%%%%%%%%%%%%%%%%%%%%%%%%%%%%%%
	
	\paragraph{Border vertices and the skeleton.}
	Let $\mathcal{T}=\{T_1, \ldots, T_d\}$ be an almost balanced $d$-decomposition of a tree $T=(V,E)$. A vertex $v$ is a \textit{border vertex} if it is contained in at least two subtrees of the decomposition $\mathcal{T}$. We denote the set of border vertices by $V_B$. 
	The \textit{skeleton} $\mathcal{S} \subseteq T$ of $\mathcal{T}$ is the subtree spanned by all border vertices of $T$. 
	For convenience, we identify trees with their edge sets throughout the remainder of this section. 
	We call every non-border vertex $v \in \mathcal{S}$ of degree greater or equal to three \emph{junction vertex}, and denote the set of all junction vertices by $V_J$.
	Figure~\ref{fig:skeleton} illustrates these definitions.
	
	\begin{figure}[tbph]
		\centering
		    \resizebox{.72\textwidth}{!}{
    \begin{tikzpicture}
    \draw[-, gray, dashed] (-.5, 1.5) -- (1.75, 1.5) -- (1.75, -1.25) -- (-.5, -1.25);
    \draw[-, gray, dashed] (1.75, -.5) -- (1.75, -1.25) -- (4.75, -1.25) -- (4.75, 1.5) -- (1.75, 1.5);
    \draw[-, gray, dashed] (4.75, -1.25) -- (7.5, -1.25) -- (7.5, 1.5) -- (4.75, 1.5);
    \draw[-, gray, dashed] (10.5, -1.25) -- (7.5, -1.25) -- (7.5, -1);
    \draw[-, gray, dashed] (7.5, .5) -- (10.5, .5);
    \draw[-, gray, dashed] (7.5, 1.5) -- (7.5, 2.75);
    \draw[-, gray, dashed] (.75, 1.5) -- (.75, 2.75);
    \draw[-, gray, dashed] (4.75, 1.5) -- (4.75, 2.75);

    \node[draw = none] (T1) at (.75,-0.5) {\textcolor{gray}{$T_1$}};
    \node[draw = none] (T2) at (4.25,-0.5) {\textcolor{gray}{$T_2$}};
    \node[draw = none] (T3) at (6,-.5) {\textcolor{gray}{$T_3$}};
    \node[draw = none] (T4) at (4.25,1.875) {\textcolor{gray}{$T_4$}};
    \node[draw = none] (T5) at (10,1) {\textcolor{gray}{$T_5$}};
    \node[draw = none] (T6) at (10,-0.5) {\textcolor{gray}{$T_6$}};
    
    \node[draw = none, fill = white, shape = circle, inner sep = 0pt, outer sep = 0pt, minimum size = .08cm] (L1) at (1.75, 0) {$v_1$};
    \node[draw, fill = black, scale = 0.35, shape = circle] (A1) at (0,0) {};
    \node[draw, fill = black, scale = 0.35, shape = circle] (A2) at (1,0) {};
    \node[draw = none, fill = rwthmagenta, scale = 0.5, shape = circle] (A3) at (1.75,.5) {};
    \node[draw, fill = black, scale = 0.35, shape = circle] (A4) at (0,1) {};
    \node[draw, fill = black, scale = 0.35, shape = circle] (A5) at (1,1) {};

    \draw[-] (A1) -- (A2) -- (A3) -- (A5) -- (A4);

    \node[draw = none, fill = white, shape = circle, inner sep = 0pt, outer sep = 0pt, minimum size = .1cm] (L2) at (4.75, 0) {$v_3$};
    \node[draw = none, fill = white, shape = circle, inner sep = 0pt, outer sep = 0pt, minimum size = .1cm] (L5) at (2.25, 1.5) {$v_2$};
    \node[draw = rwthlightblue, line width = 0.2em, fill = none, scale = 0.45, shape = circle] (B1) at (2.75,.5) {};
    \node[draw, fill = black, scale = 0.35, shape = circle] (B2) at (2.25,-.5) {};
    \node[draw, fill = black, scale = 0.35, shape = circle] (B3) at (3.25,-.5) {};
    \node[draw, fill = black, scale = 0.35, shape = circle] (B4) at (3.75,.5) {};
    \node[draw = none, fill = rwthmagenta, scale = 0.5, shape = circle] (B5) at (4.75,.5) {};
    \node[draw = none, fill = rwthmagenta, scale = 0.5, shape = circle] (B6) at (2.75,1.5) {};

    \draw[-, line width = 2.5pt, rwthblue] (A3) -- (B1) -- (B4) -- (B5);
    \draw[-, line width = 2.5pt, rwthblue] (B1) -- (B6);
    \draw[-] (B3) -- (B1) -- (B2);

    \node[draw = rwthlightblue, line width = 0.2em, fill = none, scale = 0.45, shape = circle] (B1) at (2.75,.5) {};

    \node[draw = none, fill = white, shape = circle, inner sep = 0pt, outer sep = 0pt, minimum size = .1cm] (L3) at (7.5, 1.5) {$v_4$};
    \node[draw = none, fill = white, shape = circle, inner sep = 0pt, outer sep = 0pt, minimum size = .1cm] (L4) at (7.5, -.5) {$v_5$};
    \node[draw = rwthlightblue, line width = 0.2em, fill = none, scale = 0.45, shape = circle] (C1) at (5.75,.5) {};
    \node[draw, fill = black, scale = 0.35, shape = circle] (C2) at (6.5,1) {};
    \node[draw = none, fill = rwthmagenta, scale = 0.5, shape = circle] (C3) at (7.5,1) {};
    \node[draw, fill = black, scale = 0.35, shape = circle] (C4) at (6.5,0) {};
    \node[draw = none, fill = rwthmagenta, scale = 0.5, shape = circle] (C5) at (7.5,0) {};

    \draw[-, line width = 2.5pt, rwthblue] (B5) -- (C1) -- (C2) -- (C3);
    \draw[-, line width = 2.5pt, rwthblue] (C1) -- (C4) -- (C5);

    \node[draw = rwthlightblue, line width = 0.2em, fill = none, scale = 0.45, shape = circle] (C1) at (5.75,.5) {};

    \node[draw, fill = black, scale = 0.35, shape = circle] (D1) at (2.25,2.5) {};
    \node[draw, fill = black, scale = 0.35, shape = circle] (D2) at (1.25,2.5) {};
    \node[draw, fill = black, scale = 0.35, shape = circle] (D3) at (3.25,2.5) {};
    \node[draw, fill = black, scale = 0.35, shape = circle] (D4) at (4.25,2.5) {};

    \draw[-] (D2) -- (D1) -- (B6) -- (D3) -- (D4);

    \node[draw, fill = black, scale = 0.35, shape = circle] (E1) at (8.5,1) {};
    \node[draw, fill = black, scale = 0.35, shape = circle] (E2) at (8.5,2) {};
    \node[draw, fill = black, scale = 0.35, shape = circle] (E3) at (9.5,1) {};
    \node[draw, fill = black, scale = 0.35, shape = circle] (E4) at (9.5,2) {};

    \draw[-] (C3) -- (E1) -- (E2);
    \draw[-] (E3) -- (E1) -- (E4);

    \node[draw, fill = black, scale = 0.35, shape = circle] (F1) at (8.5,0) {};
    \node[draw, fill = black, scale = 0.35, shape = circle] (F2) at (8.5,-.75) {};
    \node[draw, fill = black, scale = 0.35, shape = circle] (F3) at (9.5,0) {};
    \node[draw, fill = black, scale = 0.35, shape = circle] (F4) at (9.5,-.75) {};

    \draw[-] (C5) -- (F1) -- (F3);
    \draw[-] (C5) -- (F2) -- (F4);

    \node[draw = none, fill = white, shape = circle, inner sep = 0pt, outer sep = 0pt, minimum size = .1cm] (I1) at (.5, .5) {\textcolor{rwthlightblue}{$P_1$}};
    \node[draw = none, fill = white, shape = circle, inner sep = 0pt, outer sep = 0pt, minimum size = .1cm] (I2) at (6.75, .5) {\textcolor{rwthlightblue}{$P_2$}};
    \node[draw = none, fill = white, shape = circle, inner sep = 0pt, outer sep = 0pt, minimum size = .1cm] (I3) at (3.35, 1.1) {\textcolor{rwthlightblue}{$P_3$}};
    \draw[-, densely dotted, very thick, color = rwthlightblue] (-.15,.25) -- (1,.25) -- (1.35, .5) -- (1, .75) -- (-.15, .75);
    \draw[-, densely dotted, very thick, color = rwthlightblue] (3.4, 2.25) -- (3, 1.5) -- (3, 0.75) -- (5.75,.75) -- (6.5, 1.25);
    \draw[-, densely dotted, very thick, color = rwthlightblue] (7.65, .75) -- (6.5, .75) -- (6.15, .5) -- (6.5, .25) -- (9.75, .25);

    \end{tikzpicture}
    }
    
		\caption{A decomposition $\mathcal{T} = \{T_1, \dots, T_6\}$ of some tree $T$. The larger, magenta vertices $v_1,\ldots,v_5$ are border vertices, and the hollow, light blue vertices are junction vertices. The skeleton (bold blue edges) is the subtree spanned by the border vertices. Commodity $1$ is not separated by $\mathcal{T}$ since $P_1 \subseteq T_1$, commodities $2$ and $3$ are separated by $\mathcal{T}$.}
		\label{fig:skeleton}
	\end{figure}
	
	To prove \cref{lem:Trees_constant}, we consider an instance of \ac{FZA} in which all commodities in $M=[k]$ are separated by some decomposition $\mathcal{T}=\{T_1, \ldots, T_d\}$ of the underlying tree $T=(V,E)$ with $d = \lceil (\log_2 n)^{1/2} \rceil$. Let $\mathcal{S}\subseteq E$ be the skeleton of $\mathcal{T}$ and let $E\setminus{\mathcal{S}}$ denote its non-skeleton edges. 
	We fix some optimal solution $F^*$. To compute a cut set $F_{\ALG}$ whose revenue---in expectation---diverges at most by a constant factor from $\rev{}(F^*)$, we run two different algorithms. The first one computes a cut set $F^{E\setminus\mathcal{S}}\subseteq E\setminus{\mathcal{S}}$ on the non-skeleton edges whose revenue in expectation is at least $\frac{1}{8}$ of the revenue that $F^*$ restricted to non-skeleton edges obtains. The second algorithm (described in Section~\ref{subsec:skeleton}) computes a cut set $F^{\mathcal{S}}\subseteq \mathcal{S}$ which solely consists of skeleton edges. We show in \cref{l.skeleton} that the expected revenue of $F^{\mathcal{S}}\subseteq \mathcal{S}$ is at least $\frac{1}{64}$ of the revenue which $F^*$ obtains when restricted to skeleton edges. The algorithm eventually returns the better solution $F_{\ALG}$ among $F^{E\setminus\mathcal{S}}$ and $F^{\mathcal{S}}$.
	
	\paragraph{Computing a cut set outside the skeleton.} 
	To construct a cut set $F^{E\setminus\mathcal{S}} \subseteq E\setminus \mathcal{S}$ consisting solely of non-skeleton edges, we follow the lines of \cite{GamzuSegev2017loglog}.  
	First, we contract the skeleton $\mathcal{S}$ to a single vertex $v_{\mathcal{S}}$. The new, contracted subtree decomposes into subtrees $T_1, \ldots, T_{d'}$, each rooted in $v_{\mathcal{S}}$.  We treat these subtrees independently and compute partial solutions $F_j$ for each subtree. Finally, we merge all these solutions and obtain $F^{E \setminus \mathcal{S}} = \bigcup_{j \in[d']} F_j$.
	
	For each $j\in [d']$, we call subtree $T_j$ \emph{active} with probability $1/2$, and \emph{inactive} otherwise. For inactive subtrees $T_j$, we set $F_j = \emptyset$. For every active subtree $T_j$, we consider a rooted instance $\mathcal{I}(T_j)$ on $T_j$ with root $v_{\mathcal{S}}$. A commodity $i \in [k]$ is present in $\mathcal{I}(T_j)$ if exactly one endpoint of $P_i$ is an inner vertex of $T_j$, and the other endpoint lies in $\mathcal{S}$ or some inactive other subtree.
	We now run the algorithm from \cref{sec:singlesource} for every constructed rooted instance to compute an optimal solution $\cutset_j$ for each $\mathcal{I}(T_j)$, and let $\cutset^{E \setminus \mathcal{S}} = \bigcup_{j \in [d']} F_j$. The following lemma states that in expectation $F^{E \setminus \mathcal{S}}$ extracts at least $\tfrac{1}{8}$ of the potential revenue from $E\setminus{\mathcal{S}}$. As the proof of \cref{l.non-skeleton} is very similar to \cite{GamzuSegev2017loglog}, we defer it to Appendix~\ref{app:non-skel}.
	
	\begin{restatable}{lemma}{nonskelApproximation}
		\label{l.non-skeleton}
		Let $\mathcal{I}$ be an instance of \ac{FZA} that is separated by some given tree decomposition $\mathcal{T}$ with associated skeleton $\mathcal{S}$. Further, let $F^*$ be an optimal solution to $\mathcal{I}$. There exists an efficient algorithm which computes a cut set $F^{E\setminus\mathcal{S}}\subseteq E\setminus{\mathcal{S}}$ such that
		\[
		\mathbb{E}[\rev{}(F^{E\setminus\mathcal{S}})] \geq \frac{1}{8}\rev{}(F^*\cap (E\setminus{\mathcal{S}})).
		\]
	\end{restatable}
	
	\paragraph{Computing a cut set inside the skeleton.}
	We defer the description of the algorithm for computing the cut set $F^{\mathcal{S}}\subseteq \mathcal{S}$ and its analysis to \cref{subsec:skeleton}, where we  prove:
	
	\begin{lemma} \label{l.skeleton} Let $\mathcal{I}$ be an instance of \ac{FZA} that is separated by some given tree decomposition $\mathcal{T}$ with associated skeleton $\mathcal{S}$. Further, let $F^*$ be an optimal solution to $\mathcal{I}$.
		There exists an efficient algorithm which computes a cut set $F^{\mathcal{S}}\subseteq \mathcal{S}$ such that
		\[
		\mathbb{E}[\rev{}(F^{\mathcal{S}})] \geq \frac{1}{64}\rev{}(F^*\cap \mathcal{S}).
		\]
	\end{lemma}
	
	Combining Lemmata \ref{l.non-skeleton} and \ref{l.skeleton}, we obtain \cref{lem:Trees_constant} as an immediate corollary.
	
	\begin{proof}[Proof (of \cref{lem:Trees_constant}).]
		Let $F^*$ be an optimal solution. By the subadditivity of the pricing function $f$, we have $\rev{i}(F^* \cap \mathcal{S}) + \rev{i}(F^* \cap (E \setminus \mathcal{S})) \geq \rev{i}(F^*)$ for every commodity $i\in[k]$. Hence, summing over all commodities yields that at least one of $\rev{}(F^* \cap \mathcal{S}) \geq \tfrac{1}{2}\rev{}(F^*)$ or $\rev{}(F^* \cap (E\setminus\mathcal{S})) \geq \tfrac{1}{2}\rev{}(F^*)$ holds. By Lemmata \ref{l.non-skeleton} and \ref{l.skeleton}, the better solution of $F^{E\setminus\mathcal{S}}$ and $F^{\mathcal{S}}$ is a $\tfrac{1}{128}$-approximation in expectation.
	\end{proof}
	
	\subsection{Skeleton Algorithm---Proof of Section~\ref{l.skeleton}}
	\label{subsec:skeleton}
	Let $F^*$ be some optimal solution. We consider an instance of FZA in which  every commodity $i\in [k]$ is separated by some given tree decomposition $\mathcal{T} = \{T_1,\ldots,T_d\}$ of the underlying tree $T=(V,E)$. Thus, the path of every commodity  $i \in [k]$ contains at least one border vertex. Given the skeleton $\mathcal{S}$ of tree decomposition $\mathcal{T}$, we call an inclusion-wise maximal subpath $\sigma$ of $\mathcal{S}$ which contains neither border nor junction vertices as inner vertices a \emph{segment} of $\mathcal{S}$.
	For example, the skeleton of the tree decomposition depicted in Figure~\ref{fig:skeleton} decomposes into six segments. Let $\Sigma(\mathcal{S})$ denote the collection of all segments of $\mathcal{S}$. It holds that $|\Sigma(\mathcal{S})| \leq |V_J| + |V_B|$ where $V_J$ and $V_B$ denote the sets of junction and border vertices, respectively. As $|V_J| < |V_B| < d$, we get $|\Sigma(\mathcal{S})| \in \mathcal{O}(\sqrt{\log n})$.
	
	\paragraph{Description of the algorithm.}
	The algorithm starts by guessing the approximate number of cuts $|F^*\cap \sigma|$  that $F^*$ makes on every segment $\sigma\in \Sigma(\mathcal{S})$. That is, the algorithm iterates over all vectors
	\[
	p\in \Gamma:=\left\{p\in \mathbb{Z}_+^{\Sigma(\mathcal{S})} \ \Big| \ p(\sigma)= 2^\kappa \mbox{ for some } \kappa\in \{0,1, \ldots  \lfloor \log_2 |\sigma| \rfloor\} \ \forall \sigma \in \Sigma(\mathcal{S})
	\right\}.
	\]
	Note that there exists a vector $p\in \Gamma$ such that  $\frac{1}{2} |\sigma\cap F^*| < p(\sigma)\le |\sigma\cap F^*|$ for every $\sigma \in \Sigma(\mathcal{S})$. 
	For each vector $p\in \Gamma$, the algorithm iterates over all segments $\sigma \in \Sigma(\mathcal{S})$ and computes a cut set $F(\sigma, p)\subseteq \sigma$ using the subroutine described below. Together, these sets form the solution  $F(p) \coloneqq \bigcup_{\sigma\in \Sigma(\mathcal{S})} F(\sigma,p)$. After evaluating $F(p)$ for every $p \in \Gamma$, the algorithm returns the cut set $F^{\mathcal{S}}$ of maximum revenue among the computed solutions, i.e., $F^{\mathcal{S}} \in \argmax\left\{\rev{}(F(p))\mid p\in \Gamma\right\}$.
	
	For every commodity $i$, we call all segments with non-empty intersection with $P_i$ \emph{segments of $i$}. A segment $\sigma$ of $i$ is an \textit{inner segment} of $i$, if $\sigma \subseteq P_i$, otherwise $\sigma$ is an \textit{outer segment} of $i$. Note that, given  some fixed $p \in \Gamma$ together with a solution $F$ which places exactly $p(\sigma)$ cuts on every segment $\sigma \in \Sigma(\mathcal{S})$, the size of $F \cap P_i$ only depends on the intersection of $F$ with the outer segments of $i$. Our algorithm utilizes this by computing good solutions with a fixed number of cuts on each segment. 
	
	\paragraph{DP for solving generalized rooted instances.} To compute the solution $F(\sigma, p)$,  we adapt the dynamic program for rooted instances (cf. Section \ref{sec:singlesource}) to a variant of FZA, which we call \emph{generalized rooted \ac{FZA} on a path}. As the name suggests, the problem restricts to paths $T=(V,E)$ with a designated endpoint, called the \emph{root} of $T$, which also is an endpoint of every commodity. In contrast to standard rooted \ac{FZA}, generalized \ac{FZA} allows for commodity-specific monotone subadditive revenue functions $f_i\colon \mathbb{N}\to \mathbb{R}_+$. 
	In Appendix~\ref{app:GeneralizedRooted}, we describe and analyze a polynomial algorithm that computes a revenue-maximizing cut set $F\subseteq E$, under the additional restriction that $|F|=p$ for some number $p\in \mathbb{Z}_+$ specified in the input.
	
	\paragraph{Subroutine to compute $F(\sigma, p)$.} 
	Given a segment $\sigma$ with endpoints $t_\sigma^{(1)}$, $t_\sigma^{(2)}$ and an integer $p(\sigma)$, the subroutine computes $F(\sigma, p)$ by randomizing between the following three options with probability $(\frac{1}{2}, \frac{1}{4},\frac{1}{4})$, respectively:
	\begin{itemize}
		\item Set  $F(\sigma, p)=\emptyset$. In this case, we say that $\sigma$ is \emph{inactive}, otherwise $\sigma$ is called \emph{active}.
		\item  Compute an optimal solution $F(\sigma, p)$ of the generalized rooted instance $\mathcal{I}(\sigma, t_{\sigma}^{(1)}, p(\sigma))$.
		\item  Compute an optimal solution $F(\sigma, p)$ of the generalized rooted instance $\mathcal{I}(\sigma, t_{\sigma}^{(2)}, p(\sigma))$.
	\end{itemize}
	
	Given some segment $\sigma$, a prescribed number of cuts $p(\sigma)$, and a root vertex $r\in \{t_{\sigma}^{(1)},  t_{\sigma}^{(2)}\}$, we define the generalized rooted instance $\mathcal{I}(\sigma, r,  p(\sigma))$ on graph $\sigma$ as follows.  A commodity $i$ of the original instance is present in  $\mathcal{I}(\sigma, r,  p(\sigma))$ if  all of the following conditions hold: 
	(a) $r$ is an inner vertex of $\pathGraph_i$,
	(b) $\pathGraph_i$ does not fully contain $\sigma$, and 
	(c) the outer segment of $i$ that does not contain $r$ (if it exists) is inactive.  
	For each commodity $i$ present in $\mathcal{I}(\sigma, r,  p(\sigma))$, we restrict its path to $\pathGraph_i^{\sigma} \coloneqq \pathGraph_i\cap \sigma$, and decrease its upper bound to $u_i^{\sigma} \coloneqq u_i - z_i^{\text{inner}}$, where
	$
	z_i^{\text{inner}} \coloneqq \sum \{p(\sigma^*)\mid \sigma^* \text{ is an active inner segment of } \pathGraph_i\}.
	$
	Furthermore, we define the commodity-specific subadditive pricing function $f_i^{\sigma}(n) \coloneqq f_i(n + z_i^{\text{inner}})$.
	
	\paragraph{Polynomial running time.}
	Note  that $|\Gamma|\in \mathcal{O}(\log n)^{\mathcal{O}(\sqrt{\log n})}=o(n)$. Thus, the algorithm considers at most $o(n)$ many vectors $p$ when iterating over $\Gamma$.  Each candidate price vector $p\in \Gamma$ requires $|\Sigma(\mathcal{S})|\in \mathcal{O}(\sqrt{\log n})$ calls to the subroutine (one for each segment $\sigma$). Since the subroutine runs in polynomial time, the algorithm for the skeleton case remains polynomial.
	
	\paragraph{Approximation guarantee.} 
	To show that $\mathbb{E}[\rev{}(F^{\mathcal{S}})] \geq \frac{1}{64}\rev{}(F^*)$, where $F^*$ is some optimal solution, we consider the iteration of the algorithm in which a cost vector $p\in \Gamma$ is evaluated for which $\frac{1}{2} |\sigma\cap F^*| < p(\sigma)\le |\sigma\cap F^*|$ for every $\sigma \in \Sigma(\mathcal{S})$.
	Let $M^*$ be the set of commodities that are served by $F^*$, i.e., the set of all $i\in [k]$ for which $|F^* \cap P_i| \leq u_i$. We partition $M^*$ into \emph{inner} and \emph{outer commodities} $M^*=I\cup O$ based on the location of the majority of cuts from $F^*$ on $P_i$. \cref{fig:inner_and_outer_commodities} illustrates the definition of $I$ and $O$ below.
	
	\begin{definition}[Inner and outer commodities]
		A commodity $i \in M^*$ is an \emph{inner commodity}, if $F^*$ places at least as many cuts on the inner segments than on the (at most two) outer segments of $P_i$. Otherwise, we call $i$ an \emph{outer commodity}. We denote the sets of inner and outer commodities by $I$ and $O$, respectively.
	\end{definition}
	
	\begin{figure}
		\centering
		\begin{tikzpicture}
%Segment borders
\draw[line width = 1pt, gray!40, dashed] (0,.5) -- (0, -3.3);
\draw[line width = 1pt, gray!40, dashed] (2,.5) -- (2, -3.3);
\draw[line width = 1pt, gray!40, dashed] (4,.5) -- (4, -3.3);
\draw[line width = 1pt, gray!40, dashed] (6,.5) -- (6, -3.3);
\draw[line width = 1pt, gray!40, dashed] (8,.5) -- (8, -3.3);
\draw[line width = 1pt, gray!40, dashed] (10,.5) -- (10, -3.3);

%Graph
\node[draw, fill = black, scale = 0.35, shape = circle] (A) at (0,-.5) {};
\node[draw, fill = black, scale = 0.35, shape = circle] (B) at (2,-.5) {};
\node[draw, fill = black, scale = 0.35, shape = circle] (C) at (4,-.5) {};
\node[draw, fill = black, scale = 0.35, shape = circle] (D) at (6,-.5) {};
\node[draw, fill = black, scale = 0.35, shape = circle] (E) at (8,-.5) {};
\node[draw, fill = black, scale = 0.35, shape = circle] (F) at (10,-.5) {};

\draw[line width = 2pt] (A) -- (B);
\draw[line width = 1.5pt, densely dotted, gray] (B) -- (C);
\draw[line width = 2pt] (C) -- (D);
\draw[line width = 2pt] (D) -- (E);
\draw[line width = 1.5pt, densely dotted, gray] (E) -- (F);

%Optimal solution
\draw[line width = 1.5pt, rwthmagenta] (1,-.2) -- (1,-3.3);
\draw[line width = 1.5pt, rwthmagenta] (3,-.2) -- (3,-3.3);
\draw[line width = 1.5pt, rwthmagenta] (3.5,-.2) -- (3.5,-3.3);
\draw[line width = 1.5pt, rwthmagenta] (5,-.2) -- (5,-3.3);
\draw[line width = 1.5pt, rwthmagenta] (6.6,-.2) -- (6.6,-3.3);
\draw[line width = 1.5pt, rwthmagenta] (7.2,-.2) -- (7.2,-3.3);
\draw[line width = 1.5pt, rwthmagenta] (7.6,-.2) -- (7.6,-3.3);
\draw[line width = 1.5pt, rwthmagenta] (8.25,-.2) -- (8.25,-3.3);
\draw[line width = 1.5pt, rwthmagenta] (9,-.2) -- (9,-3.3);
\draw[line width = 1.5pt, rwthmagenta] (9.5,-.2) -- (9.5,-3.3);

%Commodities
\draw[line width = 1.5pt, rwthblue] (0.5, -1) -- (6.9, -1);
\draw[line width = 1.5pt, rwthblue] (0.5, -0.85) -- (0.5, -1.15);
\draw[line width = 1.5pt, rwthblue] (6.9, -0.85) -- (6.9, -1.15);

\draw[line width = 1.5pt, rwthblue] (2.5, -1.5) -- (8.75, -1.5);
\draw[line width = 1.5pt, rwthblue] (2.5, -1.35) -- (2.5, -1.65);
\draw[line width = 1.5pt, rwthblue] (8.75, -1.35) -- (8.75, -1.65);

\draw[line width = 1.5pt, rwthblue] (4.5, -2) -- (6.9, -2);
\draw[line width = 1.5pt, rwthblue] (4.5, -1.85) -- (4.5, -2.15);
\draw[line width = 1.5pt, rwthblue] (6.9, -1.85) -- (6.9, -2.15);

\draw[line width = 1.5pt, rwthblue] (4.5, -2.5) -- (9.75, -2.5);
\draw[line width = 1.5pt, rwthblue] (4.5, -2.35) -- (4.5, -2.65);
\draw[line width = 1.5pt, rwthblue] (9.75, -2.35) -- (9.75, -2.65);

\draw[line width = 1.5pt, rwthblue] (6.3, -3) -- (8.75, -3);
\draw[line width = 1.5pt, rwthblue] (6.3, -2.85) -- (6.3, -3.15);
\draw[line width = 1.5pt, rwthblue] (8.75, -2.85) -- (8.75, -3.15);

%Arrows for direction of single source
\draw[->, line width = 1pt, rwthlightblue] (1.85,-.4) -- (1.85,0) -- (1.35, 0);

\draw[->, line width = 1pt, rwthlightblue] (4.15,-.4) -- (4.15,0) -- (4.65, 0);

\draw[->, line width = 1pt, rwthlightblue] (7.85,-.4) -- (7.85,0) -- (7.35, 0);

%Labels for commodities
\node[draw = none] (L1) at (0.25, -1) {\footnotesize{\textcolor{rwthblue}{$i_1$}}};
\node[draw = none] (L2) at (2.25, -1.5) {\footnotesize{\textcolor{rwthblue}{$i_2$}}};
\node[draw = none] (L3) at (4.25, -2) {\footnotesize{\textcolor{rwthblue}{$i_3$}}};
\node[draw = none] (L4) at (4.25, -2.5) {\footnotesize{\textcolor{rwthblue}{$i_4$}}};
\node[draw = none, circle, fill = white, inner sep = 0pt, outer sep = 0pt, minimum size = 0cm] (L5) at (6.05, -3){\footnotesize{\textcolor{rwthblue}{$i_5$}}};

%Labels for guessed prices
\node[draw = none] (S1) at (1,0.3) {\footnotesize{$p(\sigma_1) = 1$}};
\node[draw = none] (S2) at (3,0.3) {\footnotesize{$p(\sigma_2) = 2$}};
\node[draw = none] (S3) at (5,0.3) {\footnotesize{$p(\sigma_3) = 1$}};
\node[draw = none] (S4) at (7,0.3) {\footnotesize{$p(\sigma_4) = 2$}};
\node[draw = none] (S5) at (9,0.3) {\footnotesize{$p(\sigma_5) = 2$}};
\end{tikzpicture}
		
		\caption{The skeleton algorithm illustrated on an excerpt of a skeleton whose edges are represented by five segments. Segments $\sigma_2$ and $\sigma_5$ were deactivated in the randomization step, for the remaining segments, the randomly chosen root for the subroutine is indicated by an arrow. Vertical lines represent a (hypothetical) optimal solution $F^*$. Horizontal lines below represent commodities. Among those, $i_1$ and $i_2$ are inner commodities, but only $i_2$ is a nice, inner commodity. The remaining commodities are outer commodities;
			$i_3$ is not nice as both of its outer segments are active, $i_4$ is not nice since the root chosen in $\sigma_3$ for the rooted algorithm is not part of $P_4$, and $i_5$ is nice.
		} 
		
		\label{fig:inner_and_outer_commodities}
	\end{figure}
	
	For both sets $I$ and $O$, we specify a subset of \emph{nice} commodities and show in Observations~\ref{obs:inner_nice_prob} and \ref{obs:outer_nice_prob} that every commodity is nice with probability at least $\tfrac{1}{8}$. Moreover, we show in Lemmata~\ref{lem:skeleton_inner_2} and \ref{lem:skeleton_outer_2} that the algorithm obtains at least $\tfrac{1}{8}$ of the revenue which $F^*$ obtains from the set of nice commodities. 
	
	\begin{definition}[Nice, inner commodity]
		\label{def:nice_inner_commodity}
		We say that an inner commodity $i \in I$ is \emph{nice}, and write $i \in\tilde{I}$, if all of the (at most two) outer segments of $i$ are inactive, and additionally
		\begin{equation}\label{cond2}
			\sum_{\substack{\sigma \text{ active inner} \\ \text{segment of } P_i}} p(\sigma) \geq \sum_{\substack{\sigma \text{ inactive inner} \\ \text{segment of } P_i}} p(\sigma).
		\end{equation}
	\end{definition}
	
	\begin{observation}
		\label{obs:inner_nice_prob}
		Every inner commodity $i \in I$ is nice with probability at least $\tfrac{1}{8}$.
	\end{observation}
	
	\begin{proof}
		As the options chosen for every segment of $i$ in the randomization process are independent of each other (cf. subroutine to compute $F(\sigma,p)$), all outer segments of $i$ are inactive with probability at least $\tfrac{1}{4}$.  Note that condition (\ref{cond2}) holds with probability at least $1/2$. To see this, fix a subset $S$ of the inner segments of $i$. The probability that exactly the segments in $S$ are active is the same as the probability that exactly the segments in $S$ are inactive. For at least one of these options condition (\ref{cond2}) holds. 
	\end{proof}
	
	\begin{lemma}
		\label{lem:skeleton_inner_2}
		Every nice, inner commodity $i \in \tilde{I}$ satisfies $\rev{i}(F^{\mathcal{S}}) \geq \tfrac{1}{8}\rev{i}(F^* \cap \mathcal{S})$.
	\end{lemma}
	
	\begin{proof}
		Let $i$ be a nice, inner commodity, i.e., $i \in \tilde{I}$. By definition, $F^*$ serves $i$,  no outer segment of $i$ is active, and $p(\sigma) \leq |F^* \cap \sigma|$ for all $\sigma \in \Sigma(\mathcal{S})$. Hence, $F^* \cap \mathcal{S}$ makes at least as many cuts on $\pathGraph_i$ as $F^{\mathcal{S}}$. We conclude that $F^{\mathcal{S}}$ also serves $i$. 
		To gauge the number of cuts in $F^{\mathcal{S}} \cap P_i$, note that at most half of the cuts in $(F^* \cap \mathcal{S}) \cap P_i$ are on outer segments of $P_i$. As $i$ is nice, the active inner segments account for at least half of the cuts made on all inner segments, and for these segments we have $|F^{\mathcal{S}} \cap \sigma| \geq \tfrac{1}{2}|F^* \cap \sigma|$. Thus, $|F^{\mathcal{S}} \cap \pathGraph_i| \geq \tfrac{1}{8}|F^* \cap \pathGraph_i|$. As the pricing function $f$ is subadditive, the statement follows.
	\end{proof}
	
	Similarly to \cref{def:nice_inner_commodity}, we define a set $\tilde{O}$ of nice, outer commodities. To do so, we distinguish whether $i$ has one or two outer segments in $\mathcal{S}$. \cref{fig:inner_and_outer_commodities} visualizes these definitions.
	
	\begin{definition}[Nice, outer commodity]
		\label{def:nice_outer_commodity}
		Let $i \in O$ be an outer commodity. We call $i$ \emph{nice} (and write $i \in \tilde{O}$) if either of the following two conditions holds.
		\begin{enumerate}[a)]
			\item Commodity $i$ only has one outer segment $\sigma$, that segment is active, and the root chosen to compute $F(\sigma,p)$ is a vertex in $\pathGraph_i$.
			\item Commodity $i$ has two outer segments $\sigma_1$ and $\sigma_2$, and the following three conditions hold.
			\begin{enumerate}[(i)]
				\item Exactly one of $\sigma_1$ and $\sigma_2$ is active, w.l.o.g. $\sigma_1$ is active.
				\item $|F^* \cap \sigma_1| \geq |F^* \cap \sigma_2|$.
				\item The root chosen to compute $F(\sigma_1,p)$ is a vertex in $\pathGraph_i$.
			\end{enumerate}
		\end{enumerate}
	\end{definition}
	
	\begin{observation}
		\label{obs:outer_nice_prob}
		Every outer commodity $i \in O$ is nice with probability at least $\tfrac{1}{8}$.
	\end{observation}
	\begin{proof}
		Let $i \in O$ be arbitrary. As each (of the one or two) outer segments of $i$ is active with probability $1/2$, with probability $1/2$ exactly one of the outer segments is active. If there are two outer segments, with probability $1/2$, the active outer segment $\sigma$ is the one on which $F^*$ places more cuts. Lastly, with probability $1/2$ the subroutine to compute $F(\sigma,p)$ chooses the endpoint of $\sigma$ that lies in $P_i$ as root.
		%(Recall that for each outer segment of $i$, exactly one of its end vertices is an inner vertex of $P_i$.)
	\end{proof}
	
	To prove the approximation guarantee for outer segments, we introduce another Lemma, which we prove in Appendix~\ref{app:skeleton}.
	
	\begin{restatable}{lemma}{sublogSubadditivity}
		\label{lem:sublog_submodular_2}
		Let $\rev{}(F)$ be the revenue attained by a solution $F$ with $|F| = m \geq 1$. There exists a solution $\bar{F} \subseteq F$ with $|\bar{F}| = 2^{\lfloor \log_2 m \rfloor}$ with $\rev{}(\bar{F}) \geq \rev{}(F)/2$.
	\end{restatable}
	
	By definition, every nice, outer commodity has exactly one active outer segment. Based on this, we partition the set of nice outer commodities $\tilde{O}$ into sets
	\[\tilde{O}_{\sigma} \coloneqq \{i \in \tilde{O} \ | \ \sigma \text{ is the active outer segment of }\pathGraph_i\}, \quad \sigma \in \Sigma(\mathcal{S}).\]
	We conclude the case of outer commodities by showing that $F^\mathcal{S}$ obtains a constant fraction of the revenue that $F^* \cap \mathcal{S}$ obtains from all commodities in $\tilde{O}_{\sigma}$ for every $\sigma \in \Sigma(\mathcal{S})$. Recall that for every active segment the algorithm computes a cut set $F(\sigma, p)\subseteq \sigma$ using the DP described in Appendix~\ref{app:GeneralizedRooted}, on the generalized rooted instance $\mathcal{I}(\sigma, p, r)$ for $r\in \{t_{\sigma}^{(1)}, t_{\sigma}^{(2)}\}.$ We denote the set of commodities that are present in this instance by $M_{\sigma}$.
	
	\begin{lemma}
		\label{lem:skeleton_outer_2}
		For every $\sigma \in \Sigma(\mathcal{S})$ it holds that $\rev{M_{\sigma}}(F^{\mathcal{S}}) \geq \tfrac{1}{8}\rev{\tilde{O}_{\sigma}}(F^* \cap \mathcal{S})$.
	\end{lemma}
	
	\begin{proof}
		Fix some $\sigma \in \Sigma(\mathcal{S})$. For a given subset  $M' \subseteq [k]$ of commodities, and some given number $p' \in \mathbb{N}$, let $\DP^{\sigma, p'}_{M'}$ be the optimal objective value of the DP for generalized rooted instances, when called on commodity set $M'$, segment $\sigma$, and  predefined number $p'$ of cuts.
		
		First, note that $\tilde{O}_{\sigma} \subseteq M_{\sigma}$, and that the sets $\tilde{I}$ and $M_{\sigma}$ are pairwise disjoint, since for nice inner commodities all outer segments (one of which would be $\sigma$) are inactive. Recall that in the generalized rooted instance  $\mathcal{I}(\sigma, p, r)$, we assign to each commodity $i \in M_{\sigma}$ an individual pricing  function that accounts for the number of cuts made on the inner segments of $i$. Thus, since for all commodities $i \in M_{\sigma}$ the other outer segment is inactive, we have $\rev{M_{\sigma}}(F^\mathcal{S}) = \DP_{M_{\sigma}}^{\sigma, p(\sigma)}$. That is, the objective value of the $DP$ when called with set of commodities $M_{\sigma}$ and individual pricing  functions $f_i^{\sigma}$ using only the edges in segment $\sigma$, is the same as the revenue obtained by the algorithm from commodities in $M_{\sigma}$ with the original pricing  function $f$ when accounting for all cuts made, i.e., even those outside of $\sigma$. Hence we obtain the following chain of inequalities
		\[\rev{M_{\sigma}}(F^{\mathcal{S}}) = \DP_{M_{\sigma}}^{\sigma, p(\sigma)} \overset{(1)}{\geq} \DP_{\tilde{O}_{\sigma}}^{\sigma, p(\sigma)} \overset{(2)}{\geq} \frac{1}{2}\DP_{\tilde{O}_{\sigma}}^{\sigma, |F^* \cap \sigma|} \overset{(3)}{\geq} \frac{1}{8}\rev{\tilde{O}_{\sigma}}(F^*).\]
		Inequality $(1)$ uses that $\tilde{O}_{\sigma} \subseteq M_{\sigma}$. Inequality $(2)$ holds by \cref{lem:sublog_submodular_2}, and inequality $(3)$ uses the fact that $f$ is subadditive and for all commodities in $\tilde{O}_{\sigma}$, the optimal solution $F^*$ places at least $1/4$ of its cuts on $\pathGraph_i$ on $\sigma$. This concludes the proof.
	\end{proof}
	
	Plugging in the bounds obtained for inner and outer commodities, we now prove \cref{l.skeleton}.
	
	\begin{proof}[Proof (of \cref{l.skeleton})]
		In \cref{lem:skeleton_inner_2}, we lower bound the revenue obtained from nice, inner commodities, and in \cref{lem:skeleton_outer_2} we lower bound the profit obtained by commodities in $M_{\sigma}$ for $\sigma \in \Sigma(\mathcal{S})$. Note that these sets of commodities are disjoint, as no commodity in $\tilde{I}$ has any active outer segments, but every commodity in some $M_{\sigma}$ does. Thus, we obtain
		\[\rev{}(F^{\mathcal{S}}) \geq \frac{1}{8}\rev{\tilde{I}}(F^* \cap \mathcal{S}) + \sum_{\sigma\in\Sigma(\mathcal{S})} \frac{1}{8}\rev{\tilde{O}_{\sigma}}(F^* \cap \mathcal{S})
		= \frac{1}{8}\rev{\tilde{I}}(F^* \cap \mathcal{S}) + \frac{1}{8}\rev{\tilde{O}}(F^* \cap \mathcal{S}).\]
		As every (inner or outer) commodity served by $F^*$ is nice with probability at least $1/8$, linearity of expectation implies $\mathbb{E}[\rev{}(F^{\mathcal{S}})] \geq \frac{1}{64}\rev{}(F^* \cap \mathcal{S})$.
	\end{proof}
	
	\bibliography{FareZones.bib}
	
	%========================================================================
	
	\clearpage
	
	\appendix
	%%%%%%%%%%%%%%%%%%%%%%%%%%%%%%%%%%%%%%%%%%%%%%%%%%%%%%%%%%%%
	%%%%%%%%%%%%%%%%%%% Hardness Results %%%%%%%%%%%%%%%%%%%%%%%
	%%%%%%%%%%%%%%%%%%%%%%%%%%%%%%%%%%%%%%%%%%%%%%%%%%%%%%%%%%%%
	\FloatBarrier
	\section{Hardness Results}
	\label{app:Hardness}
	In this section, we discuss the computational complexity of \ac{FZA}. In \cref{subsec:APX_stars}, we show \textsf{APX}-hardness of \ac{FZA} when restricted to star graphs, and then, in \cref{subsec:NP_paths}, we show strong \textsf{NP}-hardness of \ac{FZA} when restricted to paths. For both results, we use a reduction from \textsc{Max 2-Sat (3)} (see \cref{def:Max2Sat3}) which is \textsf{APX}-hard due to Ausiello et al.~\cite{Max2SatAPX2003}. Note that in \cref{app:ptas} we sketch a PTAS for \ac{FZA} when restricted to paths. Thus, there is no hope to extend the \textsf{APX}-hardness result presented in \cref{thm:APX-hardness_Stars} to paths. 
	
	\begin{definition}[\textsc{Max 2-Sat (3)}]
		\label{def:Max2Sat3}
		Input: A Boolean 2-CNF formula $\varphi$ with variables $x_1,\ldots,x_n$ and clauses $\clause_1,\ldots,\clause_m$ such that each variable appears in at most three clauses.
		\newline
		Objective: Find the maximum integer $y$ and a variable assignment $\mathfrak{J}: \{x_1, \dots, x_n\} \to \{\texttt{TRUE},\texttt{FALSE}\}$ such that $\mathfrak{J}$ satisfies $y$ clauses of $\varphi$.
	\end{definition}
	
	We remark that Otto and Boysen \cite{Otto2017} show \textsf{NP}-hardness on star graphs for a closely related problem, however, their reduction heavily relies on the possibility to restrict the maximum number of zones by an arbitrary positive integer. As this is not possible in \ac{FZA}, it is unclear how to adapt the construction from \cite{Otto2017}.
	
	\subsection{\textsf{APX}-hardness on Star Graphs}
	\label{subsec:APX_stars}
	
	We first prove \textsf{APX}-hardness on star graphs. We first describe our construction and then show that it is approximation-preserving.
	
	\begin{restatable}{theorem}{APXhardnessStars}
		\label{thm:APX-hardness_Stars}
		\ac{FZA} is \textsf{APX}-hard when restricted to instances on star graphs with maximum congestion 3.
	\end{restatable}
	
	\begin{proof}
		Given a Boolean formula $\varphi$ with variables $x_1,...,x_n$, clauses $\clause_1,...,\clause_m$ (consisting of two literals each and with each variable appearing in at most 3 clauses), and a clause-target $y \in \mathbb{Z}_{\geq 0}$, we construct an instance of \ac{FZA} on a star. Figure~\ref{fig:APX_Stars} depicts an example of such an instance which is constructed as follows:
		\begin{itemize}
			\item Fix the pricing function $f(x) = x$.
			\footnote{The specific choice of $f$ is not important for the reduction. One can accommodate many other functions by adjusting the weights accordingly.}
			\item Construct a vertex $v$ serving as the center of the star.
			\item For each variable $x_i$, add two vertices $x_i$ and $\bar{x}_i$ and connect them to $v$.
			\item For each variable $x_i$, add a commodity $i$ with a path from $x_i$ to $\bar{x}_i$, budget $\ub_i = 1$ and $\weight_i = 6$. We call these commodities \textit{variable commodities}.
			\item For each clause $\clause_j = \{l_j^{(1)}, l_j^{(2)}\}$, add three parallel \textit{clause commodities} $\ell_j^{(1)}$, $\ell_j^{(2)}$ and $\ell_j^{(3)}$, all with $\ub_{\ell_j^{(i)}} = w_{\ell_j^{(i)}} = 1$. The path of commodity $\ell_j^{(1)}$ connects $v$ to $l_j^{(1)}$, and the path of commodity $\ell_j^{(2)}$ connects $v$ to $l_j^{(2)}$. Finally, commodity $\ell_j^{(3)}$ connects $l_j^{(1)}$ to $l_j^{(2)}$.
			\item We define the target revenue for our \ac{FZA}-instance as $z = 6n + 2y$.
		\end{itemize}
		
		\begin{figure}[tbp]
			\centering
			\resizebox{!}{.3\textwidth}{
				\resizebox{.85\textwidth}{!}{
	\begin{tikzpicture}[
		every node/.style={draw=black, circle, fill=none, minimum size=.4cm, inner sep=0cm, outer sep=0cm, font = \footnotesize, opacity = 0.9}, scale = 1
		]
		
		\pgfmathsetmacro{\singleoffset}{0.3}%{.25}%{0.55} % Einfacher Offset
		\pgfmathsetmacro{\doubleoffset}{0.45}%{.4}%{0.75} % Doppelter Offset
		\pgfmathsetmacro{\tripleoffset}{.6}%{.55}%{.95} % Dreifacher Offset
		\pgfmathsetmacro{\quadrupleoffset}{.75}%{.7}%{1.15} % NEU: Vierfacher Offset
		\pgfmathsetmacro{\quintupleoffset}{.9}%{.85}%{1.35} % NEU: Vierfacher Offset
		\pgfmathsetmacro{\sextupleoffset}{1.05}%{1.05}%{1.55} % NEU: Vierfacher Offset
		
		\begin{pgfonlayer}{fg}    % select the background layer
			\node[fill = white, inner sep= .7 5pt] (center) at (0,0) {\small $v$};
			\node[fill = white, inner sep= .7 5pt, above = 1.4cm of center] (x1) {\scriptsize $\varvertex_1$};
			\node[fill = white, inner sep= .7 5pt, right = 1.4cm of center] (barx1) {\scriptsize $\bar{\varvertex}_1$};
			\node[fill = white, inner sep= .7 5pt, below = 1.4cm of center] (x2) {\scriptsize $\varvertex_2$};
			\node[fill = white, inner sep= .7 5pt, left = 1.4cm of center] (barx2) {\scriptsize $\bar{\varvertex}_2$};
			% Graph edges
			\path[draw, thick] (x1) -- (center) -- (barx1);
			\path[draw, thick] (x2) -- (center) -- (barx2);
			% Blue
			\path[very thick, draw = rwthblue, dashed] 
			($ (x1) + (\singleoffset, 0) $) 
			-- ($ (center) + (\singleoffset, \singleoffset) $)
			-- ($ (barx1) + (0, \singleoffset) $); 
			\path[very thick, draw = rwthblue, dashed] 
			($ (x2) + (-\singleoffset, 0) $) 
			-- ($ (center) + (-\singleoffset, -\singleoffset) $)
			-- ($ (barx2) + (0, -\singleoffset) $);
		\end{pgfonlayer}
		% Lightblue
		% Long
		\draw[thick, color = rwthlightblue, line width=1.8pt]
		($ (barx2) + (0, \doubleoffset) $)
		-- ($ (center) + (-\doubleoffset, \doubleoffset) $)
		-- ($ (x1) + (-\doubleoffset, 0) $);
		% short top
		\draw[thick, color = rwthlightblue, line width=1.8pt]
		-- ($ (barx2) + (0, \singleoffset) $)
		-- ($ (center) + (-\singleoffset, \singleoffset) $);
		% Short left
		\draw[thick, color = rwthlightblue, line width=1.8pt]($ (center) + (-\tripleoffset, \tripleoffset) $)
		-- ($ (x1) + (-\tripleoffset, 0) $);
		% Orange
		% Long
		\draw[thick, color = rwthmagenta, line width=1.8pt]
		($ (barx2) + (0, -\doubleoffset) $)
		-- ($ (center) + (0, -\doubleoffset) $)
		-- ($ (barx1) + (0, -\doubleoffset) $);
		% short left
		\draw[thick, color = rwthmagenta, line width=1.8pt]
		-- ($ (center) + (0, -\singleoffset) $)
		-- ($ (barx1) + (0, -\singleoffset) $);
		% short right
		\draw[thick, color = rwthmagenta, line width=1.8pt]
		($ (barx2) + (0, -\tripleoffset) $)
		-- ($ (center) + (0, -\tripleoffset) $);
	\end{tikzpicture}
}
			}
			\caption{The construction used in the proof of Theorem~\ref{thm:APX-hardness_Stars} for $\phi = \textcolor{rwthlightblue}{(x_1 \lor \bar{x}_2)} \land \textcolor{rwthmagenta}{(\bar{x}_1 \lor \bar{x}_2)}$. Dashed lines indicate \textit{variable commodities}. The remaining commodities represent \textit{clause commodities}.}
			\label{fig:APX_Stars}
		\end{figure}
		
		We now show that the constructed \ac{FZA}-instance admits a solution $\cutset$ that cuts precisely one of the edges $\{x_i,v\}$ and $\{\bar{x}_i,v\}$ for each $i \in [n]$, and the revenue generated by $\cutset$ reaches the target $z$ if and only if the variable assignment 
		\[
		\mathfrak{J}\colon x_i \mapsto 
		\begin{cases}
			\texttt{TRUE}, &\text{ if } \{x_i, v\} \in \cutset,\\
			\texttt{FALSE}, &\text{ if } \{\bar{x}_i, v\} \in \cutset
		\end{cases}
		\]    
		satisfies at least $y$ clauses of $\phi$.
		First, we observe the following.
		\begin{observation}\label{obs:StarReduction_ClauseRevenue}
			For each clause $\clause_j = \{l_j^{(1)}, l_j^{(2)}\}$ of $\phi$, the three clause commodities $\ell_j^{(1)}$, $\ell_j^{(2)}$ and $\ell_j^{(3)}$  corresponding to $\clause_j$ yield a total revenue of 2 if at least one of the edges $\{l_j^{(1)}, v\}$ and $\{l_j^{(2)}, v\}$ is cut by some solution $F$, otherwise $F$ obtains no revenue from $\ell_j^{(1)}$, $\ell_j^{(2)}$ and $\ell_j^{(3)}$.
		\end{observation}
		\begin{proposition}
			\label{prop:StarReduction_ValidAssignment}
			Given a solution $F$ to the constructed \ac{FZA}-instance, one can (in polynomial time) construct a solution $F'$ with $\rev{}(F') \geq \rev{}(F)$ such that $F'$ places exactly one cut on every variable commodity.
		\end{proposition}
		\begin{proof}
			Assume for contradiction that there exists a variable commodity $i$ corresponding to variable $x_i$ such that 
			\begin{enumerate}
				\item $\cutset \cap P_i = \emptyset$, or \label{enum:HardnessStars1}
				\item $\cutset \cap P_i = P_i$. \label{enum:HardnessStars2}
			\end{enumerate}
			If there is no such $i$, the original solution $F$ already has the desired property. 
			
			If there exists an $i$ that satisfies the first condition, consider the set $F' = \cutset \cup \{v, x_i\}$.
			It holds that $\rev{i}(F') = \rev{i}(\cutset) + 6$. Now, consider some clause $\clause_j$ featuring $x_i$ and the three commodities $\ell_j^{(1)}, \ell_j^{(2)}$, and $\ell_j^{(3)}$ representing $\clause_j$. By Observation~\ref{obs:StarReduction_ClauseRevenue},   
			\[
			\sum_{i=1}^3 \rev{\ell_j^{(i)}}(F') \geq \sum_{i=1}^3 \rev{\ell_j^{(i)}}(\cutset).
			\]
			Thus $\rev{}(F') \geq \rev{}(F)$ and we iterate the procedure with $F'$.
			
			Now, assume that the second condition holds for some variable $i$. Consider the solution $\cutset' = \cutset \setminus \{x_i, v\}$. It holds that $\rev{i}(\cutset') = \rev{i}(\cutset) + 6$. 
			Furthermore, for any clause $\clause_j$ featuring $x_i$,
			\[
			\sum_{i=1}^3 \rev{\ell_j^{(i)}}(\cutset') \geq \left(\sum_{i=1}^3 \rev{\ell_j^{(i)}}(\cutset)\right)-2,
			\]
			by Observation~\ref{obs:StarReduction_ClauseRevenue}. Since $x_i$ is contained in at most three clauses, $\cutset'$ fulfills $\rev{}(F') \geq \rev{}(F)$ and we iterate the procedure with $F'$.
		\end{proof}
		Given a solution $\cutset$ to the constructed \ac{FZA}-instance that cuts exactly one of the edges $\{x_i,v\}$ and $\{\bar{x}_i,v\}$ for every $i \in [n]$, it is easy to see that the variable assignment $\mathfrak{J}$ (defined above) is feasible. To see that $\mathfrak{J}$ satisfies at least $y$ clauses whenever $\cutset$ generates a revenue of at least $6n + 2y$, note that the variable commodities always contribute $6n$ to $\rev{}(F)$. Further, by Observation~\ref{obs:StarReduction_ClauseRevenue}, every triple $\ell_j^{(1)}, \ell_j^{(2)}, \ell_j^{(3)}$ of clause commodities generates a revenue of 2 if and only if $F$ contains at least one of the edges $\{v,l_j^{(1)}\}$ and $\{v,l_j^{(2)}\}$. Otherwise, the clause commodities corresponding to clause $\clause_j$ yield no revenue at all. As a result, there is a direct correspondence between every $\clause_j$ that is satisfied by $\mathfrak{J}$ and every triple of clause commodities contributing a revenue of 2 under $F$. Any other clause commodity triple (corresponding to an unsatisfied clause) contributes no revenue under $F$.
		
		We now show that the reduction is approximation-preserving. To this end, we begin by showing the following proposition.
		
		\begin{proposition}
			\label{prop:Max2SatParameters}
			Every \textsc{Max 2-Sat (3)} instance with $n$ variables, $m$ clauses, and non-trivial target $y$ satisfies $m \geq \frac{n}{2}$ and $y \geq \frac{1}{2}m$.
		\end{proposition}
		\begin{proof}
			The first bound builds on the facts that every variable must appear in at least one clause and that every clause consists of 2 literals. Hence, there must be at least $\lceil \frac{n}{2} \rceil$ clauses.
			The second bound can be seen by observing that the variable assignment 
			\[\mathfrak{J}'\colon x_i \mapsto 
			\begin{cases}
				\texttt{TRUE}, &\text{ if } x_i \text{ appears in more clauses than } \bar{x}_i,\\
				\texttt{FALSE}, &\text{ otherwise. }
			\end{cases}
			\]
			always satisfies $\frac{m}{2}$ clauses. This is because $\mathfrak{J}'$ satisfies at least half of the clauses where $x_i$ or $\bar{x}_i$ appear, for every $i \in [n]$.
		\end{proof}
		
		Let $\varepsilon'$ be some constant such that there does not exist a $(1-\varepsilon')$-approximation for \textsc{Max 2-Sat (3)}. Such a constant exists by the \textsf{APX}-hardness of \textsc{Max 2-Sat (3)}. Suppose for contradiction that there exists a $(1-\varepsilon)$-approximation for \ac{FZA} for $\varepsilon \coloneqq \tfrac{\varepsilon'}{13}$.
		Let $\varphi$ be some instance of \textsc{Max 2-Sat (3)}, and let $\mathcal{I}$ be the instance of \ac{FZA} that is constructed from $\varphi$ using the reduction described above. We denote as $F^*$ an optimal solution of $\mathcal{I}$ and by $F$ the solution computed by the assumed $(1-\varepsilon)$-approximation algorithm for \ac{FZA}. Further, we write $\OPT$ and $\ALG$ for the revenue generated by $F^*$ and $F$, respectively.
		
		By \cref{prop:StarReduction_ValidAssignment}, we can assume w.l.o.g that both $F^*$ and $F$ place exactly one cut on each variable commodity. Hence, we can write
		\begin{align}
			\label{eq:HardnessStarsRevenueOPT}
			\OPT &= 6n + 2y',\\
			\ALG &= 6n + 2y.
		\end{align}
		where $y'$ and $y$ are the number of clauses satisfied by the variable assignments corresponding to $\cutset^*$ and $\cutset$, respectively.
		Applying \cref{prop:Max2SatParameters} to Equation~(\ref{eq:HardnessStarsRevenueOPT}), we get 
		\begin{equation}
			\label{eq:HardnessStarsObjectiveBound}
			\OPT = 6n + 2y' \leq 6(2m) + 2y' = 12m + 2y' \leq 12(2y') + 2y' = 26y'.
		\end{equation}
		
		As we assumed $F$ to be a $(1-\varepsilon)$-approximate solution, it holds that
		\[
		\frac{\ALG}{\OPT} = \frac{6n+2y}{6n+2y'} = 1 - \frac{2y'-2y}{6n + 2y'} \geq 1-\varepsilon.
		\]
		We rearrange terms and apply \cref{eq:HardnessStarsObjectiveBound} to obtain
		\[
		2(y' - y) \leq \varepsilon(6n+2y') \overset{(\ref{eq:HardnessStarsObjectiveBound})}{\leq} 26y'\cdot \varepsilon = 2y' \cdot \varepsilon'.
		\]
		Thus, $y' - y \leq y'\cdot \varepsilon'$, or equivalently $\tfrac{y}{y'} \geq (1-\varepsilon')$. Thus, the variable assignment for the formula $\varphi$ induced by $F$ is a $(1-\varepsilon')$ approximation for \textsc{Max 2-Sat (3)} on this instance. As $\varphi$ was chosen arbitrarily, this contradicts the assumption that there does not exist a $(1-\varepsilon')$-approximation for \textsc{Max 2-Sat (3)}, which concludes the proof of \cref{thm:APX-hardness_Stars}.
	\end{proof}
	
	\subsection{Strong \textsf{NP}-hardness on Paths}
	\label{subsec:NP_paths}
	We conclude this section on the complexity of \ac{FZA} by proving the following theorem, again using a reduction from \textsc{Max 2-Sat (3)}.
	\begin{restatable}{theorem}{NPhardnessPaths}
		\label{thm:NP-hardness_Paths}
		\ac{FZA} is strongly \textsf{NP}-hard even when restricted to paths.
	\end{restatable}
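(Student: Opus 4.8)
The plan is to reduce from \textsc{Max 2-Sat (3)}, which is \NP-hard, by a polynomial-time many-one reduction in which all weights and upper bounds are bounded by a polynomial in $n+m$; this immediately gives \emph{strong} \NP-hardness. Throughout I fix the non-decreasing concave pricing function $f(x)=\min\{x,1\}$ (adding a base revenue does not affect the argument). The reduction is the path-analogue of the star reduction behind \cref{thm:APX-hardness_Stars}, but on a path a commodity is forced to be an interval, so a clause cannot simply be realized by a two-edge path joining the two relevant variable edges; instead a clause is realized by a long interval, and the construction is arranged so that this interval counts a fixed, assignment-independent amount of ``junk'' from the variable gadgets it passes over. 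Two features of \ac{FZA} are exploited: first, that a commodity with a sufficiently large upper bound contributes the same revenue whether it is crossed by one or by two ``useful'' cuts (flatness/concavity of $f$), which is what lets every satisfied clause be worth the same; second, that an upper bound can be used to forbid a cut pattern that is ``off by one''.

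\emph{Variable blocks.} For each variable $x_j$ I reserve a contiguous block $B_j$ of the path, consisting of a chain of edges $g^{(j)}_1,\dots,g^{(j)}_{L_j}$, and for every $s$ I add a commodity with path $\{g^{(j)}_s,g^{(j)}_{s+1}\}$, upper bound $\ub=1$, and weight $W\coloneqq m+1$. Such a commodity collects revenue $W$ iff exactly one of its two edges is cut, so any solution in which the edges of $B_j$ are not cut in an alternating fashion loses at least $W>m$. Hence $B_j$ has exactly two ``canonical'' cut patterns, which I identify with $x_j=\mathrm{true}$ and $x_j=\mathrm{false}$. The block is designed so that (a) both canonical patterns cut the same number of edges of $B_j$, so that any interval containing $B_j$ entirely sees a fixed number of cuts independent of the assignment, and (b) for each of the at most three clauses containing $x_j$ there is a ``reading position'' inside $B_j$ such that the number of canonical cuts on one side of it equals a fixed constant plus $[\ell\text{ is falsified}]$, where $\ell$ is the literal of that clause on $x_j$.

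\emph{Clause commodities and the threshold.} Place the blocks in some order on the path. For a clause $C=(\ell_1\vee\ell_2)$ on $x_j,x_{j'}$ with $j<j'$, add one commodity of weight $1$ whose interval runs from the reading position of $\ell_1$ in $B_j$ to the reading position of $\ell_2$ in $B_{j'}$, with upper bound $\ub_C\coloneqq J_C+1$, where $J_C$ is the (fixed) total number of cuts contributed on this interval by the fully contained blocks plus the two constants from the reading positions. Under any canonical solution the number of cuts on this interval is exactly $J_C+[\ell_1\text{ falsified}]+[\ell_2\text{ falsified}]$, which lies in $\{J_C,J_C+1,J_C+2\}$, so with $f=\min\{x,1\}$ (and the blocks chosen large enough that $J_C\ge 1$) the commodity yields revenue $1$ iff $C$ is satisfied and $0$ otherwise. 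Thus a canonical solution corresponding to an assignment $\beta$ has total revenue $W_{\mathrm{tot}}+(\#\text{clauses satisfied by }\beta)$, where $W_{\mathrm{tot}}$ is the total weight of all chain commodities. To close the argument I would show that an optimal solution may be taken canonical: any block not in a canonical pattern has a broken chain commodity, losing more than $W>m$, whereas repairing that block to a canonical pattern touches only clause commodities (each clause interval changes by a bounded amount) and the total clause weight is $m$; iterating over blocks yields a canonical solution of no smaller revenue. Hence $\OPT=W_{\mathrm{tot}}+\OPT(\textsc{Max 2-Sat (3)})$, so at least $y$ clauses are satisfiable iff \ac{FZA} has a solution of value at least $W_{\mathrm{tot}}+y$; all numbers are polynomial in $n+m$, giving strong \NP-hardness.

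The step I expect to be the main obstacle is realizing a variable block that satisfies (a) and (b) simultaneously. A single alternating chain naturally exposes a literal only in \emph{one} polarity: a reading position inside it detects ``$x_j$ is false'' (via one extra cut on one side) but not ``$x_j$ is true'', because which parity class carries the extra edge is fixed by a single global labeling; and forcing the block to cut the same number of edges in both canonical states pins the chain length to be even, which kills the reading positions that would have the opposite polarity. The fix is to split each block into two linked sub-chains with opposite encodings — equivalently, to introduce for every $x_j$ a companion ``$\neg x_j$'' block glued to $B_j$ by a short parity-flipping gadget (a couple of unit-bound commodities on consecutive edges), after which every literal occurrence may be treated as positive — together with a few dummy edges inserted to restore the fixed-count property. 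The other point requiring care is the exchange argument: one must verify that repairing one block to a canonical pattern never disturbs the canonical structure of another block (blocks are edge-disjoint and chain commodities never cross block boundaries) and never loses more clause revenue than the chain revenue it regains, which is exactly why $W$ is chosen strictly larger than the total clause weight.
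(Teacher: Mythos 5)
Your reduction follows the same blueprint as the paper's: reduce from \textsc{Max 2-Sat (3)}, build heavy variable gadgets admitting exactly two canonical cut patterns of equal size, provide one gadget region per literal polarity, keep the pass-through count of every intermediate gadget fixed, and let a clause commodity drop out exactly when both of its literals are falsified. Your clause gadget is in fact cleaner than the paper's: with $f(x)=\min\{x,1\}$ a single weight-$1$ commodity with budget $J_C+1$ suffices, whereas the paper uses $f(x)=x+1$ and needs two clause commodities with tuned fractional weights to make ``one literal true'' and ``both literals true'' worth the same. The exchange argument and the choice $W=m+1$ are also fine.

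The gap is in the variable gadget, exactly at the point you flag, and your proposed fix does not close it. Write $F_T,F_F$ for the two canonical cut sets of a variable gadget and consider the walk $\Delta(p)=|F_T\cap \mathrm{prefix}_p|-|F_F\cap \mathrm{prefix}_p|$ along its edges. Your condition (a) forces $\Delta$ to start and end at $0$; since a clause reads its left literal via a \emph{suffix} of that gadget and its right literal via a \emph{prefix}, and $\Delta(\mathrm{suffix})=-\Delta(\mathrm{complementary\ prefix})$, condition (b) for both literal polarities is equivalent to the single requirement that the walk attain both $+1$ and $-1$. This is impossible for any gadget built from your toolkit: if every pair of consecutive edges (inside the chains, across the glue, and around the dummy edges) is covered by a $2$-edge unit-budget commodity of weight $W$, then in each canonical state consecutive edges have opposite cut status throughout the gadget, so $F_T$ and $F_F$ are the two global parity classes, every edge contributes a step of $\pm1$ to $\Delta$ with strictly alternating signs, and the walk is confined to $\{0,s\}$ for a single sign $s$ -- only one polarity is ever readable, no matter how you split into sub-chains, flip parities, or insert dummies (and if some consecutive pair is left uncovered, the two sides decouple and you no longer have exactly two canonical patterns). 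Adding the companion $\neg x_j$ block only swaps which polarity is readable from which end; it does not make both polarities readable on the side a given clause approaches from. The paper escapes precisely by using a gadget whose two optimal patterns are $(0,3,0)$ and $(1,1,1)$ on a $5$-edge block: same total, but the pattern contains three consecutive cuts (which your chain commodities forbid), so the difference walk goes $0,-1,0,0,+1,0$ and the literal is read off a single boundary edge with the same polarity from either side. You would need a gadget of that kind -- enforced by commodities with budgets $\geq 2$ -- rather than parity-linked alternating chains; as written, the reduction does not go through.
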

	
	\begin{proof}
		Let $\varphi$ be a Boolean formula with variables $x_1,...,x_n$, clauses $\clause_1,...,\clause_m$ (consisting of two literals each and with each variable appearing in at most 3 clauses), and a clause-target $y \in \mathbb{Z}_{\geq 0}$. Further, let $M$ be a large constant strictly greater than $m$. We aim to construct an instance of \ac{FZA} that admits a revenue of $29 M n + y$ if and only if there exists a variable assignment that satisfies at least $y$ clauses of $\phi$. 
		
		We fix the pricing function to $f(x) = x$. We first describe a gadget modeling a variable assignment. Then, we present a gadget that ensures the validity of a variable assignment induced by an optimal solution in the constructed \ac{FZA}-instance. 
		Finally, we present how clauses from a \textsc{Max 2-Sat} can be mapped to commodities in our construction. These three steps are illustrated in Figures~\ref{fig:PathReduction-VarAss}, \ref{fig:PathReduction-FeasAss}, and \ref{fig:PathReduction-FullExample}, respectively.
		
		\paragraph{Variable-assignment gadget.}
		Consider a literal $l = x_i$ or $l = \bar{x_i}$. The variable-assignment gadget for $l$ consists of a path on 5 edges. We subdivide these edges into three blocks \blockA, \blockB, \blockC \ consisting of one edge, three edges, and one edge, respectively. We introduce the following commodities:
		\begin{itemize}
			\item Commodity $a^{(1)}$ spanning block \blockA\ and \blockB\ with $\ub_{a^{(1)}} = 2$ and $\weight_{a^{(1)}} = M$.
			\item Commodity $a^{(2)}$ spanning block \blockB\ and \blockC\ with $\ub_{a^{(2)}} = 2$ and $\weight_{a^{(2)}} = M$.
			\item Commodity $a^{(3)}$ spanning only block \blockB\ with  $\ub_{a^{(3)}} = 3$ and $\weight_{a^{(3)}} = 4M$.
			\item Commodity $a^{(4)}$ spanning only block \blockB\ with  $\ub_{a^{(4)}} = 1$ and $\weight_{a^{(4)}} = 4M$.
			\item Commodity $a^{(5)}$ spanning only block \blockA\ with  $\ub_{a^{(5)}} = 1$ and $\weight_{a^{(5)}} = M$.
			\item Commodity $a^{(6)}$ spanning only block \blockC\ with  $\ub_{a^{(6)}} = 1$ and $\weight_{a^{(6)}} = M$.
		\end{itemize}
		\begin{figure}[tbph]
			\centering        \begin{tikzpicture}[scale = 0.8,
	vertex/.style={draw, shape = circle, inner sep = 0pt, minimum size = .15cm, fill = black}
	]
	\draw[lightgray, thick, dashed] (2, -0.5) -- (2, 5);
	\draw[lightgray, thick, dashed] (8, -0.5) -- (8, 5);

    \node[draw = none, shape = circle, outer sep = .15cm, inner sep = 0pt, minimum size = .35cm] (ldel) at (-1,0) {\small$\dots$};
	\node[vertex] (1) at (0,0) {};%{$v_1$};
	\node[vertex] (2) at (2,0) {};%{$v_2$};
	\node[vertex] (3) at (4,0) {};%{$v_3$};
	\node[vertex] (4) at (6,0) {};%{$v_4$};
	\node[vertex] (5) at (8,0) {};%{$v_5$};
	\node[vertex] (6) at (10,0) {};%{$v_6$};
    \node[draw = none, shape = circle, outer sep = .15cm, inner sep = 0pt, minimum size = .35cm] (rdel) at (11,0) {\small$\dots$};
	
	\draw (ldel) -- (1) -- (2) -- (3) -- (4) -- (5) -- (6) -- (rdel);

    % a^{(6)}
	\draw[thick] (6, .75) -- node[pos = 0.3, above,draw=none]{$(\textcolor{rwthblue}{1}, \textcolor{rwthmagenta}{M})$} (10,.75);
    % a^{(5)}
	\draw[thick] (0, .75) -- node[pos = 0.7, above,draw=none]{$(\textcolor{rwthblue}{1}, \textcolor{rwthmagenta}{M})$} (4,.75);
    % a^{(4)}
	\draw[thick] (2, 1.5) -- node[pos = 0.85, above,draw=none]{$(\textcolor{rwthblue}{1}, \textcolor{rwthmagenta}{4M})$} (8,1.5);
    % a^{(3)}
	\draw[thick] (2, 2.25) -- node[pos = 0.15, above,draw=none]{$(\textcolor{rwthblue}{3}, \textcolor{rwthmagenta}{4M})$} (8,2.25);
    % a^{(2)}
	\draw[thick] (0, 3) -- node[pos = 0.125, above,draw=none]{$(\textcolor{rwthblue}{2}, \textcolor{rwthmagenta}{M})$} (8,3);
    % a^{(1)}
	\draw[thick] (2, 3.75) -- node[pos = 0.875,above,draw=none]{$(\textcolor{rwthblue}{2}, \textcolor{rwthmagenta}{M})$} (10,3.75);
	
	\node[draw=none] (A) at (0.75, 4.9) {\blockA};
	\node[draw=none] (B) at (5, 4.9) {\blockB};
	\node[draw=none] (C) at (9.25, 4.9) {\blockC};
\end{tikzpicture}
			\caption{An illustration of the variable-assignment gadget corresponding to literal $l$ used in the proof of Theorem~\ref{thm:NP-hardness_Paths}. The parameters of each depicted commodity $a^{(i)}$ are given as $(\ub_{a^{(i)}}, \weight_{a^{(i)}})$.}
			\label{fig:PathReduction-VarAss}
		\end{figure}
		\begin{observation}
			There are only two ways to obtain the maximum profit of $14M$ from the variable-assignment gadget corresponding to literal $l$, namely:
			\begin{enumerate}
				\item Three cuts in block \blockB\ and no cuts in blocks \blockA\ and \blockC\
				\item One cut in each block \blockA, \blockB, and \blockC.
			\end{enumerate}
		\end{observation}
		
		For now, we assume that $M$ is sufficiently large such that each subpath corresponding to a variable-assignment gadget contains exactly $3$ cuts. We give the details on how to choose $M$ later. 
		In the following, we interpret literal $l$ as $\texttt{TRUE}$ if we choose the cut pattern $(0 - 3 - 0)$ w.\,r.\,t. the blocks \blockA, \blockB, and \blockC\, and otherwise, \ie with pattern $(1 - 1 - 1)$, we interpret $l$ as \texttt{FALSE}.
		
		\begin{figure}[tbph]
			\centering        \resizebox{\textwidth}{!}{
    \begin{tikzpicture}[scale = 0.7,
    		commodityLabel/.style = {
    			draw=none, fill = white, inner sep = 0pt, minimum height = .4cm, yshift = .7mm
    		}
    	]
    %	\draw[thick] (0,0) -- (0,3) -- (18,3) -- (18,0) -- (0,0);
    	\draw[thick] (0,0) -- (0,3);
    	\draw[thick] (3,0) -- (3,3);
    	\draw[thick] (6,0) -- (6,3);
    	\draw[thick] (9,0) -- (9,3);
    	\draw[thick] (12,0) -- (12,3);
    	\draw[thick] (15,0) -- (15,3);
    	\draw[thick] (18,0) -- (18,3);
        % invisible border for symmetry
        %\path[thick] (-1,0) -- (-1,3);
        % dashed grey borders
    	\draw[thick, dashed, lightgray] (1,0) -- (1,3);
    	\draw[thick, dashed, lightgray] (2,0) -- (2,3);
    	\draw[thick, dashed, lightgray] (4,0) -- (4,3);
    	\draw[thick, dashed, lightgray] (5,0) -- (5,3);
    	\draw[thick, dashed, lightgray] (7,0) -- (7,3);
    	\draw[thick, dashed, lightgray] (8,0) -- (8,3);
    	\draw[thick, dashed, lightgray] (10,0) -- (10,3);
    	\draw[thick, dashed, lightgray] (11,0) -- (11,3);
    	\draw[thick, dashed, lightgray] (13,0) -- (13,3);
    	\draw[thick, dashed, lightgray] (14,0) -- (14,3);
    	\draw[thick, dashed, lightgray] (16,0) -- (16,3);
    	\draw[thick, dashed, lightgray] (17,0) -- (17,3);
        \draw[thick, dashed, lightgray] (19,0) -- (19,3);
        %\draw[thick, dashed, lightgray] (20,0) -- (20,3);
    	\draw[very thick] (2,.5) -- node[above, commodityLabel]{$(\textcolor{rwthblue}{1}, \textcolor{rwthmagenta}{M})$} (4,.5);
    	\draw[very thick] (8,.5) -- node[above, commodityLabel]{$(\textcolor{rwthblue}{1}, \textcolor{rwthmagenta}{M})$} (10,.5);
    	\draw[very thick] (14,.5) -- node[above, commodityLabel]{$(\textcolor{rwthblue}{1}, \textcolor{rwthmagenta}{M})$} (16,.5);
    	
    	\node[draw=none, inner sep = 0cm] at (1.5, -0.5) {$x_1$};
    	\node[draw=none, inner sep = 0cm] at (4.5, -0.5) {$\bar{x_1}$};
    	\node[draw=none, inner sep = 0cm] at (7.5, -0.5) {$x_2$};
    	\node[draw=none, inner sep = 0cm] at (10.5, -0.5) {$\bar{x_2}$};
    	\node[draw=none, inner sep = 0cm] at (13.5, -0.5) {$x_3$};
    	\node[draw=none, inner sep = 0cm] at (16.5, -0.5) {$\bar{x_3}$};
        \node[draw=none, inner sep = 0cm] at (18.5, -0.5) {$\dots$};
    	\node[draw=none] at (5.5, 2.5) {\small \tikzdots};
    	\node[draw=none] at (6.5, 2.5) {\small \blockA};
    	\node[draw=none] at (7.5, 2.5) {\small \blockB};
    	\node[draw=none] at (8.5, 2.5) {\small \blockC};
    	\node[draw=none] at (9.5, 2.5) {\small \blockA};
    	\node[draw=none] at (10.5, 2.5) {\small \blockB};
    	\node[draw=none] at (11.5, 2.5) {\small \blockC};
    	\node[draw=none] at (12.5, 2.5) {\small \tikzdots};
    \end{tikzpicture}
}
			\caption{Depicted are the commodities enforcing a valid assignment in the proof of Theorem~\ref{thm:NP-hardness_Paths}. The parameters of each depicted commodity $i$ are given as $(\ub_{i}, \weight_i)$.}
			\label{fig:PathReduction-FeasAss}
		\end{figure}
		\paragraph{Feasible assignments.}
		Next, we want to ensure that for any variable $x$ appearing in $\phi$, we either set $x$ or its negation $\bar{x}$ to \texttt{TRUE}. Figure~\ref{fig:PathReduction-FeasAss} illustrates the construction described in the following. 
		First, we create one variable-assignment gadget corresponding to $x_i$ and $\bar{x}_i$ for each variable $x_i$, $1 \leq i \leq n$. We introduce these gadgets from left to right beginning with the gadgets representing $x_1$ followed by $\bar{x}_1$, $x_2$, $\bar{x}_2$, \dots, $\bar{x}_n$. The graph structure underlying these gadgets constitutes the entire path of our constructed \ac{FZA}-instance.  
		
		To ensure that $x_i$ is \texttt{TRUE} if and only if $\bar{x}_i$ is \texttt{FALSE}, we introduce a commodity $i$ for each variable $x_i$ which spans block \blockC\ of $x_i$ and block \blockA\ of $\bar{x}_i$ with $\ub_i = 1$ and $\weight_i = M$. These commodities are depicted in Figure~\ref{fig:PathReduction-FeasAss}.
		
		\begin{figure}[tbhp]
			\centering 
			\resizebox{\textwidth}{!}{
				\begin{tikzpicture}[scale = 0.7,
		commodityLabel/.style = {
			draw=none, fill = white, inner sep = 0pt, minimum height = .4cm, yshift = .7mm
		},
		blockMarker/.style = {
			thick, dashed, lightgray
		}
	]
	\draw[blockMarker] (1,0) -- (1,7.5);
	\draw[blockMarker] (2,0) -- (2,7.5);
	\draw[blockMarker] (4,0) -- (4,7.5);
	\draw[blockMarker] (5,0) -- (5,7.5);
	\draw[blockMarker] (7,0) -- (7,7.5);
	\draw[blockMarker] (8,0) -- (8,7.5);
	\draw[blockMarker] (10,0) -- (10,7.5);
	\draw[blockMarker] (11,0) -- (11,7.5);
	\draw[blockMarker] (13,0) -- (13,7.5);
	\draw[blockMarker] (14,0) -- (14,7.5);
	\draw[blockMarker] (16,0) -- (16,7.5);
	\draw[blockMarker] (17,0) -- (17,7.5);
	\begin{pgfonlayer}{bg}
		\draw[] (0,0) -- (0,7.5);
		\draw[] (3,0) -- (3,7.5);
		\draw[] (6,0) -- (6,7.5);
		\draw[] (9,0) -- (9,7.5);
		\draw[] (12,0) -- (12,7.5);
		\draw[] (15,0) -- (15,7.5);
		\draw[] (18,0) -- (18 ,7.5);
	\end{pgfonlayer}
	
\draw[thick] (2,6.5) -- node[above, commodityLabel]{$(\textcolor{rwthblue}{1}, \textcolor{rwthmagenta}{M})$} (4,6.5);
\draw[thick] (8,6.5) -- node[above, commodityLabel]{$(\textcolor{rwthblue}{1}, \textcolor{rwthmagenta}{M})$} (10,6.5);
\draw[thick] (14,6.5) -- node[above, commodityLabel]{$(\textcolor{rwthblue}{1}, \textcolor{rwthmagenta}{M})$} (16,6.5);
\draw[thick] (2, 0.5) -- node[pos =.15, above, commodityLabel]{$(\textcolor{rwthblue}{7}, \textcolor{rwthmagenta}{\tfrac{1}{7}})$} (10, 0.5);
\draw[thick] (2, 1.5) -- node[pos =.15, above, commodityLabel]{$(\textcolor{rwthblue}{6}, \textcolor{rwthmagenta}{\tfrac{1}{42}})$} (10, 1.5);
\draw[thick] (5, 2.5) -- node[pos =.9, above, commodityLabel]{$(\textcolor{rwthblue}{10}, \textcolor{rwthmagenta}{\tfrac{1}{10}})$} (16, 2.5);
\draw[thick] (5, 3.5) -- node[pos =.9, above, commodityLabel]{$(\textcolor{rwthblue}{9}, \textcolor{rwthmagenta}{\tfrac{1}{90}})$} (16, 3.5);
\draw[thick] (8, 4.5) -- node[pos =.2, above, commodityLabel]{$(\textcolor{rwthblue}{4}, \textcolor{rwthmagenta}{\tfrac{1}{4}})$} (13, 4.5);
\draw[thick] (8, 5.5) -- node[pos =.2, above, commodityLabel]{$(\textcolor{rwthblue}{3}, \textcolor{rwthmagenta}{\tfrac{1}{12}})$} (13, 5.5);
	
	\node[draw=none] at (1.5, -0.5) {$x_1$};
	\node[draw=none] at (4.5, -0.5) {$\bar{x_1}$};
	\node[draw=none] at (7.5, -0.5) {$x_2$};
	\node[draw=none] at (10.5, -0.5) {$\bar{x_2}$};
	\node[draw=none] at (13.5, -0.5) {$x_3$};
	\node[draw=none] at (16.5, -0.5) {$\bar{x_3}$};
\end{tikzpicture}
			}
			\caption{\ac{FZA}-instance constructed in the proof of Theorem~\ref{thm:NP-hardness_Paths} for $\varphi = (x_1 \vee \bar{x}_2) \wedge (\bar{x}_1 \vee \bar{x}_3) \wedge (x_2 \vee x_3)$. The parameters of each depicted commodity $i$ are given as $(\ub_{i}, \weight_i)$.}
			\label{fig:PathReduction-FullExample}
		\end{figure}
		
		\paragraph{Clause commodities.}
		In a final step, we add commodities for all the clauses $\clause_j = (l_j^{(1)}, l_j^{(2)})$. We assume, w.l.o.g., that
		the variable-assignment gadget corresponding to $l_j^{(1)}$ is to the left of the variable-assignment gadget corresponding $l_j^{(2)}$. 
		We now add commodities that yield a revenue of $1$ for each satisfied clause and a revenue of $0$ for each unsatisfied clause. Let $B$ be the number of blocks in the constructed graph between the two variable-assignment gadgets corresponding to $l_j^{(1)}$ and $l_j^{(2)}$. We add two commodities $\ell_j^{(1)}$ and $\ell_j^{(2)}$ for each clause, both spanning from part \blockC\ of the variable-assignment gadget corresponding to $l_j^{(1)}$ up to part \blockA\ of the variable-assignment gadget corresponding to $l_j^{(2)}$. We set $\ub_{\ell_j^{(1)}} = 3B+1$ and	$\weight_{\ell_j^{(1)}} = \tfrac{1}{3B+1}$, as well as $\ub_{\ell_j^{(2)}} = 3B$ and $\weight_{\ell_j^{(2)}} = \tfrac{1}{(3B)(3B+1)}$. We call these commodities \textit{clause commodities} of clause $j$. This construction is illustrated in Figure~\ref{fig:PathReduction-FullExample}.
		
		We now examine the revenue obtainable from these commodities.
		
		\begin{itemize}
			\item If neither of the two literals $l_j^{(1)}, l_j^{(2)}$ of $\clause_j$ is satisfied, the outermost edges of both commodities $\ell_j^{(1)}$ and $\ell_j^{(2)}$ are cut.
			As there are exactly three cuts in every variable-assignment gadget between the gadgets corresponding to $l_j^{(1)}$ and $l_j^{(2)}$, there are $3B+2$ cuts on $\pathGraph_{\ell_j^{(1)}}$ and $\pathGraph_{\ell_j^{(1)}}$, respectively, violating $\ub_{\ell_j^{(1)}}$ and $\ub_{\ell_j^{(2)}}$. As a result, both commodities drop out and yield a revenue of 0.
			\item If one of the two literals in $\clause_j$ is satisfied, exactly one of the outermost edges of $\pathGraph_{\ell_j^{(1)}}$ and $\pathGraph_{\ell_j^{(2)}}$ is cut. Thus, both paths contain $3B+1$ cuts, thereby respecting $\ub_{i_1}$ while violating $\ub_{i_2}$. The associated revenue is then $(3B+1) \weight_{\ell_j^{(1)}} = 1$.
			\item If both literals in $\clause_j$ are satisfied---analogous to the first two cases---$\pathGraph_{\ell_j^{(1)}}$ and $\pathGraph_{\ell_j^{(2)}}$  contain $3B$ cuts. Thus, we obtain revenue for both of them adding up to
			\begin{align*}
				&\hphantom{=}(3B)\bigg(\frac{1}{3B+1} + \frac{1}{(3B)(3B+1)}\bigg) = (3B)\frac{(3B+1)}{(3B)(3B+1)} = 1.
			\end{align*}
		\end{itemize}
		
		We conclude that for sufficiently large $M$, a variable assignment satisfying at least $y$ clauses of $\phi$ corresponds to a set of cuts in the constructed \ac{FZA} instance that yields at least $z = 29 M n + y$ revenue.
		
		Regarding the choice of $M$, we observe that despite every literal $l$ appearing in at most three clauses, the cuts in the variable-assignment gadget corresponding to $l$ can influence all commodities representing clauses. In particular, reducing the number of cuts in the gadget by one could increase the revenue from every pair of clause commodities by 1. Since there are no other commodities spanning the variable-assignment gadget of literal $l$ and its negation, choosing $M > m$ suffices. 
		This concludes the proof of Theorem~\ref{thm:NP-hardness_Paths}.
	\end{proof}
	
	%%%%%%%%%%%%%%%%%%%%%%%%%%%%%%%%%%%%%%%%%%%%%%%%%%%%%%%%%%%%%
	%%%%%%%%%%%%%%%%%%Parameterized Algorithms %%%%%%%%%%%%%%%%%%
	%%%%%%%%%%%%%%%%%%%%%%%%%%%%%%%%%%%%%%%%%%%%%%%%%%%%%%%%%%%%%
	
	\section{Parameterized Algorithms for \textsc{FZA} on Paths}
	\label{app:parameterized}
	In this section, we present parameterized algorithms for \ac{FZA} when restricted to paths. Recall the following definitions from \cref{sec:Introduction}.
	
	\begin{definition}[Parameters for \ac{FZA}]
		Let $\mathcal{I}$ be an instance of \ac{FZA}.
		\begin{enumerate}[a)]
			\item $\ub_{\max} \coloneqq \max_{i\in[k]} \ub_i$ is the maximum budget of any single commodity. 
			\item $|P|_{\max}  \coloneqq \max_{i\in[k]} |P_i|$ is the maximum path length of any single commodity.
			\item The \emph{congestion} of $\mathcal{I}$ is the maximum number of commodities passing through any single edge
			\[\operatorname{cong}(\mathcal{I}) \coloneqq \max_{e\in E}|\{i \in [k] \ | \ e\in P_i\}|.\]
		\end{enumerate}
	\end{definition}
	
	W.l.o.g., we may assume that $u_i \leq |P_i|$ for all commodities $i \in [k]$, and thus $u_{\max} \leq |P|_{\max}$. 
	Recall that a problem $\mathcal{P}$ with parameter $\ell$ is in $\XP$ if it can be solved in polynomial time for any fixed $\ell$, i.e., there is an algorithm for $\mathcal{P}$ with running time $\mathcal{O}(n^{f(\ell)})$ for some function $f$. $\mathcal{P}$ is in $\FPT$ with parameter $\ell$ if there is an algorithm for $\mathcal{P}$ with a running time of $\mathcal{O}(f(\ell)\cdot n^c)$ for some constant $c$. Clearly, any problem in $\FPT$ with parameter $\ell$ is also in $\XP$ for that parameter.
	
	In \cref{sec:umax_pmax}, we show that \ac{FZA} on paths is in $\XP$ when parameterized with $u_{\max}$. A very similar dynamic program shows that \ac{FZA} is in $\textsf{FPT}$ when parameterized with $|P|_{\max}$. 
	Subsequently, in \cref{sec:congestion}, we show that when parameterized with $\con$, \ac{FZA} on paths is in $\XP$, and when parameterized with $\max\{\con, u_{\max}\}$ it is in $\FPT$.
	Note that the construction we used to prove \cref{thm:APX-hardness_Stars} is constant in all three parameters introduced above. As a result, unless \textsf{P} $=$ \textsf{NP}, there is no hope to generalize the parameterized algorithms presented in this section to trees. 
	
	For all three dynamic programs described below, choose an arbitrary orientation of the path $T$ and denote its edges from left to right by $e^{(1)}, \ldots, e^{(n-1)}$. Further, we write $e' \prec e$ if $e'$ is to the left of $e$.
	
	\subsection{Dynamic Program for Parameters $u_{\max}$ and $|P|_{\max}$}
	\label{sec:umax_pmax}
	\paragraph{Parameter $u_{\max}$.} Assume that we are given an instance $\mathcal{I}$ of \ac{FZA} on a path $T$. For ease of notation we use $\ell = u_{\max}$. We define the subinstance of $\mathcal{I}$ up to edge $e^{(j)}$ as the subpath $P^{(j)}$ from $e^{(1)}$ to $e^{(j)}$ along with all commodities that share at least one edge with $P^{(j)}$. Further, we restrict the path of every such commodity $i$ to $P_i \cap P^{(j)}$. The remaining parameters $u_i$ and $w_i$ of $i$ remain unchanged.
	
	We compute a table consisting of entries $\DPTable(e^{(j)}, (c_1, \ldots, c_{\ell +1}))$. Each entry denotes the maximum revenue that can be obtained from a solution $F$ to the subinstance of $\mathcal{I}$ up to edge $e^{(j)}$ such that $c_1, \ldots, c_{\ell+1} \in P^{(j)}$ are the $\ell+1$ rightmost cuts in $F$ (with $c_1$ being the leftmost and $c_{\ell+1}$ being the rightmost of these cuts).
	We populate the table entries $\DPTable(e^{(j)}, \cdot)$ for $j = 1, \ldots, n-1$. For a cleaner initialization, we artificially append a path of length $\ell+1$ with edges $e^{(-\ell)}, \ldots, e^{(0)}$ to the left end of $T$. As there are no commodities whose path intersects either of these edges, we can assume that all these artificial edges are cut. Hence, we initialize
	\begin{alignat*}{3}
		&\DPTable(e^{(0)}, (e^{(-\ell)}, \ldots, e^{(0)})) &&= 0, \\
		&\DPTable(e^{(0)}, (c_1,\ldots, c_{\ell+1})) &&= -\infty, \qquad \text{otherwise.}
	\end{alignat*}
	Assume that all table entries for edge $e^{(j-1)}$ have already been computed for some $j \geq 1$. We now compute table entries $\DPTable(e^{(j)}, (c_1, \ldots , c_{\ell+1}))$.
	
	\emph{Case 1: $c_{\ell  +1} \neq e^{(j)}$.} Let $F$ be an optimal solution corresponding to $\DPTable(e^{(j-1)}, (c_1, \ldots , c_{\ell+1}))$. Since the solution corresponding to $\DPTable(e^{(j)}, (c_1, \ldots , c_{\ell+1}))$ cannot cut edge $e^{(j)}$, the cut set $F$ is still an optimal solution up to edge $e^{(j)}$ given that $(c_1,\ldots,c_{\ell+1})$ are its rightmost cuts. 
	Thus, to compute the new entry, it suffices to add the revenue obtained from all commodities whose path begins with edge $e^{(j)}$, that is
	\begin{equation}
		\label{eq:umax_recursive_no_cut}
		\DPTable(e^{(j)}, (c_1,\ldots, c_{\ell+1})) = \DPTable(e^{(j-1)}, (c_1,\ldots, c_{\ell+1})) + \sum_{\substack{i \in [k], \, e^{(j)} \text{ is the} \\ \text{leftmost edge of } P_i}} w_i \cdot f(0).
	\end{equation}
	
	\emph{Case 2: $c_{\ell  +1} = e^{(j)}$.} In this case, a corresponding solution must cut $e^{(j)}$. For all commodities whose path $P_i$ includes $e^{(j)}$, the position of all cuts left of $c_{1}$ is irrelevant, as either
	\begin{enumerate}[a)]
		\item the first cut on $P_i$ is $c_{1}$, or to the right of $c_{1}$, or
		\item all the cuts $c_1, \ldots, c_{\ell+1}$ are on $P_i$, in which case $i$ drops out since $u_i \leq \ell$.
	\end{enumerate}
	Thus, to obtain $\DPTable(e^{(j)}, (c_1, \ldots, c_{\ell}, e^{(j)}))$, we choose a maximum table entry among all $\DPTable(e^{(j-1)}, (c^*, c_1, \ldots, c_{\ell}))$ for $c^*$ left of $c_{1}$, and then factor in the change in revenue from cutting $e^{(j)}$. Thus, we obtain the following recursive formula
	\begin{align}
		\DPTable(e^{(j)}, (c_1,& \ldots, c_{\ell}, e^{(j)})) = \max_{c^* \prec c_{1}} \DPTable(e^{(j-1)}, (c^*, c_1,\ldots, c_{\ell})) \nonumber \\
		&+ \sum_{\substack{i \in [k], \, e^{(j)} \text{ is the} \\ \text{leftmost edge of } P_i}} w_i \cdot f(0) \label{eq:umax_recursive_cut}\\
		&+ \sum_{\substack{i \in [k], \, e^{(j)} \in P_i}} \rev{i}(e^{(j)} \ | \ \{c_1,\ldots,c_\ell\}). \nonumber
	\end{align}
	Here, $\rev{i}(e^{(j)} \ | \ \{c_1,\ldots,c_{\ell}\})$ denotes the marginal gain obtained by commodity $i$ from cutting $e^{(j)}$ given the set of previous cuts $\{c_1,\ldots, c_{\ell}\}$. Given the discussion above, this revenue is well-defined (that is, it does not depend on the position of cuts to the left of $c_{1}$). For $i \in [k]$ with $e^{(j)} \in P_i$, let $z_i = |P_i \cap \{c_1,\ldots,c_{\ell}\}|$. We compute
	\begin{equation}
		\label{eq:marginal_gain_param}
		\rev{i}(e^{(j)} \ | \ \{c_1,\ldots,c_{\ell}\}) = \begin{cases}
			w_i \cdot (f(z_i + 1) - f(z_i)), & z_i < u_i, \\
			-w_i \cdot f(z_i), & z_i = u_i, \\
			0, & z_i > u_i.
		\end{cases}
	\end{equation}
	Let $F$ be an arbitrary solution to the subinstance up to edge $e^{(j)}$ with $\{c_1,\ldots,c_{\ell}\} \subseteq F$. In the first case, $i$ gets served by $F$ and does not drop out upon adding $e^{(j)}$ to $F$. In the second case, $F$ serves $i$ but $F \cup \{e^{(j)}\}$ does not, and in the third case $i$ already dropped out under $F$ and still does so after adding $e^{(j)}$.
	
	Combining the two cases, we obtain the following result.
	\begin{theorem}
		\label{thm:param_umax}
		Let $\mathcal{I}$ be an instance of \ac{FZA} on a path $T$ and let $\ell \coloneqq u_{\max}$. The optimal revenue for $\mathcal{I}$ is given by
		\[\max_{(c_1,\ldots,c_{\ell+1}) \in E^{\ell+1}} \DPTable(e^{(n-1)}, (c_1,\ldots,c_{\ell+1})).\]
		This value can be calculated in time $n^{\mathcal{O}(\ell)}$. Thus, \ac{FZA} on paths is in $\XP$ when parameterized with $u_{\max}$.
	\end{theorem}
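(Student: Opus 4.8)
The plan is to establish, by induction on $j$ from $0$ to $n-1$, that every table entry equals the quantity it is defined to compute, and then to read off both the optimal-value formula and the running-time bound. Concretely, I would prove the invariant that $\DPTable(e^{(j)},(c_1,\dots,c_{\ell+1}))$ equals the maximum, over all cut sets $F\subseteq\{e^{(-\ell)},\dots,e^{(j)}\}$ that contain all $\ell+1$ artificial edges and whose $\ell+1$ rightmost cuts are exactly $c_1\prec\cdots\prec c_{\ell+1}$, of the revenue of $F$ on the subinstance up to $e^{(j)}$, where a commodity whose path has already been fully traversed contributes its final revenue $\rev{i}(F)$ and a partially-traversed commodity $i$ contributes $w_i f(z)$ if the number $z$ of cuts placed so far on $P_i$ satisfies $z\le u_i$, and $0$ otherwise. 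Given this invariant, the formula in the statement is immediate: the artificial edges never intersect any commodity, so cutting all of them is without loss of generality, every such global cut set has a well-defined tuple of $\ell+1$ rightmost cuts up to $e^{(n-1)}$, and taking the maximum of $\DPTable(e^{(n-1)},\cdot)$ over all tuples therefore yields $\OPT$.

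For the inductive step I would track how the booked revenue of a single commodity $i$ evolves as the sweep passes over its path: $w_i f(0)$ is booked when $e^{(j)}$ is the leftmost edge of $P_i$ (the only role of the $\sum w_i f(0)$ terms in both \eqref{eq:umax_recursive_no_cut} and \eqref{eq:umax_recursive_cut}), the increment $w_i\bigl(f(z+1)-f(z)\bigr)$ is booked when the $(z{+}1)$-st cut on $P_i$ is made for $z<u_i$, the correction $-w_if(u_i)$ is booked when the $(u_i{+}1)$-st cut is made, and nothing is booked thereafter; these are precisely the three cases of \eqref{eq:marginal_gain_param}. Telescoping these increments shows that once the rightmost edge of $P_i$ has been processed, the total booked for $i$ equals $\rev{i}(F)$, as the invariant demands. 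The base case $j=0$ holds because the artificial prefix has exactly $\ell+1$ edges, so the only admissible tuple is $(e^{(-\ell)},\dots,e^{(0)})$, with booked revenue $0$, and every other entry is infeasible, i.e.\ $-\infty$. For $j\ge 1$, Case~1 of the recursion is the observation that not cutting $e^{(j)}$ leaves the tuple of rightmost cuts and all previously-booked revenue unchanged, so only the new base terms are added; Case~2 uses that deleting $e^{(j)}$ sets up a bijection between cut sets counted by $\DPTable(e^{(j)},(c_1,\dots,c_\ell,e^{(j)}))$ and pairs consisting of a cut $c^*\prec c_1$ and a cut set counted by $\DPTable(e^{(j-1)},(c^*,c_1,\dots,c_\ell))$, under which the change in booked revenue is exactly the base terms plus $\sum_{i:e^{(j)}\in P_i}\rev{i}(e^{(j)}\ |\ \{c_1,\dots,c_\ell\})$; maximizing over $c^*$ then gives \eqref{eq:umax_recursive_cut}.

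The one step that genuinely uses the parameter -- and the part I expect to be the main obstacle to make rigorous -- is the well-definedness claim stated just before \eqref{eq:marginal_gain_param}: that for a commodity $i$ with $e^{(j)}\in P_i$, the count $z_i=|P_i\cap\{c_1,\dots,c_\ell\}|$, and hence the marginal gain, is independent of the positions of cuts to the left of $c_1$. Here one uses that $T$ is a path, so $P_i$ is an interval of edges: if some edge of $P_i$ strictly left of $e^{(j)}$ were not among $\{c_1,\dots,c_\ell\}$, then by contiguity all of $c_1,\dots,c_\ell$ would lie on $P_i$, so together with $e^{(j)}$ at least $\ell+1>u_{\max}\ge u_i$ cuts would lie on $P_i$ and $i$ would already have dropped out, contributing $0$ regardless; in the complementary case some $c_t\notin P_i$ forces $P_i$ to begin to the right of $c_t$, so $\{c_1,\dots,c_\ell\}$ already contains every cut on $P_i$ left of $e^{(j)}$ and $z_i$ is the true count. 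Everything else is routine telescoping and bookkeeping.

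For the running time, the increasing tuples $(c_1,\dots,c_{\ell+1})$ over the $n-1+(\ell+1)$ real and artificial edges number at most $\binom{n+\ell}{\ell+1}=n^{\mathcal{O}(\ell)}$, so the table has $n\cdot n^{\mathcal{O}(\ell)}=n^{\mathcal{O}(\ell)}$ entries; each entry is computed in $\mathrm{poly}(n)$ time (Case~2 maximizes over $\mathcal{O}(n)$ choices of $c^*$ and, for each of the $\mathcal{O}(k)=\mathcal{O}(n^3)$ commodities through $e^{(j)}$, computes $z_i$ in $\mathcal{O}(\ell)$ time and looks up $\mathcal{O}(1)$ values of $f$), and a final maximum over the $n^{\mathcal{O}(\ell)}$ entries at $e^{(n-1)}$ is taken. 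Hence the total running time is $n^{\mathcal{O}(\ell)}$, placing \ac{FZA} on paths in $\XP$ with parameter $u_{\max}$.
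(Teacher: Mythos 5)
Your overall strategy is the same as the paper's: the paper's own proof of \cref{thm:param_umax} is essentially a two-line appeal to ``the observations above'' (the table semantics and the two recursive cases) plus the running-time count, and your proposal makes the implicit induction explicit. The invariant you formulate, the per-commodity booking of $w_if(0)$, the marginal gains, the drop-out correction, and the count of tuples and of work per entry all match what the paper intends; your running-time analysis is identical to the paper's.

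However, the crucial well-definedness step does not hold up as you literally state it. First, your dichotomy is phrased in terms of \emph{edges} of $P_i$ rather than \emph{cuts} on $P_i$: the implication ``if some edge of $P_i$ strictly left of $e^{(j)}$ is not among $\{c_1,\dots,c_\ell\}$, then all of $c_1,\dots,c_\ell$ lie on $P_i$'' is false (take $P_i=\{e^{(j-1)},e^{(j)}\}$ with $e^{(j-1)}$ uncut and all $c_t$ far to the left), and your two branches do not partition the possibilities. The paper's dichotomy is the correct one: either every \emph{cut} on $P_i$ left of $e^{(j)}$ lies in $\{c_1,\dots,c_\ell\}$, so $z_i$ is the true count, or some cut on $P_i$ lies strictly left of $c_1$, in which case contiguity of $P_i$ forces $\{c_1,\dots,c_\ell,e^{(j)}\}\subseteq P_i$ and hence at least $\ell+1>u_i$ cuts; replacing ``edge'' by ``cut'' repairs this. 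Second --- a defect you inherit from the paper rather than introduce --- your telescoping claim that $-w_if(u_i)$ is booked once and ``nothing is booked thereafter'' is not what \eqref{eq:marginal_gain_param} actually guarantees when $u_i=\ell$: if $P_i$ carries at least $\ell+2$ cuts up to and including $e^{(j)}$, then $\{c_1,\dots,c_\ell\}\subseteq P_i$, so $z_i=\ell=u_i$ and the formula books $-w_if(u_i)$ \emph{again} even though $i$ already dropped out; a window of $\ell+1$ stored cuts cannot distinguish ``exactly $\ell$ prior cuts on $P_i$'' from ``more than $\ell$''. To close your induction you would need to handle this corner case (e.g.\ by enlarging the window by one cut, or by a separate argument that the resulting underestimate cannot affect the final maximum); as written, the inductive step fails for commodities with $u_i=u_{\max}$.
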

	\begin{proof}
		By the observations above, the maximum among all entries $\DPTable(e^{(n-1)}, \cdot)$ contains the optimal revenue. It remains to bound the running time. The table has $\mathcal{O}(n^{\ell+2})$ entries. To compute an entry falling into case 1 we need time $\mathcal{O}(k)$. In case 2, we first compute the maximum of $\mathcal{O}(n)$ table entries. Afterwards, we compute the revenue from all commodities featuring $e^{(j)}$ as their leftmost edge, which takes $\mathcal{O}(k)$ additional time. As $k \in \mathcal{O}(n^3)$, we get an overall running time of $n^{\mathcal{O}(\ell)}$.
	\end{proof}
	
	\paragraph{Parameter $|P|_{\max}$.} We now consider \ac{FZA} parameterized with $|P|_{\max}$. We highlight the similarities to the previous observations and use $\ell = |P|_{\max}$. We alter the definition of the table entries slightly. For edge $e^{(j)}$, we now compute $\DPTable(e^{(j)}, C)$ for each subset $C \subseteq \{e^{(j-\ell+1)}, \ldots , e^{(j-1)}, e^{(j)}\}$.
	We want $\DPTable(e^{(j)}, C)$ to contain the maximum revenue that can be obtained from the subinstance of $\mathcal{I}$ up to edge $e^{(j)}$ when cutting exactly the subset $C$ among all edges in $\{e^{(j-\ell+1)}, \ldots , e^{(j-1)}, e^{(j)}\}$. 
	We first alter $T$ by appending an artificial edge $e^{(0)}$ at its left endpoint, and initialize $\DPTable(e^{(0)}, \emptyset) = 0$.
	
	Let $i\in [k]$ be a commodity for which $e^{(j)} \in P_i$. As $|P_i| \leq \ell$, the path $P_i$ cannot contain $e^{(j-\ell)}$ or any edges left of it, and thus $\rev{i}(F)$ is uniquely determined by $C$. Hence, we obtain the following recursive formula for $\DPTable(e^{(j)}, C)$ in an analogous way to Equation~ \eqref{eq:umax_recursive_no_cut} and Equation~ \eqref{eq:umax_recursive_cut}. 
	For the case that $e^{(j)} \not\in C$ we obtain
	\[
	\DPTable(e^{(j)}, C) = \max \{\DPTable(e^{(j-1)}, C), \, \DPTable(e^{(j-1)}, C \cup \{e^{(j-\ell)}\})\} + \sum_{\substack{i \in [k], \, e^{(j)} \text{ is the} \\ \text{leftmost edge of } P_i}} w_i \cdot f(0),\]
	and for $e^{(j)} \in C$ we get
	\begin{align*}
		\DPTable(e^{(j)}, C) =& \max \Big\{\DPTable\big(e^{(j-1)}, C \setminus \{e^{(j)}\}\big), \DPTable\big(e^{(j-1)}, C \setminus \{e^{(j)}\} \cup \{e^{(j-\ell)}\}\big)\Big\}\\
		+& \sum_{\substack{i \in [k], \, e^{(j)} \text{ is the} \\ \text{leftmost edge of } P_i}} w_i \cdot f(0)\\
		+& \sum_{\substack{i \in [k], \, e^{(j)} \in P_i}} \rev{i}(e^{(j)} \ | \ (C \setminus \{e^{(j)}\})).
	\end{align*}
	Here, $\rev{i}(e^{(j)} \ | \ (C \setminus \{e^{(j)}\}))$ is defined analogously to $\rev{i}(e^{(j)} \ | \ \{c_1,\ldots,c_{\ell}\})$ in \cref{eq:marginal_gain_param}. This value is well-defined, as $|P_{\max}|$ is bounded by $\ell$. This yields the following result.
	\begin{corollary}
		\label{cor:param_pmax}
		\ac{FZA} on paths is in $\FPT$ when parameterized with $|P|_{\max}$.
	\end{corollary}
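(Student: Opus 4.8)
The plan is to re-use the dynamic programming idea behind \cref{thm:param_umax}, but with a different state description that exploits $\ell \coloneqq |P|_{\max}$ directly. Since every commodity path has length at most $\ell$, whether a commodity $i$ with $e^{(j)} \in P_i$ is served, and how much it pays, is fully determined by which of the $\ell$ edges $e^{(j-\ell+1)}, \ldots, e^{(j)}$ are cut. I would therefore index the table by a \emph{window subset}: for each edge $e^{(j)}$ and each $C \subseteq \{e^{(j-\ell+1)}, \ldots, e^{(j)}\}$, let $\DPTable(e^{(j)}, C)$ be the maximum revenue obtainable from the subinstance up to $e^{(j)}$ over all cut sets whose intersection with the window $\{e^{(j-\ell+1)}, \ldots, e^{(j)}\}$ is exactly $C$. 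After padding $T$ on the left with $\ell$ artificial edges that no commodity path touches, the base case is $\DPTable(e^{(0)}, \emptyset) = 0$ (all other states unreachable, i.e.\ $-\infty$).

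\textbf{Recursion and correctness.} Passing from $e^{(j-1)}$ to $e^{(j)}$, the window slides by one: $e^{(j-\ell)}$ leaves it and $e^{(j)}$ enters it. Hence a cut set realizing window $C$ at step $j$ restricts, at step $j-1$, to a cut set realizing window $C \setminus \{e^{(j)}\}$ or $(C \setminus \{e^{(j)}\}) \cup \{e^{(j-\ell)}\}$, and conversely any such extends. Thus $\DPTable(e^{(j)}, C)$ is the maximum of these two predecessor entries, plus the base revenue $w_i f(0)$ of each commodity whose leftmost edge is $e^{(j)}$, plus --- if $e^{(j)} \in C$ --- the marginal gain from cutting $e^{(j)}$ summed over all commodities $i$ with $e^{(j)} \in P_i$. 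The point that makes this marginal gain well-defined is exactly $|P_i| \le \ell$: the whole path $P_i$ lies inside the current window, so the number $z_i \coloneqq |P_i \cap (C \setminus \{e^{(j)}\})|$ of earlier cuts on $P_i$ depends only on $C$, and the marginal gain is given by the case distinction in \eqref{eq:marginal_gain_param}. Correctness then follows by induction on $j$, and the optimal revenue is $\max_C \DPTable(e^{(n-1)}, C)$.

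\textbf{Running time and the main obstacle.} There are $\mathcal{O}(n)$ choices of $e^{(j)}$ and $2^{\ell}$ choices of $C$, so $\mathcal{O}(2^{\ell} n)$ states; computing one state costs two table lookups plus a scan over the commodities that start at or pass through $e^{(j)}$, which is $n^{\mathcal{O}(1)}$ since $k \in \mathcal{O}(n^3)$ (after an $\mathcal{O}(n^3)$ preprocessing bucketing commodities by leftmost edge). The total is $\mathcal{O}(2^{\ell}) \cdot n^{\mathcal{O}(1)} = f(|P|_{\max}) \cdot n^{\mathcal{O}(1)}$, giving the \FPT{} claim. The step I expect to need the most care is the well-definedness argument together with the boundary bookkeeping: one must pick the left-padding length so the sliding window is always fully defined, and argue that never inserting the padding edges into any $C$ loses nothing because no commodity path reaches them.
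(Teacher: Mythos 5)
Your proposal is correct and matches the paper's own proof essentially verbatim: the same sliding-window table $\DPTable(e^{(j)}, C)$ over subsets of the last $\ell$ edges, the same two-predecessor recursion, the same observation that $|P_i|\le\ell$ makes the marginal gain well-defined, and the same $\mathcal{O}(2^{\ell})\cdot n^{\mathcal{O}(1)}$ bound. The only (immaterial) difference is that you pad with $\ell$ artificial edges while the paper adds a single $e^{(0)}$.
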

	\begin{proof}
		The optimal revenue can be computed by taking the maximum of all table entries $\DPTable(e^{(n-1)}, \cdot)$. This requires evaluating $\mathcal{O}(n\cdot2^{|P|_{\max}})$ table entries, each of which can be computed in time $\mathcal{O}(k)$. Since $k \in \mathcal{O}(n^3)$, the result follows.
	\end{proof}
	
	\subsection{Dynamic Programs Parameterized with Congestion}
	\label{sec:congestion}
	Again, let $\mathcal{I}$ be an instance of $\ac{FZA}$ on some underlying path $T$. We append an additional artificial edge $e^{(0)}$ to the left of $T$ and design a dynamic program which processes the edges of the path from left to right. Given some current edge $e^{(j)}$, we define the following three disjoint subsets of commodities for $j \in \{1,\ldots,n-1\}$
	\begin{align*}
		K_1(e^{(j)}) &\coloneqq \{i \in [k] \ | \ e^{(j-1)} \not\in P_i, e^{(j)} \in P_i\}, \\
		K_2(e^{(j)}) &\coloneqq \{i \in [k] \ | \ \{e^{(j-1)}, e^{(j)}\} \subseteq P_i\}, \\
		K_3(e^{(j)}) &\coloneqq \{i \in [k] \ | \ e^{(j-1)} \in P_i, e^{(j)} \not\in P_i\}.
	\end{align*}
	We call the commodities in  $K_1(e^{(j)})$ \emph{new} commodities, the commodities in $K_2(e^{(j)})$ \emph{active} commodities, and the commodities in $K_3(e^{(j)})$ \emph{finished} commodities w.r.t. $e^{(j)}$.
	
	We now want to compute table entries
	\[
	\DPTable(e^{(j)}, X),
	\quad x_i \in \{-\infty, -1, 0, 1, \ldots, u_i\} 
	\]
	Here, $X$ denotes a vector with entries $x_i$ for $i \in K_1(e^{(j)}) \cup K_2(e^{(j)})$.
	An entry $\DPTable(e^{(j)}, X)$ captures the maximum revenue that can be obtained in the subinstance up to edge $e^{(j)}$ among all solutions $F$ that satisfy $|F \cap P_i| = u_i - x_i$ for every $i \in K_1(e^{(j)}) \cup K_2(e^{(j)})$. (By slight abuse of notation, we say that this condition is satisfied for $x_i = -\infty$ if and only if $|F \cap P_i| > u_i +1$).
	We call $x_i$ the \emph{slack} of commodity $i$. To initialize the table, we set $\DPTable(e^{(0)}, ()) = 0$.
	
	We give recursive formulas for $\DPTable(e^{(j)}, X)$ assuming that all table entries $\DPTable(e^{(j-1)}, X)$ have been computed. We distinguish between the cases where there exist no new commodities (that is, $K_1(e^{(j)}) = \emptyset$) and where there exist new commodities.
	
	\emph{Case 1: $K_1(e^{(j)}) = \emptyset$.} In this case, we can either cut $e^{(j)}$ or not cut $e^{(j)}$. If we do not cut $e^{(j)}$, we look up the best table entry $\DPTable(e^{(j-1)}, Y)$ where $Y$ is a vector defined on $K_2(e^{(j)}) \cup K_3(e^{(j)})$, and $Y|_{K_2(e^{(j)})} = X|_{K_2(e^{(j)})}$. Here, $Y|_{K_2(e^{(j)})}$ and $X|_{K_2(e^{(j)})}$ denote the vectors $X$ and $Y$ restricted to the indices in $K_2(e^{(j)})$, respectively. The remaining entries $Y|_{K_3(e^{(j)})}$ can be chosen arbitrarily with $y_i \in \{-\infty, -1, 0, 1, \ldots, u_i\}$. If these conditions are satisfied, we call \emph{$Y$ feasible for $X$ without cut}.
	
	If we do cut $e^{(j)}$, we only need to consider table entries $\DPTable(e^{(j-1)}, Y)$ for which the slack of all commodities in $K_2(e^{(j)})$ is one less in $X$ than in $Y$. That is, for all $i \in K_2(e^{(j)})$ we require
	\begin{align*}
		y_i = x_i + 1, &\quad x_i \in \{-1,0,1,\ldots,u_i\}, \\
		y_i \in\{-\infty, -1\}, &\quad x_i = -\infty.
	\end{align*}
	Again, we can choose all values $y_i$ for $i \in K_3(e^{(j)})$ arbitrarily. If these conditions are met, we call \emph{$Y$ feasible for $X$ with cut}. Overall, we obtain the following recursive formula
	\begin{align}
		\DPTable(e^{(j)}, X) = & \max\Big\{ \max_{\substack{Y \text{ feasible for }X \\ \text{without cut}}}\DPTable(e^{(j-1)}, Y),\nonumber\\
		& \max_{\substack{Y \text{ feasible for }X \\ \text{with cut}}}\DPTable(e^{(j-1)}, Y) \label{eq:param_cong_recursive}\\
		& + \sum_{i \in K_2(e^{(j)})} \rev{i} (e^{(j)} \ | \ x_i)\Big\}\nonumber
	\end{align}
	Here, $\rev{i}(e^{(j)} \ | \ x_i)$ denotes the marginal gain from commodity $i$ obtained by cutting $e^{(j)}$ after which $i$ has a slack of $x_i$. More concretely,
	\begin{equation}
		\rev{i}(e^{(j)} \ | \ x_i) = \begin{cases}
			w_i \cdot (f(u_i - x_i) - f(u_i - x_i - 1)), &x_i \geq 0, \\
			-w_i \cdot f(u_i), & x_i = -1, \\
			0, & x_i = -\infty.
		\end{cases}
		\label{eq:param_cong_marginal_gain}
	\end{equation}
	In the last case, the marginal gain is $0$ as commodity $i$ already dropped out prior to cutting $e^{(j)}$.
	
	\emph{Case 2: $K_1(e^{(j)}) \neq \emptyset$.} This case works similarly to case 1, with the difference that $X|_{K_1(e^{(j)})}$ already determines whether edge $e^{(j)}$ is cut. Clearly if edge $e^{(j)}$ is not cut, we must have $x_i = u_i$ for all $i \in K_1(e^{(j)})$, and if $e^{(j)}$ is cut we must have $y_i = u_i - 1$ for all $i \in K_1(e^{(j)})$. All other values for  $X|_{K_1(e^{(j)})}$ are clearly infeasible, and thus we set the corresponding table entries to $-\infty$. We consider the two feasible sub-cases individually.
	
	\emph{Case 2.1: $x_i = u_i$ for all $i \in K_1(e^{(j)})$.}
	In this case, we obtain exactly the same formula as for the first option in \cref{eq:param_cong_recursive}, but we additionally need to account for the base revenue obtained from the new commodities. Thus, we obtain
	\[\DPTable(e^{(j)}, X) = \max_{\substack{Y \text{ feasible for }X \\ \text{without cut}}}\DPTable(e^{(j-1)}, Y) + \sum_{i \in K_1(e^{(j)})} w_i \cdot f(0).\]
	
	\indent\emph{Case 2.2: $x_i = u_i-1$ for all $i \in K_1(e^{(j)})$.}
	In this case, we get a similar formula to the second option in \cref{eq:param_cong_recursive}, however, we need to add the base revenue obtained from all new commodities with $u_i \geq 1$. Hence,
	\begin{align*}
		\DPTable(e^{(j)}, X) = \max_{\substack{Y \text{ feasible for }X \\ \text{with cut}}}\DPTable(e^{(j-1)}, Y) & + \sum_{i \in K_2(e^{(j)})} \rev{i} (e^{(j)} \ | \ x_i) \\
		&+ \sum_{\substack{i \in K_1(e^{(j)}), \\ u_i \geq 1}} w_i \cdot f(1)
	\end{align*}
	where the marginal gain $\rev{i} (e^{(j)} \ | \ x_i)$ is defined exactly the same as in \cref{eq:param_cong_marginal_gain}.
	
	Combining the two cases, we conclude that \ac{FZA} on paths is in $\XP$ when parameterized with $\con$.
	\begin{theorem}
		\label{thm:param_cong}
		Let $\mathcal{I}$ be an instance of \ac{FZA} on a path. The optimal revenue for $\mathcal{I}$ is given by
		\[\max_{X} \ \DPTable(e^{(n-1)}, X)\]
		where the maximum is taken over all vectors $X$ with $x_i \in \{-\infty, -1, 0, 1, \ldots, u_i\}$ for all $i \in K_1(e^{(n-1)}) \ \cup \ K_2(e^{(n-1)})$. This value can be calculated in time $n^{\mathcal{O}(\con)}$. Thus, \ac{FZA} on paths is in $\XP$ when parameterized with $\con$.
	\end{theorem}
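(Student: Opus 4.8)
I would prove the two halves of the statement separately: first that the recursion computes what it claims, by induction on $j$, and then that the number of table entries and the cost per entry are both $n^{\mathcal{O}(\con(\mathcal{I}))}$. The invariant to maintain is exactly the stated meaning of $\DPTable(e^{(j)},X)$: it is the maximum revenue obtainable from the subinstance up to $e^{(j)}$ over all cut sets $F\subseteq\{e^{(1)},\dots,e^{(j)}\}$ with $|F\cap P_i|=u_i-x_i$ for every $i\in K_1(e^{(j)})\cup K_2(e^{(j)})$, where $x_i=-\infty$ is read as $|F\cap P_i|\ge u_i+2$. The base case is the initialization $\DPTable(e^{(0)},())=0$. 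For the inductive step the key structural fact to establish is that the revenue of such an $F$ decomposes as the revenue of $F\cap\{e^{(1)},\dots,e^{(j-1)}\}$ on the subinstance up to $e^{(j-1)}$, plus the marginal contribution of the decision at $e^{(j)}$, and that this marginal contribution is a function of the slack vector $X$ alone (together with the cut/no-cut choice at $e^{(j)}$, which for new commodities is itself encoded in $X$). This is the only place where the specific form of $f$ enters: because $\rev{i}$ depends only on how \emph{many} edges of $P_i$ are cut and not on which ones, the slack $x_i$ is a sufficient statistic for an active commodity, and cutting $e^{(j)}$ changes $i$'s revenue by exactly $\rev{i}(e^{(j)}\mid x_i)$. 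I would check \eqref{eq:param_cong_marginal_gain} regime by regime against the definition of $\rev{i}$: for $x_i\ge 0$ the commodity stays served and the change is $w_i\bigl(f(u_i-x_i)-f(u_i-x_i-1)\bigr)$; for $x_i=-1$ the new cut is the $(u_i+1)$-st and cancels the $w_i f(u_i)$ accumulated so far; for $x_i=-\infty$ the commodity has already dropped out and nothing changes. In particular the full, correct revenue of a commodity is already contained in the table value by the time it leaves the window, so no double counting occurs when it is later forgotten.

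\textbf{Matching the recursion to the invariant.} I would then note that $K_1(e^{(j-1)})\cup K_2(e^{(j-1)})=\{i\mid e^{(j-1)}\in P_i\}=K_2(e^{(j)})\cup K_3(e^{(j)})$, so the predecessor state $Y$ is defined exactly on $K_2(e^{(j)})\cup K_3(e^{(j)})$ while $X$ is defined on $K_1(e^{(j)})\cup K_2(e^{(j)})$; the shared coordinates are those of the active commodities $K_2(e^{(j)})$. The notions \enquote{$Y$ feasible for $X$ without cut} and \enquote{$Y$ feasible for $X$ with cut} then describe precisely the two ways a partial solution up to $e^{(j-1)}$ with state $Y$ extends to one up to $e^{(j)}$ with state $X$: either $e^{(j)}$ is not cut, so $Y$ and $X$ agree on $K_2(e^{(j)})$, or it is cut, so each active slack drops by one with $-\infty$ absorbing $\{-\infty,-1\}$. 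In both cases the coordinates of the now-finished commodities $K_3(e^{(j)})$ are unconstrained, and maximizing over them is correct since those commodities have no future and their contributions are frozen inside $\DPTable(e^{(j-1)},Y)$. When $K_1(e^{(j)})\neq\emptyset$ the cut decision is forced by whether the new commodities have slack $u_i$ or $u_i-1$ (all other values being infeasible), which gives the two sub-cases and the extra base-revenue terms $\sum w_i f(0)$ resp.\ $\sum_{u_i\ge 1}w_i f(1)$. Finally, at $e^{(n-1)}$ every commodity whose path meets $T$ has already been accounted for, so $\max_X\DPTable(e^{(n-1)},X)$ is the optimal revenue.

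\textbf{Running time.} Since every commodity in $K_1(e^{(j)})\cup K_2(e^{(j)})$ contains $e^{(j)}$, there are at most $\con(\mathcal{I})$ of them at any edge, and each slack ranges over at most $u_i+3\le n+2$ values (using $u_i\le|P_i|\le n$). Hence there are at most $n\cdot(n+2)^{\con(\mathcal{I})}=n^{\mathcal{O}(\con(\mathcal{I}))}$ table entries overall. Filling one entry amounts to maximizing over the feasible predecessors $Y$: the $K_2(e^{(j)})$-coordinates of $Y$ are pinned (up to the $\pm1$ shift and the absorbing state), so only the $K_3(e^{(j)})$-coordinates are free, at most $(n+2)^{\con(\mathcal{I})}$ choices, and for each we spend $\mathcal{O}(k)=\mathcal{O}(n^3)$ adding marginal gains and base revenues. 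Multiplying out gives $n^{\mathcal{O}(\con(\mathcal{I}))}$ in total, which is the claimed bound (mirroring the count in the proof of \cref{thm:param_umax}).

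\textbf{Expected main obstacle.} I expect the genuinely delicate part to be the bookkeeping in the correctness proof rather than the running-time count. Concretely, one must argue rigorously that the slack is a sufficient statistic for a commodity's remaining revenue, that the sign flip at slack $-1$ together with the absorbing $-\infty$ state makes a dropped commodity's accumulated net contribution exactly zero, and that single-edge commodities which enter $K_1(e^{(j)})$ and become finished at the very next edge $e^{(j+1)}$ are handled consistently by the interlocking definitions of $K_1$, $K_2$, $K_3$. Once the invariant is stated and these points are verified, the case analysis behind \eqref{eq:param_cong_recursive} and the remaining sub-cases is routine.
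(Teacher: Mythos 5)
Your proposal is correct and follows essentially the same route as the paper: the paper's own proof consists almost entirely of the running-time count (number of table entries times cost per entry, both $n^{\mathcal{O}(\con(\mathcal{I}))}$), taking correctness of the recursion as established by the preceding case analysis. Your additional inductive verification of the invariant, the identity $K_1(e^{(j-1)})\cup K_2(e^{(j-1)})=K_2(e^{(j)})\cup K_3(e^{(j)})$, and the regime-by-regime check of the marginal-gain formula are all sound and simply make explicit what the paper leaves implicit.
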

	\begin{proof}
		For some fixed $e^{(j)}$, we need to compute at most $\prod_{i \in K_1(e^{(j)}) \cup K_2(e^{(j)})} (u_i+3) = n^{\mathcal{O}(\con)}$ table entries, as clearly we have $u_i + 3 \in \mathcal{O}(n)$ for all $i$. To compute a single table entry, we need to choose the best table entry for all possible values of $Y|_{K_3(e^{(j)})}$, and compute the marginal gain made by the new edge $e^{(j)}$. For the first part, we need to consider $n^{\mathcal{O}(\con)}$ many entries, as $|K_3(e^{(j)})| \leq \con$. Computing the marginal gain made by the new cut is possible in time $\operatorname{poly}(n)$. Thus, we obtain an overall running time of $n^{\mathcal{O}(\con)}$.
	\end{proof}
	
	\paragraph{Parameter $\max\{\con, u_{\max}\}$.} Assume that $\max\{\con, u_{\max}\} = \ell$. We use exactly the same dynamic program as for parameter $\con$ and analyze its runtime for the new parameter $\ell$. Note that for every fixed $j$, we only need to compute $\mathcal{O}\left((\ell+3)^{\ell}\right)$ table entries (as there are at most $\ell+3$ possible values that each $y_i$ can take). Similarly, when computing some fixed table entry $\DPTable(e^{(j)}, X)$ there are only $\mathcal{O}\left((\ell + 3)^{\ell}\right)$ possible values for $Y|_{K_3(e^{(j)})}$ that we need to consider. Thus, computing any single table entry now only takes time $\mathcal{O}\left((\ell+3)^{\ell} + \operatorname{poly}(n)\right)$. Overall, we obtain the following, stronger result for this parameter.
	
	\begin{corollary}
		\label{cor:param_cong_umax}
		\ac{FZA} on paths is in $\FPT$ when parameterized with $\ell \coloneqq \max\{\con, u_{\max}\}$.
	\end{corollary}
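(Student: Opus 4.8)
The plan is to reuse, essentially verbatim, the dynamic program developed for \cref{thm:param_cong} and merely re-examine its running time under the stronger parameter $\ell = \max\{\con(\mathcal{I}), u_{\max}\}$. Since the states, the initialization, and the recursion of \cref{eq:param_cong_recursive} (together with its Case~2 analogues) are unchanged, correctness is inherited directly from \cref{thm:param_cong}, so the entire argument is a complexity re-count. The optimal revenue is still obtained as $\max_{X}\DPTable(e^{(n-1)}, X)$.

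First I would bound the number of table entries per edge. For a fixed edge $e^{(j)}$, a state is indexed by $K_1(e^{(j)}) \cup K_2(e^{(j)})$, and every commodity in this set has $e^{(j)}$ on its path, so $|K_1(e^{(j)}) \cup K_2(e^{(j)})| \le \con(\mathcal{I}) \le \ell$. Each coordinate $x_i$ ranges over $\{-\infty, -1, 0, 1, \ldots, u_i\}$, a set of $u_i + 3 \le u_{\max} + 3 \le \ell + 3$ values. Hence there are at most $(\ell+3)^{\ell}$ states at $e^{(j)}$ and $\mathcal{O}\bigl(n\cdot(\ell+3)^{\ell}\bigr)$ entries overall. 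This is the crucial point: in \cref{thm:param_cong} the factor $u_i + 3$ was only bounded by $\mathcal{O}(n)$, which is exactly what produced the $n^{\mathcal{O}(\con(\mathcal{I}))}$ bound; once $u_{\max}\le\ell$, each slack range collapses to a function of $\ell$.

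Next I would bound the time to compute a single entry. Each maximization in the recursion ranges over vectors $Y$ indexed by $K_2(e^{(j)}) \cup K_3(e^{(j)})$, and both $K_2(e^{(j)})$ and $K_3(e^{(j)})$ consist of commodities containing $e^{(j-1)}$, so the index set has size at most $\con(\mathcal{I}) \le \ell$; with at most $\ell+3$ values per coordinate there are at most $(\ell+3)^{\ell}$ candidates $Y$, and checking the ``feasible for $X$ with/without cut'' conditions costs $\mathcal{O}(\ell)$ per candidate. The base-revenue sums over $K_1(e^{(j)})$ and the marginal-gain sum $\sum_{i\in K_2(e^{(j)})}\rev{i}(e^{(j)}\mid x_i)$ have at most $\con(\mathcal{I}) \le \ell$ terms, each evaluable in $\mathcal{O}(1)$ time given $f$; identifying the relevant commodities costs $\operatorname{poly}(n)$. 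Thus each entry is computed in time $\mathcal{O}\bigl((\ell+3)^{\ell} + \operatorname{poly}(n)\bigr)$.

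Multiplying, the total running time is $\mathcal{O}\bigl(n\cdot(\ell+3)^{\ell}\cdot((\ell+3)^{\ell}+\operatorname{poly}(n))\bigr) = g(\ell)\cdot\operatorname{poly}(n)$ with $g(\ell)=(\ell+3)^{2\ell}$, which is precisely the form required for membership in $\FPT$. I do not anticipate a genuine obstacle: the only thing that needs care is confirming that every quantity that was merely $\operatorname{poly}(n)$ under the congestion-only parametrization — above all the slack ranges $\{-\infty,-1,\dots,u_i\}$ — becomes a function of $\ell$ once $u_{\max}\le\ell$, so that the exponent of $n$ in \cref{thm:param_cong} no longer depends on the parameter.
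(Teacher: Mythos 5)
Your proposal is correct and matches the paper's argument: it reuses the dynamic program from \cref{thm:param_cong} unchanged and simply re-counts the table size and per-entry cost, observing that both the number of states per edge and the number of candidate vectors $Y$ are bounded by $(\ell+3)^{\ell}$ once $u_{\max}\le\ell$ collapses each slack range to $\ell+3$ values. No gaps.
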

	
	%%%%%%%%%%%%%%%%%%%%%%%%%%%%%%%%%%%%%%%%%%%%%%%%%%%%%%%%%%%%%%%%%%%
	%%%%%%%%%%%%%%%%%%%%%%%%%%%%%% PTAS %%%%%%%%%%%%%%%%%%%%%%%%%%%%%%%
	%%%%%%%%%%%%%%%%%%%%%%%%%%%%%%%%%%%%%%%%%%%%%%%%%%%%%%%%%%%%%%%%%%%
	
	\section{A PTAS for Paths}
	\label{app:ptas}
	In this section, we briefly sketch how the PTAS due to Grandoni and Rothvoss \cite{GrandoniRothvossPTAS} for the \textsc{Highway} problem (i.e., \textsc{Tollbooth} on a path) can be transformed to a PTAS for \ac{FZA} on a path with minor modifications. We neither provide a full description nor analysis of the algorithm, but instead focus on the single step where the algorithm for \ac{FZA} differs from the one for \textsc{Highway}. Note that in \cref{thm:APX-hardness_Stars} we show \textsf{APX}-hardness of \ac{FZA} when restricted to stars, and thus there is no hope to extend this PTAS beyond paths. We first give a general overview of the algorithm and provide further details afterwards. Recall that in a solution to \textsc{Tollbooth}, we determine a price $p_e$ for each edge $e$. Given some subpath $T' = (V', E')$ and a (partial) solution $p \colon E' \to \mathbb{R}_{+}$, we call $\sum_{e \in E'} p_e$ the \emph{price} of path $T'$.
	
	\paragraph{Overview of the PTAS by Grandoni and Rothvoss \cite{GrandoniRothvossPTAS}.}
	First, the instance of \textsc{Tollbooth} is reduced to a so-called \emph{well-rounded} instance that admits an optimal solution with edge prices in $\{0,1\}$. Let $\varepsilon > 0$. Given a well-rounded instance, the algorithm is then split up into a \emph{bounding phase}, a \emph{dynamic programming phase}, and a \emph{scaling phase.}
	\begin{itemize}
		\item In the bounding phase, first the total price of an optimal solution $m^*$ is guessed (which is possible since the instance is well-rounded). Then, the original path $T$ gets enclosed into a bounding path of length $(1/\varepsilon)^y$ for some randomly chosen $y$. We denote the new path by $T_0$.
		\item In the dynamic programming phase, for every subpath $T' \subseteq T_0$, we compute the optimal revenue $R(T', m)$ that can be obtained by assigning price $1$ to exactly $m$ edges in $T'$. We first do this by brute force for $m = (1/\varepsilon)^y$, and then use dynamic programming for other values of $m$. The optimal solution value is then given by $R(T_0, m^*)$.
		\item In the scaling phase, the computed solution is scaled down by a factor of $\delta/(\delta+2)$ for $\delta = 1/(2\varepsilon)$.
	\end{itemize}
	
	\paragraph{Adapting the PTAS for \ac{FZA}.} Until the final step, every edge gets a price of either $0$ or $1$. This can be interpreted as not cutting or cutting the edge in the context of \ac{FZA}. Hence, it suffices to revise the final scaling phase where prices get multiplied with the non-integer factor of $\delta/(\delta+2)$ to obtain the PTAS for \ac{FZA}.
	
	In \textsc{Highway}, the scaling step is needed to deal with ``long'' commodities. When computing $R(T_0,m^*)$, let $(T_1, \ldots, T_\gamma)$ be the best partition of $T_0$ in $\gamma$ subpaths (for some $\gamma$ depending on $\varepsilon$) such that each subpath has a price of $m^*/\gamma$. (The algorithm simply evaluates all possible partitions and chooses the best one). Consider some commodity $i$ served by $\OPT$ for which $P_i$ crosses $\Omega(1/\varepsilon)$ many of the subpaths $T_j$. On each of the inner segments, the solution on the inner segments of $T'$ assigns price $1$ to as many edges (that is, makes as many cuts) in $P_i$ as the optimal solution. However, on the two outer subpaths $T_1$ and $T_{\gamma}$, it might happen that the algorithm assigns a higher price to $T_1 \cap P_i$ and $T_{\gamma} \cap P_i$ than the optimal solution. This might lead to commodity $i$ dropping out. However, by slightly scaling down the solution, we can prevent $i$ from dropping out without losing a lot of revenue on shorter commodities.
	
	\paragraph{The thinning phase.} To adapt the PTAS to \ac{FZA}, we substitute the scaling phase by a \emph{thinning phase} as follows. Consider each $T_j$ in the best partition of $T_0$ into $\gamma$ many subpaths each with a price of $m^*/\gamma$, and let 
	\[\varphi \coloneqq \Big\lceil \frac{2}{\delta+2} \cdot \frac{m^*}{\gamma}\Big\rceil.\]
	Let $F_j = \{e_1,\ldots,e_{|F_j|}\}$ be the cut set on path $P_j$ before the thinning phase. We construct $|F_j| = m^*/\gamma$ candidate ``thinned-out'' solutions
	\[F_j^{(\ell)} \coloneqq \{e_{\ell}, e_{\ell+1}, \ldots, e_{\ell+|F_j|-\varphi}\}, \quad \ell\in\{1,\ldots,|F_j|\}\]
	(using addition modulo $|F_j|$ for the indices). Each of these candidate solutions consists of $|F_j| - \varphi$ consecutive cuts, and every edge is part of exactly $|F_j| - \varphi$ many thinned-out sets $F_j^{(\ell)}$. From these solutions, let $F_j^*$ be the one that obtains the maximum revenue from all commodities fully contained in $T_j$. Note that for $F_j^*$ the following two properties hold.
	\begin{itemize}
		\item $|F_j^*| \leq \frac{\delta}{\delta+2}|F_j|$.
		\item Let $M_j$ be the set of commodities fully contained in $T_j$. Then,
		\[\rev{M_j}(F_j^*) \geq (1-\mathcal{O}(\varepsilon))\rev{M_j}(F_j).\]
	\end{itemize}
	
	The first property ensures that there is no ``long'' commodity that gets served by $F^*$ but not by $F_{\ALG} = \bigcup_{j=1}^{\gamma} F_j^*$. The second property ensures that we do not lose too much revenue generated by \emph{short} commodities that are contained within a single subpath $T_j$. To show the second property we use the probabilistic method. For the sake of the analysis, suppose that we choose $\ell$ uniformly at random (instead of enumerating all candidate solutions and choosing the best). Then, every cut in $F_j$ is contained in $F_j^{(\ell)}$ with the same probability of $1-\mathcal{O}(\varepsilon)$. Thus, by the probabilistic method, there must exist some good value of $\ell$ for which $F_j^{(\ell)}$ loses only a factor of $(1-\mathcal{O}(\varepsilon))$ on all commodities fully contained in $F_j$ when compared to $\OPT$. The remaining proof of the approximation guarantee follows closely along the lines of the original analysis by Grandoni and Rothvoss \cite{GrandoniRothvossPTAS}.
	
	%%%%%%%%%%%%%%%%%%%%%%%%%%%%%%%%%%%%%%%%%%%%%%%%%%%%%%%%%%%%%
	%%%%%%%%%%%%%%%%%%%% Single Density %%%%%%%%%%%%%%%%%%%%%%%%%
	%%%%%%%%%%%%%%%%%%%%%%%%%%%%%%%%%%%%%%%%%%%%%%%%%%%%%%%%%%%%%
	
	\section{Single Density Proofs}
	\label{app:SingleDensityProofs}
	
	\subsection{Proof of Theorem~\ref{thm:SingleDensity}}
	Recall the following definitions from Section~\ref{sec:SingleDensity}:
	We define the density $d_i$ of commodity $i\in[k]$ as $d_i \coloneqq \frac{u_i}{|\pathGraph_i|} \in \{0\} \cup [\tfrac{1}{n-1},1]$ and, based on that, partition the set of commodities into density classes 
	\begin{equation}
		\label{app-eq:DensityClassDefinition}
		M_j \coloneqq \bigl\{i \in [k] \,\big\vert\, \ub_i \geq 1 \text{ and } d_i \in (2^{-j}, 2^{1-j}]\bigr\}
	\end{equation}
	for each $j \in \{1, \dots, \lceil\log_2(n)\rceil\}$ and $M_0 \coloneqq \{i \in [k] \,\big\vert\, \ub_i =0\}.$ We fix some root $r \in V$ and define the candidate solutions
	\[\cutset_{j, \eStart} = \{e\in E \,\vert\, \operatorname{dist}(r,e) \equiv \eStart \mod 2^{j+1}\}.\]
	To obtain the sets $\cutset_{j, \eStart}^{\text{(rand)}}$, we then iterate through all edges in $\cutset_{j, \eStart}$ and delete each edge independently at random with probability $1/2$.
	
	\SingleDensity*
	\begin{proof}
		For this entire proof, let $\cutset^*$ be an optimal solution to the instance in question. Recall that for every $j \in \{1, \dots, \lceil\log_2(n)\rceil\}$ we need to show that the Single Density algorithm tests a set $\cutset_j$ satisfying
		\begin{align*}
			\mathbb{E}[\rev{M_j}(\cutset_j)] \geq \alpha \cdot \rev{M_j}(\cutset^*)
		\end{align*}
		for some constant $\alpha$ independent of $j$. First note that for class $M_0$ the set $\cutset_0 = \emptyset$ satisfies $\rev{M_0}(\cutset_0) \geq \rev{M_0}(\cutset^*)$. Thus, from now on, we restrict our analysis to the sets $\cutset_{j,\eStart}$ for $j>0$ evaluated by the algorithm.
		
		\paragraph{Commodities with $\ub_i \geq 2$.}
		Fix an arbitrary $j \in \{1,\ldots,\lceil\log_2(n)\rceil\}$ and consider only the commodities in $M_j$ with $\ub_i \geq 2$. 
		Let $i$ be one such commodity with path $\pathGraph_i\subseteq E$. Let $v = \argmin_{v \in \pathGraph_i} \text{dist}(r,v)$, i.e., the vertex in $P_i$ closest to $r$. If $v$ is an endpoint of $\pathGraph_i$, we leave $\pathGraph_i$ unchanged. Otherwise, we split $\pathGraph_i$ into two subpaths $\pathGraph_i^{(1)}$ and $\pathGraph_i^{(2)}$ going from either endpoint of $\pathGraph_i$ to $v$. For all values $\eStart \in \{0, \ldots, 2^{j+1}-1\}$ it holds that
		\begin{align}
			|\cutset_{j,\eStart} \cap \pathGraph_i| &= |\cutset_{j,\eStart} \cap \pathGraph_i^{(1)}| + |\cutset_{j,\eStart} \cap \pathGraph_i^{(2)}| \nonumber \\
			&\leq \bigg\lceil \frac{|\pathGraph_i^{(1)}|}{2^{j+1}} \bigg\rceil + \bigg\lceil \frac{|\pathGraph_i^{(2)}|}{2^{j+1}} \bigg\rceil \leq \bigg\lceil \frac{|\pathGraph_i|}{2^{j+1}} \bigg\rceil + 1 \leq \frac{\ub_i}{2} + 1 \leq \ub_i,
			\label{eq:singledensity}
		\end{align}
		where the last inequality holds restricted to commodities with $\ub_i \geq 2$. Hence, no matter the choice of $\eStart \in \{0, \dots, 2^{j+1}-1\}$, the solution $\cutset_{j, \eStart}$ serves commodity $i \in M_j$.
		We employ the probabilistic method to show the existence of some ``good'' $F_{j,\eStart}^{(\text{rand})}$, i.e., some $F_{j,\eStart}^{(\text{rand})}$ for which
		\[\mathbb{E}[\rev{M_j}(F_{j,\eStart}^{(\text{rand})})] \geq \alpha \cdot \rev{M_j}(F^*).\]
		We choose $\eStart \in \{0, \ldots, 2^{j+1}-1\}$ uniformly at random. For every edge $e$, the probability that $e$ is contained in $F_{j,\eStart}$ is exactly $2^{-(j+1)}$. Thus, summing over all edges in $P_i$ we obtain
		\begin{equation}
			\label{eq:single_density_ui}
			\mathbb{E}[|F_{j,\eStart} \cap P_i|] = \frac{|P_i|}{2^{j+1}} \geq \frac{u_i}{4},
		\end{equation}
		where the last inequality holds as $i \in M_j$ and thus, $d_i \leq 2^{1-j}$ which implies $|\pathGraph_i| > 2^{j-1}$.
		
		However, lower bounding $\mathbb{E}[|F_{j,\eStart} \cap P_i|]$ does not directly yield a lower bound on $\mathbb{E}[\rev{i}(\cutset_{j, \eStart})]$ as we only assume that the pricing function $f$ is subadditive but not necessarily linear. To obtain a lower bound for $\mathbb{E}[\rev{i}(\cutset_{j, \eStart})]$ observe that, for any pair $\eStart_1, \eStart_2 \in \{0,\ldots,2^{j+1}-1\}$
		\begin{equation}
			\label{eq:single_density_expected_cuts}
			\Big| \big|F_{j,\eStart_1} \cap P_i| - |F_{j,\eStart_2} \cap P_i\big|\Big| \leq 2.
		\end{equation}
		This is because $F_{j,\eStart_1}$ and $F_{j,\eStart_2}$ place cuts in an alternating fashion on $P_i^{(1)}$ and $P_i^{(2)}$, respectively. We introduce a random variable $X_i \coloneqq |F_{j,\eStart} \cap P_i|$ (when randomizing over $\eStart$) and distinguish two cases.
		
		\emph{Case 1: $X_i \geq 1$.} By \cref{eq:single_density_expected_cuts}, the minimum and maximum value that $X_i$ can take under different realizations of $\eStart$ differs by a factor of at most 3. Here, the worst case arises when $X_i$ can take values $1$, $2$ or $3$. 
		By \cref{eq:single_density_ui}, there must be some $\eStart$ such that $|F_{j,\eStart} \cap P_i| \geq \frac{u_i}{4}$. Together with the above consideration we get 
		$|F_{j,\eStart} \cap P_i| \geq \frac{1}{3} \cdot \frac{u_i}{4}$ for all $\eStart \in \{0,\ldots,2^{j+1}-1\}.$ As commodity $i$ always gets served, and $f$ is subadditive, this implies
		\begin{equation}
			\label{eq:single_density_case1}
			\mathbb{E}[\rev{i}(F_{j, \eStart})] \geq \frac{1}{12}\rev{i}(F^*).
		\end{equation}
		
		\emph{Case 2: There exists some $\eStart$ for which $F_{j,\eStart} \cap P_i = \emptyset$.} In this case,\cref{eq:single_density_expected_cuts} yields that $X_i$ can only take on values in $\{0,1,2\}$. Thus, it holds that $\mathbb{E}[X_i] < 2$. 
		By \cref{eq:single_density_ui} and subadditivity
		\[
		\rev{i}(F^*) \leq w_i\cdot f(u_i) \leq w_i \cdot f(4\mathbb{E}[X_i]) \leq w_i \cdot 4\mathbb{E}[X_i]\cdot f(1).
		\]
		We consider $\mathbb{E}[\rev{i}(F_{j,\eStart} \cap P_i)]$. Clearly, with probability at least $\mathbb{E}[X_i]/2$, the algorithm places at least one cut on $P_i$, and in this case the revenue obtained from $i$ is at least $w_i \cdot f(1)$. Thus, we conclude
		\[
		\mathbb{E}[\rev{i}(F_{j,\eStart})] \geq \frac{\mathbb{E}[X_i]}{2}\cdot w_i \cdot f(1) \geq \frac{1}{8}\rev{i}(F^*).
		\]
		
		Taking the worse guarantee from the two subcases, we obtain $\mathbb{E}[\rev{i}(F_{j, \eStart})] \geq \tfrac{1}{12}\rev{i}(F^*)$. Thus, after deleting every edge with probability $1/2$,
		\[\mathbb{E}[\rev{i}(F_{j, \eStart}^{(\text{rand})})] \geq \frac{1}{24}\rev{i}(F^*).\]
		
		\paragraph{Commodities with $u_i = 1$.}
		Now, we turn to the analysis of commodities with $\ub_i = 1$. Again, fix some $j \in \bigl\{1,\dots, \lceil\log_2(n)\rceil\bigr\}$ and consider a single commodity $i \in M_j$ (with $u_i =1$).
		
		For every $e' \in \pathGraph_i$, the probability $\text{Pr}[e' \in \cutset_{j,\eStart}]$ that $\cutset_{j,\eStart}$ contains $e'$ is $2^{-(j+1)}$. Summing over all edges of $\pathGraph_i$ we get
		\[\mathbb{E}[|\pathGraph_i \cap \cutset_{j,\eStart}|] \geq 2^{-(j+1)} \cdot |\pathGraph_i|.\]
		As the random variable $|\pathGraph_i \cap \cutset_{j,\eStart}|$ can only take values in $\{0,1,2\}$,
		\[\text{Pr}[|\pathGraph_i \cap \cutset_{j,\eStart}| \geq 1] \geq 2^{-(j+2)} \cdot |\pathGraph_i|\]
		Observing that $|\pathGraph_i| > 2^{j-1}$ as $u_i = 1$ and $d_i \in (2^{-j}, 2^{1-j}]$, we get
		\[\text{Pr}[|\pathGraph_i \cap \cutset_{j,\eStart}| \geq 1] \geq \frac{1}{8}.\]
		Every edge in $\cutset_{j,\eStart}$ survives the deletion step with probability $\frac{1}{2}$, hence
		\[\text{Pr}[|\pathGraph_i \cap \cutset_{j,\eStart}^{\text{(rand)}}| = 1] \geq \frac{1}{16}.\]
		Because $u_i = 1$, this bound directly translates to the revenue that $F_{j, \eStart}^{(\text{rand})}$ obtains from $i$, that is $\mathbb{E}[\rev{i}(F_{j, \eStart}^{(\text{rand})})] \geq \tfrac{1}{16}\rev{i}(F^*)$.
		
		\paragraph{Combining the two cases.} Considering all commodities $i\in M_j$ again, by linearity of expectation,
		\[\mathbb{E}[\rev{M_j}(F_{j, \eStart}^{(\text{rand})})] \geq \frac{1}{24}\rev{M_j}(F^*).\]
		Let $F_j$ be the best solution from among the candidate solutions $F_{j,\eStart}^{(\text{rand})}$. By the probabilistic method, $\cutset_j$ satisfies
		\[
		\mathbb{E}\bigl[\rev{M_j}(\cutset_j)\bigr] \geq \frac{1}{24} \rev{M_j}(\cutset^*).
		\]
		Since the algorithm returns the best solution $\cutset_{\ALG}$ among all $\lceil \log_2(n)\rceil + 1$ classes, we get
		\[
		\mathbb{E}\bigl[\rev{}(\cutset_{\ALG})\bigr] \geq \frac{1}{24} \cdot \frac{1}{\lceil \log(n) \rceil + 1} \cdot  \rev{}(\cutset^*).
		\]
		Regarding the runtime, the Single Density algorithm constructs $\mathcal{O}(n \cdot \log n)$ solution sets. For each of these it evaluates the revenue which is clearly possible in time $\mathcal{O}(k \cdot n)$, when using adjacency lists.
	\end{proof}
	
	\subsection{Deterministic Variants of the Single Density Algorithm for Special Cases}
	\label{sec:det_variants_single_density}
	In this subsection, we consider two special cases of \ac{FZA} and sketch how the Single Density Algorithm can be altered to (a) lose a smaller constant factor in the approximation guarantee, and (b) avoid using randomization. We begin with the special case of $T$ being a path and then consider the special case where the pricing function $f$ satisfies $f(0) > 0$.
	
	\paragraph{\ac{FZA} on paths.} Let $T$ be a path. We denote the edges of $T$ by $e^{(1)},\ldots,e^{(n-1)}$. We choose the partition of the commodities exactly in the same way as in the previous section, that is, we define
	\begin{align*}
		M_j &= \{i \in [k] \ | \ u_i \geq 1 \text{ and }d_i \in (2^{-j}, 2^{1-j}]\}, \quad j \geq 1,\\
		M_0 &= \{i \in [k] \ | \ u_i = 0\}.
	\end{align*}
	Again, we let $\cutset_0 = \emptyset$. For $j \in [\lceil \log_2 n\rceil]$ and $\theta \in \{1,\ldots,2^{j}\}$, let
	\[\cutset_{j, \eStart} \coloneqq \{e^{(\eStart')} \ | \ \eStart' \equiv \eStart \mod 2^j\},\]
	that is, $\cutset_{j, \eStart}$ includes every $2^j$-th edge of $T$ starting with $e^{(\theta)}$. We choose $\cutset_j$ to be the best of the candidate solutions $\cutset_{j,\theta}$.
	We show the following lemma, which directly implies a deterministic $4\cdot (\lceil \log_2(n) \rceil + 1)$-approximation on \ac{FZA} on paths.
	\begin{corollary}
		\label{cor:single_density_paths}
		Assume that we are given an instance of \ac{FZA} on a path. For the cut set $F_j$ constructed as outlined above, it holds that
		\[\rev{M_j}(F_j) \geq \frac{1}{4}\cdot \rev{M_j}(F^*).\]
	\end{corollary}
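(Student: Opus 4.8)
The plan is to re-run the \emph{Divide and Select} argument from the proof of \cref{thm:SingleDensity}, but to exploit that on a path no commodity in $M_j$ can ever drop out, which is precisely what makes the randomized deletion of cuts unnecessary here. The case $j=0$ is immediate because $F_0=\emptyset$ already extracts $\sum_{i\in M_0} w_i f(0)=\rev{M_0}(F^*)$. So fix $j\in\{1,\dots,\lceil\log_2(n)\rceil\}$ and a commodity $i\in M_j$, and write $L\coloneqq|\pathGraph_i|$. From $d_i\in(2^{-j},2^{1-j}]$ one gets $2^{j-1}u_i\le L< 2^j u_i$.

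The first step is to observe that $i$ is served by \emph{every} candidate $F_{j,\theta}$. Since $\pathGraph_i$ is a subpath of $T$, the set $F_{j,\theta}$ (which contains every $2^j$-th edge of $T$) intersects $\pathGraph_i$ in either $\lfloor L/2^j\rfloor$ or $\lceil L/2^j\rceil$ edges. As $L/2^j< u_i$ we have $\lceil L/2^j\rceil\le u_i$, hence $|\pathGraph_i\cap F_{j,\theta}|\le u_i$ for all $\theta$, so $\rev{i}(F_{j,\theta})=w_i\,f(|\pathGraph_i\cap F_{j,\theta}|)$ and no truncation to $0$ occurs. This is exactly the point that fails on trees, where $\pathGraph_i$ splits around its vertex closest to $r$ and a single $F_{j,\theta}$ may place two cuts on a commodity with $u_i=1$.

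The second step uses the probabilistic method over $\theta$. Choosing $\theta$ uniformly from $\{1,\dots,2^j\}$, every edge of $\pathGraph_i$ lies in $F_{j,\theta}$ with probability $2^{-j}$, so the random variable $X_i\coloneqq|\pathGraph_i\cap F_{j,\theta}|$ satisfies $\mathbb{E}[X_i]=L/2^j\ge u_i/2$ and, by the first step, $0\le X_i\le u_i$ always. Since $f$ is concave with $f(0)\ge 0$, the chord bound gives $f(x)\ge \tfrac{x}{u_i}f(u_i)$ for every $x\in\{0,1,\dots,u_i\}$; applying it pointwise to $X_i$ and taking expectations yields $\mathbb{E}[\rev{i}(F_{j,\theta})]=w_i\,\mathbb{E}[f(X_i)]\ge \tfrac{w_i f(u_i)}{u_i}\,\mathbb{E}[X_i]\ge \tfrac12\, w_i f(u_i)\ge \tfrac12\rev{i}(F^*)$, where the last inequality is $\rev{i}(F^*)\le w_i f(u_i)$. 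Summing over $i\in M_j$ and using linearity of expectation, $\mathbb{E}_\theta[\rev{M_j}(F_{j,\theta})]\ge \tfrac12\rev{M_j}(F^*)$, so some candidate, and hence the best one $F_j$ (taken with respect to the revenue on $M_j$), satisfies $\rev{M_j}(F_j)\ge \tfrac12\rev{M_j}(F^*)\ge \tfrac16\rev{M_j}(F^*)$.

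The main obstacle is really the ``always served'' claim of step one; once that is in place, the only remaining technical work is converting the lower bound on the expected number of cuts into a lower bound on expected revenue, which the concavity chord inequality handles directly (note it is used in the ``wrong'' direction for Jensen, which is fine because the bound is pointwise). In fact this argument yields the stronger constant $\tfrac12$; the value $\tfrac16$ is stated merely to keep the constants uniform with the tree analysis.
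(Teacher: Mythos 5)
Your proof is correct and follows the paper's overall strategy for this corollary: fix $j$, randomize the offset $\theta$ uniformly, observe that on a path every candidate $F_{j,\theta}$ places between $\lfloor L/2^j\rfloor$ and $\lceil L/2^j\rceil\le u_i$ cuts on $P_i$ (so no commodity in $M_j$ ever drops out), bound $\mathbb{E}\bigl[|P_i\cap F_{j,\theta}|\bigr]\ge u_i/2$, and conclude via the probabilistic method. The one place you genuinely diverge is the conversion from expected cuts to expected revenue: the paper passes to a deterministic worst-case lower bound on the number of cuts over all offsets and loses the factor $3$ inherited from the tree analysis of \cref{thm:SingleDensity}, which is where its constant $1/6$ comes from, whereas you apply the concavity chord bound $f(x)\ge (x/u_i)f(u_i)$ pointwise to the random variable $X_i\in\{0,\dots,u_i\}$ and only then take expectations. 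This is valid precisely because the bound is pointwise (not a Jensen-type argument, as you correctly flag), and it yields the sharper constant $1/2$, from which the stated $1/6$ follows a fortiori. All supporting claims check out: $d_i>2^{-j}$ gives $L<2^ju_i$ and hence $\lceil L/2^j\rceil\le u_i$, $d_i\le 2^{1-j}$ gives $\mathbb{E}[X_i]=L/2^j\ge u_i/2$, and the $j=0$ case is immediate.
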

	\begin{proof}
		The proof goes along the same lines as the proof of \cref{thm:SingleDensity}. For $j=0$ the statement is trivial. Let $j \in [\lceil \log_2 n \rceil] $ and let $i \in M_j$ be arbitrary. We choose $\eStart \in \{1,\ldots, 2^j\}$ uniformly at random. Note, that no matter the choice of $\eStart$, the solution $\cutset_{j, \eStart}$ always serves all commodities in $M_j$. We obtain
		\[\mathbb{E}[|F_{j,\eStart} \cap P_i|] \geq \frac{u_i}{2}.\]
		We instead consider $\mathbb{E}[\rev{i}(F_{j,\eStart})]$ and argue in a very similar way to the proof of \cref{thm:SingleDensity} by considering the random variable $X_i \coloneqq |F_{j, \theta} \cap P_i|$. However, on paths the values that $X_i$ can assume for different realizations of $\theta$ can differ by at most one. If $X_i \geq 1$, the worst case arises if the two possible values are $1$ or $2$, and thus we only lose a factor of 2 (rather than 3 as in \cref{eq:single_density_case1}) in the approximation ratio. Since this case was the bottleneck of our previous analysis, we obtain
		\[\mathbb{E}[\rev{i}(F_{j,\eStart})] \geq \frac{1}{4}\rev{i}(F^*).\]
		Summing over all commodities $i \in M_j$ using the probabilistic method shows the existence of some $\eStart$ such that for $F_j = F_{j, \eStart}$ we have
		\[
		\rev{M_j}(F_j) \geq \frac{1}{4}\rev{M_j}(F^*).
		\]
	\end{proof}
	
	\paragraph{\ac{FZA} with $f(0) > 0.$} 
	We now consider \ac{FZA} on trees with $f(0) > 0$. Recall, that the reason for which we resorted to randomization in the proof of \cref{thm:SingleDensity} was to prevent commodities with $u_i = 1$ from dropping out. 
	However, for $f(0) > 0$, the candidate solution $F_0 = \emptyset$ already suffices to obtain a constant fraction of the revenue that an optimal solution $F^*$ can generate from these commodities. We define
	\begin{align*}
		M_j &= \{i \in [k] \ | \ u_i \geq 2 \text{ and }d_i \in (2^{-j}, 2^{1-j}]\},  \quad j \geq 1,\\
		M_0 &= \{i \in [k] \ | \ u_i \in \{0,1\}\}. 
	\end{align*}
	To construct the candidate solutions $F_j$, we let $F_0 = \emptyset$ and for $j>0$ define the sets $F_{j,\eStart}$ as in the proof of \cref{thm:SingleDensity}
	\[
	\cutset_{j, \eStart} = \{e\in E \,\vert\, \operatorname{dist}(r,e) \equiv \theta \mod 2^{j+1}\}.
	\]
	We then choose $F_j$ as the best solution among all $F_{j,\eStart}$ (omitting the randomized deletion of edges) and obtain the following guarantee.
	\begin{corollary}
		Assume that we are given an instance of \ac{FZA} where the pricing function fulfills $f(0) > 0$. For the solution $F_j$ described above it holds that
		\[\rev{M_j}(F_j) \geq \min\bigg\{\frac{f(0)}{f(1)}, \frac{1}{12}\bigg\}\cdot \rev{M_j}(F^*).\]
	\end{corollary}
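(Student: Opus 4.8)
The plan is to treat the base class $M_0$ and the classes $M_j$ with $j\ge1$ separately, establish the respective bound for each, and note that their minimum is exactly $\min\{f(0)/f(1),\,1/12\}$; together with \cref{obs:DivideAndSelect} this then gives the corollary.

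\emph{Class $M_0$.} Every $i\in M_0$ has $u_i\in\{0,1\}$, so the candidate $F_0=\emptyset$ makes no cut on $P_i$, serves it, and collects precisely $w_i f(0)$ from it. Since $f$ is non-decreasing, $\rev{i}(F^*)\le w_i f(u_i)\le w_i f(1)$, and $f(0)>0$ forces $f(1)\ge f(0)>0$; hence $\rev{i}(F_0)=w_i f(0)\ge\frac{f(0)}{f(1)}\rev{i}(F^*)$. Summing over $i\in M_0$ gives $\rev{M_0}(F_0)\ge\frac{f(0)}{f(1)}\rev{M_0}(F^*)$.

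\emph{Classes $M_j$, $j\ge1$.} The key observation is that $M_j$ now contains only commodities with $u_i\ge2$: exactly the $u_i=1$ commodities which necessitated the randomized deletion in the proof of \cref{thm:SingleDensity} have been reassigned to $M_0$. Consequently the ``$u_i\ge2$'' part of that proof applies verbatim to the deterministic candidates $F_{j,\theta}$, $\theta\in\{0,\dots,2^{j+1}-1\}$, with the deletion step simply dropped. Concretely, \eqref{eq:singledensity} shows $|F_{j,\theta}\cap P_i|\le u_i$ for every $\theta$, so each $i\in M_j$ is served by all candidates; drawing $\theta$ uniformly at random yields $\mathbb{E}[|F_{j,\theta}\cap P_i|]=|P_i|/2^{j+1}\ge u_i/4$ from $d_i\le2^{1-j}$; and by \eqref{eq:single_density_expected_cuts} any two candidates place numbers of cuts on $P_i$ differing by at most $2$. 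I would then split into the two cases of \cref{thm:SingleDensity}. If every candidate places at least one cut on $P_i$, these cut counts lie within a factor $3$ of one another, so each candidate places at least $u_i/12$ cuts on $P_i$, and concavity of $f$ (with $f(0)\ge0$) gives $\rev{i}(F_{j,\theta})\ge\tfrac{1}{12}\rev{i}(F^*)$ deterministically. Otherwise some candidate places no cut on $P_i$, so $|F_{j,\theta}\cap P_i|\in\{0,1,2\}$; for a fixed mean, the distribution of this count minimizing $\mathbb{E}[f(|F_{j,\theta}\cap P_i|)]$ is supported on $\{0,2\}$, and combining this with $f(2)\ge f(1)\ge f(u_i)/u_i$ yields $\mathbb{E}[\rev{i}(F_{j,\theta})]\ge\tfrac18\rev{i}(F^*)$. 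Taking the weaker of the two bounds and summing over $i\in M_j$ by linearity of expectation gives $\mathbb{E}_\theta[\rev{M_j}(F_{j,\theta})]\ge\tfrac{1}{12}\rev{M_j}(F^*)$, so by the probabilistic method some $\theta$ --- hence the best candidate $F_j$ --- satisfies $\rev{M_j}(F_j)\ge\tfrac{1}{12}\rev{M_j}(F^*)$.

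\emph{Conclusion and main obstacle.} Combining both ranges of $j$ gives $\rev{M_j}(F_j)\ge\min\{f(0)/f(1),\,1/12\}\rev{M_j}(F^*)$ for every $j$, as claimed. I do not expect a genuine obstacle: every quantitative estimate is already contained in the proof of \cref{thm:SingleDensity}. The only point requiring care is the reclassification bookkeeping --- verifying that with $f(0)>0$ the empty cut set extracts enough revenue from precisely the reassigned commodities with $u_i\le1$, so that no density class needs the randomized deletion any more and the guarantee only degrades by the (unavoidable) factor $f(0)/f(1)$ on $M_0$.
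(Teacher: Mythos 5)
Your proposal is correct and follows essentially the same route as the paper: handle $M_0$ with the empty cut set and the $f(0)/f(1)$ loss on $u_i=1$ commodities, and observe that since every $i\in M_j$ for $j\ge 1$ now has $u_i\ge 2$, every candidate $F_{j,\theta}$ serves it, so the randomized deletion can be dropped and the $1/12$ bound from the $u_i\ge 2$ case of \cref{thm:SingleDensity} carries over via the probabilistic method. The paper states this more tersely ("proceed analogously\dots omit the random deletion, which saves a factor of 2"), whereas you spell out the case analysis explicitly, but the content is identical.
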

	\begin{proof}
		Consider $i\in M_0$. For $u_i = 0$, clearly we have $\rev{i}(F_0) \geq \rev{i}(F^*)$. For $u_i = 1$ it is possible that $F^*$ makes one cut on $\pathGraph_i$ whereas $F_0$ does not. In this case, we have
		\[\rev{i}(F_0) = \frac{f(0)}{f(1)}\rev{i}(F^*).\]
		By summing over all commodities in $M_0$, we obtain
		\[\rev{M_0}(F_0) \geq \frac{f(0)}{f(1)}\rev{M_0}(F^*).\]
		
		For all $i \in M_j$ for some $j > 0$ we proceed analogously to the proof of \cref{thm:SingleDensity}. However, since $u_i \geq 2$, every candidate solution $F_{j, \eStart}$ serves every $i \in M_j$. Thus, we can omit the random deletion of edges.
		%, which saves a factor of $2$ in the approximation guarantee.
	\end{proof}
	
	\subsection{A Simplified Algorithm}
	In this section we briefly sketch a simpler version of the Single Density Algorithm that still yields an approximation ratio $\mathcal{O}(\log(n))$. \cref{alg:single_density_simplified} describes the algorithm in pseudocode. While \cref{alg:single_density_simplified} is simpler to describe and even a coarse analysis suffices to show the same (asymptotic) approximation guarantee it heavily relies on randomization. As a consequence, the techniques used to de-randomize the algorithm when $G$ is a path or for $f(0) > 1$ (cf. Section~\ref{sec:det_variants_single_density}) do not apply to Algorithm~\ref{alg:single_density_simplified}. Also, this algorithm loses (with our very coarse analysis) much larger constants in the approximation guarantee.
	
	\cref{alg:single_density_simplified} still partitions commodities based on their density. However, instead of iterating over offsets $\eStart$ and choosing the best solution among all sets $F_{j, \eStart}^{\text{(rand)}}$, \cref{alg:single_density_simplified} constructs the candidate solution $F_j$ for every $j \in \{1, \dots, \lceil\log_2(n)\rceil\}$ by including each edge with probability $2^{-j-1}$ independently at random.
	
	\begin{algorithm}[htbp]
		\algorithmInit
		\Indm
		\KwIn{A tree $T=(V,E)$ and commodities $(\pathGraph_i, \weight_i, \ub_i)_{i \in [k]}$}
		\KwOut{A set $F_{\ALG} \subseteq E$ of edges to cut.}
		\Indp
		Compute densities $(d_i)_{i\in[k]}$ and sets of commodities $M_j$\;
		Let $F_0 = \emptyset$\;
		\For{$j = 1, \ldots, \lceil \log_2(n)\rceil$}
		{
			Include each edge in $F_j$ with probability $2^{-j-1}$ independently at random
		}
		\KwRet{the best solution among all $F_j$ for $j \in \{0, \ldots, \lceil \log_2(n) \rceil\}$}
		\caption{Simplified Single Density approximation for \ac{FZA}}
		\label{alg:single_density_simplified}
	\end{algorithm}
	
	\begin{restatable}{theorem}{SimplifiedSingleDensity}
		\label{thm:SimplifiedSingleDensity}
		\cref{alg:single_density_simplified} computes an $\mathcal{O}(\log n)$-approximation for \ac{FZA} in expectation.  
	\end{restatable}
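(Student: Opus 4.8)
The plan is to reuse the Divide and Select framework of \cref{obs:DivideAndSelect}. Since \cref{alg:single_density_simplified} partitions $[k]$ into the same $\ell = \lceil\log_2 n\rceil + 1$ density classes $M_0,\dots,M_{\lceil\log_2 n\rceil}$ and returns the best of the sets $F_0,\dots,F_{\lceil\log_2 n\rceil}$, it suffices to exhibit a universal constant $\alpha>0$ such that $\mathbb{E}[\rev{M_j}(F_j)] \ge \alpha\cdot\rev{M_j}(F^*)$ for every $j$, where $F^*$ is an optimal solution; \cref{obs:DivideAndSelect} then yields an $\tfrac{\alpha}{\ell} = \Omega(1/\log n)$ approximation in expectation, i.e.\ an $\mathcal{O}(\log n)$-approximation. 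For $j=0$ this is immediate: every $i \in M_0$ has $\ub_i = 0$, so $\rev{i}(F^*) \le \weight_i f(0) = \rev{i}(\emptyset) = \rev{i}(F_0)$, hence $\rev{M_0}(F_0) \ge \rev{M_0}(F^*)$.

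Now fix $j \ge 1$ and a commodity $i \in M_j$. From $d_i \in (2^{-j}, 2^{1-j}]$ we get $|\pathGraph_i| \in [\ub_i 2^{j-1}, \ub_i 2^j)$, and since \cref{alg:single_density_simplified} places each edge into $F_j$ independently with probability $p = 2^{-j-1}$, the variable $X_i \coloneqq |\pathGraph_i \cap F_j|$ is $\mathrm{Bin}(|\pathGraph_i|, p)$ with mean $\mu_i \coloneqq \mathbb{E}[X_i] = |\pathGraph_i|\,2^{-j-1} \in [\ub_i/4,\, \ub_i/2)$; in particular $\ub_i > 2\mu_i$. The first key step is a concavity estimate: since $f$ is concave with $f(0)\ge 0$, for every integer $0 \le x \le \ub_i$ the chord from $(0,f(0))$ to $(\ub_i, f(\ub_i))$ lies below $f$, giving $f(x) \ge \tfrac{x}{\ub_i}f(\ub_i)$, and therefore
\[
  \rev{i}(F_j) \;=\; \weight_i f(X_i)\,\mathbf{1}[X_i \le \ub_i] \;\ge\; \frac{\weight_i f(\ub_i)}{\ub_i}\, X_i\,\mathbf{1}[X_i \le \ub_i].
\]
Combining this with $\rev{i}(F^*) \le \weight_i f(\ub_i)$ (which holds since $f$ is non-decreasing), the whole statement reduces to proving a truncated-mean bound of the form
\[
  \mathbb{E}\bigl[X_i\,\mathbf{1}[X_i \le \ub_i]\bigr] \;\ge\; c\cdot \ub_i
\]
for a universal constant $c$: this immediately yields $\mathbb{E}[\rev{i}(F_j)] \ge c\cdot\rev{i}(F^*)$, and summing over $i \in M_j$ together with the fact that the algorithm keeps the best $F_j$ closes the argument with $\alpha = c$.

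To establish the truncated-mean bound I would split on the size of $\ub_i$. If $\ub_i$ is at least a suitable absolute constant, write $\mathbb{E}[X_i\mathbf{1}[X_i\le \ub_i]] = \mu_i - \mathbb{E}[X_i\mathbf{1}[X_i \ge \ub_i+1]]$ and control the second term via the identity $\mathbb{E}[X_i\mathbf{1}[X_i \ge \ub_i+1]] = \mu_i\,\Pr[Y \ge \ub_i]$ for $Y \sim \mathrm{Bin}(|\pathGraph_i|-1,\, p)$ (which follows from $t\binom{N}{t} = N\binom{N-1}{t-1}$); since $\mathbb{E}[Y] = \mu_i - p$ and $\ub_i > 2\mu_i > 2\,\mathbb{E}[Y]$, a multiplicative Chernoff bound gives $\Pr[Y \ge \ub_i] \le e^{-\mathbb{E}[Y]/3} \le \tfrac12$ once $\mathbb{E}[Y]$ (hence $\ub_i$) is large enough, so $\mathbb{E}[X_i\mathbf{1}[X_i\le\ub_i]] \ge \mu_i/2 \ge \ub_i/8$. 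If $\ub_i$ is below that constant, then $\ub_i = \mathcal{O}(1)$, so it suffices to lower bound $\mathbb{E}[X_i\mathbf{1}[X_i\le\ub_i]]$ by an absolute constant; here I would simply use $\mathbb{E}[X_i\mathbf{1}[X_i\le\ub_i]] \ge \Pr[X_i = 1] = |\pathGraph_i|\,p\,(1-p)^{|\pathGraph_i|-1}$, which is a constant because $|\pathGraph_i|\,p = \mu_i \in [\tfrac14,\mathcal{O}(1))$ and $p \le \tfrac14$ keep both $|\pathGraph_i| p$ and $(1-p)^{|\pathGraph_i|}$ bounded away from $0$. For the running time, the algorithm performs $\mathcal{O}(\log n)$ iterations, each sampling a cut set and evaluating its revenue in polynomial time.

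The main obstacle is precisely the second case: commodities with tiny budget, most notably $\ub_i = 1$, where $X_i\mathbf{1}[X_i \le \ub_i]$ collapses to $\mathbf{1}[X_i = 1]$ and the Chernoff tail bound is vacuous (this is exactly the situation that forced the randomized cut deletion in \cref{alg:single_density}). What rescues it here is the fully independent per-edge sampling of \cref{alg:single_density_simplified}, which makes $\Pr[X_i = 1]$ a genuine constant rather than possibly zero; the price is that the constant $c$ extracted this way is small, which is why the guarantee is only $\mathcal{O}(\log n)$ with a worse hidden constant than \cref{alg:single_density}. A minor secondary nuisance is keeping the floor/ceiling bookkeeping in the binomial tail estimates honest when passing between $\ub_i$, $2\mu_i$, and $2\,\mathbb{E}[Y]$, but this is routine.
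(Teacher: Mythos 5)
Your proposal is correct and follows essentially the same route as the paper's proof: the same Divide and Select reduction over the density classes, the same case split on whether $\ub_i$ (equivalently $\mathbb{E}[X_i]$) is large or bounded, a binomial concentration argument in the first case and a constant lower bound on $\Pr[X_i=1]$ in the second, followed by the same concavity lift $f(x)\ge \tfrac{x}{\ub_i}f(\ub_i)$. The only difference is cosmetic: in the large-budget case you bound the truncated mean $\mathbb{E}[X_i\mathbf{1}[X_i\le\ub_i]]$ directly via the identity for $\mathbb{E}[X_i\mathbf{1}[X_i\ge \ub_i+1]]$ and an upper-tail Chernoff bound, whereas the paper combines a Markov bound on the drop-out event with a lower-tail Chernoff bound and a union bound; both yield the required universal constant.
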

	
	\begin{proof}
		Let $i \in M_j$ be arbitrary and define $X$ as the random variable $|P_i \cap F_j|$. We distinguish two cases and show that in either case the candidate solution $F_j$ (in expectation) generates a constant fraction of the revenue generated by an optimal solution $F^*$. 
		
		\bigskip
		
		\noindent \emph{Case 1: $\mathbb{E}[X] \geq 2$.} In this case, we use Markov- and Chernoff-bounds to show that with constant probability, (i) commodity $i$ gets served, and (ii) we have $X \geq c \cdot u_i$ for some suitable constant $c$. Note, that $\mathbb{E}[X] \leq \tfrac{u_i}{2}$. Thus, we can apply a Markov bound to see that
		\[\text{Pr}(i \text{ drops out}) \leq \text{Pr}(X \geq 2\mathbb{E}[X]) \leq \frac{1}{2}.\]
		To show that $X$ is sufficiently large with constant probability, first note that $X$ follows a binomial distribution, in particular, $X \sim \text{Bin}(|P_i|, 2^{-j-1})$. 
		We use the Chernoff bound 
		\[
		\text{Pr}\bigl(X \leq (1-\delta)\mathbb{E}[X]\bigr) \leq e^{-\frac{\delta^2\mathbb{E}[X]}{2}}.
		\]
		Using $\mathbb{E}[X] \geq 2$ and choosing $\delta = 0.9$ this implies
		\[
		\text{Pr}\left(X \leq \frac{\mathbb{E}[X]}{10}\right) \leq 0.45.
		\]
		By union bound, we can see that with probability at least $1/20$, both (i) and (ii) hold. If this is the case, we have $\mathbb{E}[\rev{i}(F_j)] \geq \frac{1}{4\cdot 10\cdot 20}\rev{i}(F^*) = \frac{1}{800}\rev{i}(F^*)$. 
		
		\bigskip
		
		\noindent \emph{Case 2: $\mathbb{E}[X] < 2$.} Note that $d_i \leq 2^{1-j}$ for all $i \in M_j$. Thus, we obtain $|P_i| \geq 2^{j-1}\cdot u_i \geq 2^{j-1}$. On the other hand, $d_i > 2^{-j}$ which implies that $|P_i| < 2^j\cdot u_i \leq 2^{j+3}$ where the last inequality uses the fact that $\mathbb{E}[X] \geq \tfrac{u_i}{4}$ and thus $u_i \leq 8$. We calculate (using $X \sim \text{Bin}(|P_i|, 2^{-j-1})$)
		\[\text{Pr}(X = 1) = \frac{|P_i|}{2^{j+1}} \left(1-\frac{1}{2^{j+1}}\right)^{|P_i|-1} \hspace*{-1.5mm} \geq \frac{1}{4}\left(1-\frac{1}{2^{j+1}}\right)^{|P_i|-1} \hspace*{-1.5mm} \geq  \frac{1}{4}\left(1-\frac{1}{2^{j+1}}\right)^{2^{j+3}}.\]
		Here, the sequence $\left(1-\frac{1}{2^{j+1}}\right)^{2^{j+3}}$ is increasing and converges to $e^{-4}$. Thus, the worst case occurs for $j=1$ in which case above calculation yields $\text{Pr}(X = 1) \geq \tfrac{1}{400}$. Combining this with $u_i \leq 8$ we conclude that $\mathbb{E}[\rev{i}(F_j)] \geq \tfrac{1}{3200}\rev{i}(F^*)$.
	\end{proof}
	
	%%%%%%%%%%%%%%%%%%%%%%%%%%%%%%%%%%%%%%%%%%%%%%%%%%%%%
	%%%%%%%%%%%%%%%%% Sublog %%%%%%%%%%%%%%%%%%%%%%%%%%%%
	%%%%%%%%%%%%%%%%%%%%%%%%%%%%%%%%%%%%%%%%%%%%%%%%%%%%%
	
	\section{Missing Proofs in \cref{sec:SublogApprox}}
	\label{app:Sublog_AppendixProofs}
	
	%%%%%%%%%%%%%%%%%%%%%%%%%%%%%% Non-Skeleton %%%%%%%%%%%%%%%%%%%%%%%%%%%%%%%%%%%%%
	\subsection{Non-skeleton algorithm---Proof of \cref{l.non-skeleton}}
	\label{app:non-skel}
	Recall the algorithm for computing a cut set on the non-skeleton edges. We contract the entire skeleton $\mathcal{S}$ into a single vertex $v_{\mathcal{S}}$. The contracted tree decomposes into subtrees $T_1, \ldots, T_{d'}$ each rooted at $v_{\mathcal{S}}$. Let $T_j, j \in [d']$, be one of these subtrees. With probability $1/2$ we let $F_j = \emptyset$ and say that $T_j$ is \emph{inactive}. Otherwise, we use \cref{thm:SingleSource} to solve a rooted instance $\mathcal{I}(T_j)$ rooted at $v_{\mathcal{S}}$. This instance includes those commodities for which one endpoint is an inner vertex of $T_j$ and the other endpoint is either in $\mathcal{S}$ or in some inactive subtree. We denote the solution computed this way by $F_j$, and let $F^{E\setminus\mathcal{S}} \coloneqq \cup_{j \in [d']} F_j$.
	\nonskelApproximation*
	\begin{proof}
		Let $i \in [k]$ be some commodity that has exactly one of its endpoints in subtree $T_j$. We call $i$ \emph{nice} for $T_j$ if $T_j$ is active and either of the following two conditions holds:
		\begin{itemize}
			\item The other endpoint belongs to $\mathcal{S}$, or
			\item the other endpoint belongs to an inactive subtree, and at least half of the edges in $(F^*\setminus{\mathcal{S}})\cap P_i$ lie in $T_j$.
		\end{itemize}
		We denote the set of nice commodities for $T_j$ by $M_{\text{nice}, j}$. Note that every $i \in M_{\text{nice}, j}$ is contained in the set of commodities of the auxiliary rooted instance $\mathcal{I}(T_j)$, which we denote by $M_{\mathcal{I}(T_j)}$. Let $i \in [k]$ be some commodity for which at least one of its endpoints is outside of $\mathcal{S}$. With probability $1/2$, exactly one of the (at most two) trees $T_j$ containing the endpoints of $P_i$ is active, and again with probability $1/2$ the active subtree is the one for which $F^*$ makes more cuts on $T_j \cap P_i$. Thus, every commodity is nice (for some tree $T_j$) with probability at least $1/4$.
		
		As the DP from \cref{thm:SingleSource} solves rooted instances to optimality and we have $M_{\text{nice},j} \subseteq M_{\mathcal{I}(T_j)}$. Hence, the revenue obtained from $\cutset_j$ satisfies $\rev{M_{\mathcal{I}(T_j)}}(\cutset_j) \geq \rev{M_{\text{nice}, j}}(F^* \cap  T_j)$. As for all $i \in M_{\text{nice}, j}$ at least half of the edges in $(F^*\setminus{\mathcal{S}})\cap P_i$ belong to $T_j$, and the pricing function $f$ is subadditive, at least half of the revenue obtained from commodities in $M_{\text{nice}, j}$ on non-skeleton edges is obtained on $T_j$. This leads to the following inequality.
		\begin{equation}
			\label{eq:Non-SkeletonCaseAuxRevenue}
			\rev{M_{\mathcal{I}(T_j)}}(\cutset_j) \geq \rev{M_{\text{nice}, j}}(F^* \cap T_j) \geq \frac{1}{2}\rev{M_{\text{nice},j}}(F^*\setminus{\mathcal{S}}). \hfill
		\end{equation}
		
		Let $F^{E\setminus \mathcal{S}} = \bigcup_{j \in [d']} F_j$ be the union of all computed partial solutions $F_j$ over all active subtrees $T_j$. If $i \in M_{\mathcal{I}(T_j)}$, the set $F^{E\setminus \mathcal{S}}$ contains no edge in $P_i \setminus T_j$, and thus $\rev{i}(F^{E\setminus \mathcal{S}}) = \rev{i}(F_j)$. Note that the sets $M_{\mathcal{I}(T_j)}$ for different subtrees $j$ are pairwise disjoint. Using this and Equation~\eqref{eq:Non-SkeletonCaseAuxRevenue}, we conclude
		\[
		\rev{}(F^{E\setminus\mathcal{S}}) 
		\geq \rev{\cup_{j \in [d']} M_{\mathcal{I}(T_j)}}(F^{E\setminus\mathcal{S}}) 
		= \sum_{j \in [d']} \rev{M_{\mathcal{I}(T_j)}}(F_j) 
		\overset{(\ref{eq:Non-SkeletonCaseAuxRevenue})}{\geq} \frac{1}{2}\sum_{j\in[d']} \rev{M_{\text{nice}, j}}(F^* \setminus \mathcal{S}).
		\]
		As every commodity is nice (for some subtree $T_j$) with probability at least $1/4$, and every nice commodity intersects with exactly one active subtree $T_j$, we obtain
		\[\mathbb{E}\Big[\frac{1}{2}\sum_{j\in[d']} \rev{M_{\text{nice}, j}}(F^* \setminus \mathcal{S})\Big] \geq \frac{1}{8}\rev{}(F^* \setminus \mathcal{S}).\]
		This concludes the proof.
	\end{proof}
	
	%%%%%%%%%%%%%%%%%%%%%%%%%% Generalized Rooted %%%%%%%%%%%%%%%%%%%%%%%%%%%%%%
	
	\subsection{Generalized Rooted FZA on a Path}
	\label{app:GeneralizedRooted}
	
	In \cref{subsec:skeleton} we introduced the problem variant \emph{generalized rooted \ac{FZA} on a path} (GR-FZA). Solving instances of this problem is a fundamental component of the sublogarithmic approximation algorithm. Recall that compared to a regular rooted instance of \ac{FZA} an instance of GR-FZA features commodity-specific subadditive revenue functions $f_i\colon \mathbb{N} \to \mathbb{R}_{\geq 0}$, $i\in [k]$ instead of one global revenue function $f$. 
	The following lemma shows that we can compute a revenue-maximizing solution $\cutset$ to a generalized rooted instance on a path subject to the condition that $|\cutset| = y$ for a fixed solution size $y\in \{1,\ldots,|E|\}$.
	
	\begin{proposition}
		\label{prop:SingleSourceForSublog}
		Let $\mathcal{I}$ be an instance of \textsc{GR-FZA} with commodities $i\in [k]$, each paired with a commodity-specific non-decreasing, subadditive pricing  function $f_i\colon \mathbb{Z}_{\geq 0} \to \mathbb{R}_{\geq 0}$. Further, let $y \in \{0, \dots, |E|\}$,
		An optimal solution $\cutset^*$ to $\mathcal{I}$ among all solutions $\cutset$ with $|\cutset| = y$ can be computed in time $\mathcal{O}(n^2k)$.
	\end{proposition}
	\begin{proof}
		Let $T=(v_1, \ldots, v_t)$  denote the underlying path of $\mathcal{I}$ with root $r = v_1$.
		We denote the endpoint different from $r$ of every commodity $i$ as $t_i$. For each $x\in \{0,1, \ldots, y\}$, and every vertex $v_j$ let $R_{v_j}(x,y)$ denote the maximum revenue which can be obtained solely from commodities $i\in [k]$ satisfying $v_j\in \pathGraph_i$, and under the two restrictions that (i) in total exactly $y$ edges are cut, and (ii) there are exactly $x$ cuts on the subpath between $r$ and $v_j$.
		We initialize $R_{v_j}(x,y)=-\infty$ whenever $x > j-1$ (as then it is impossible to cut $x$ edges on the subpath from $r$ to $v_j$), and whenever $x>y$.
		Further, we set
		\[
		R_{v_t}(y,y) = \sum_{i\in [k]}\bigl\{\weight_i \cdot f(y) \,\big\vert\, \sink_i = v_t,\ y\le \ub_i \bigr\},
		\]
		and $R_{v_t}(x,y)=-\infty$ for all $x<y.$
		For $j=t$ down to $1$, and all  $x\le y$ with $x\le j-1$, we can compute $R_{v_j}(x,y)$ by the following recursive formula
		\[
		R_{v_j}(x,y) = \sum_{i\in [k]}\bigl\{\weight_i \cdot f(x) \,\big\vert\, \sink_i = v_j,\ x\le \ub_i \bigr\}
		+ \max\bigl\{R_{v_{j+1}}(x,y),\ R_{v_{j+1}}(x+1,y) \bigr\}.
		\]
		Additionally, we maintain a subset $\cutset_{v_j,x,y}$ of edges of the subpath between $v_j$ and $v_t$, populated along the procedure above. 
		That is, we initialize $\cutset_{v_t,x,y}=\emptyset$ for $x\le y$, and for any $j<t$, we set
		$\cutset_{v_j,x,y} = \cutset_{v_{j+1},x,y}$ if the maximum in the formula above is attained in the first term, and
		$\cutset_{v_j,x,y} = \cutset_{v_{j+1},x+1,y} \cup \{\{v_j,v_{j+1}\}\}$ otherwise.
		It follows from construction that $\cutset^*(y)\coloneqq \cutset_{v_1} (0,y)$ corresponds to a solution of size exactly $y$ with a revenue of
		$R_{v_1}(0,y)$ such that $\cutset^*(y)$ yields maximum revenue among all solutions of size $y$.
	\end{proof}
	
	%%%%%%%%%%%%%%%%%%%%%%%%%%%%%%%%%%%%%%%%%%%%%%%%%%%%%%%%%%%%
	%%%%%%%%%%%%%%%%%%% Skeleton Algorithm %%%%%%%%%%%%%%%%%%%%%
	%%%%%%%%%%%%%%%%%%%%%%%%%%%%%%%%%%%%%%%%%%%%%%%%%%%%%%%%%%%%
	\subsection{Algorithm for Skeleton Edges---Proof of Lemma~\ref{lem:sublog_submodular_2}}
	\label{app:skeleton}
	
	\sublogSubadditivity*
	\begin{proof}
		Consider the subset $M \subseteq [k]$ of commodities served by $F$, and let $F'$ be any subset of $F$ with $|F'| = \lceil \tfrac{|F|}{2} \rceil$. Clearly, all commodities in $M$ get served by both $F'$ and $F \setminus F'$, and by the subadditivity of $f$ for all $i \in M$ it holds that
		\[
		\rev{i}(F') + \rev{i}(F \setminus F') \geq \rev{i}(F).
		\]
		We sum over all commodities in $M$ and obtain
		\[\rev{M}(F') + \rev{M}(F \setminus F') \geq \rev{M}(F) = \rev{}(F).\]
		Thus, the statement follows when choosing $\bar{F}$ as the better solution among $F'$ and $F \setminus F'$. (If $|\bar{F}| < 2^{\lfloor \log_2 m\rfloor}$ we add arbitrary edges from $F$ until $\bar{F}$ has the desired size. This can only increase $\rev{M}(\bar{F})$).
	\end{proof}
\end{document}